\begin{document}

\bibliographystyle{apsrev4-2}

\title{Algorithmic Cluster Expansions for Quantum Problems}

\author{Ryan L. Mann}
\email{mail@ryanmann.org}
\homepage{http://www.ryanmann.org}
\affiliation{Centre for Quantum Computation and Communication Technology, Centre for Quantum Software and Information, School of Computer Science, Faculty of Engineering \& Information Technology, University of Technology Sydney, NSW 2007, Australia}
\affiliation{School of Mathematics, University of Bristol, Bristol, BS8 1UG, United Kingdom}

\author{Romy M. Minko}
\affiliation{School of Mathematics, University of Bristol, Bristol, BS8 1UG, United Kingdom}

\begin{abstract}
    We establish a general framework for developing approximation algorithms for a class of counting problems. Our framework is based on the cluster expansion of abstract polymer models formalism of Koteck\'y and Preiss. We apply our framework to obtain efficient algorithms for (1) approximating probability amplitudes of a class of quantum circuits close to the identity, (2) approximating expectation values of a class of quantum circuits with operators close to the identity, (3) approximating partition functions of a class of quantum spin systems at high temperature, and (4) approximating thermal expectation values of a class of quantum spin systems at high temperature with positive-semidefinite operators. Further, we obtain hardness of approximation results for approximating probability amplitudes of quantum circuits and partition functions of quantum spin systems. This establishes a computational complexity transition for these problems and shows that our algorithmic conditions are optimal under complexity-theoretic assumptions. Finally, we show that our algorithmic condition is almost optimal for expectation values and optimal for thermal expectation values in the sense of zero freeness.
\end{abstract}

\maketitle

{
\hypersetup{linkcolor=black}
\tableofcontents
}

\section{Introduction}
\label{section:Introduction}

The classification of the computational complexity of quantum problems is important for understanding the capabilities and limitations of quantum computing. These problems include the computation of probability amplitudes, expectation values, partition functions, and thermal expectation values. In this paper we consider the classification of such problems in the sense of approximate counting. We establish a general framework for developing approximation algorithms and hardness of approximation results for a class of counting problems. By applying this framework, we are able to obtain efficient approximation algorithms and hardness of approximation results for several quantum problems under certain algorithmic conditions.

Our results concerning the approximation of quantum problems may be summarised as follows. We obtain efficient algorithms for (1) approximating probability amplitudes of a class of quantum circuits close to the identity, (2) approximating expectation values of a class of quantum circuits with operators close to the identity, (3) approximating partition functions of a class of quantum spin systems at high temperature, and (4) approximating thermal expectation values of a class of quantum spin systems at high temperature with positive-semidefinite operators. Our approach offers a simpler and sharper analysis compared to existing algorithms. Our algorithmic results are summarised in Table~\ref{table:approximationquantumproblemssummary}.

\begin{table}[ht]
    \centering
    \setcellgapes{3pt}
    \makegapedcells
    \begin{tabularx}{\textwidth}{cccc}
        \hline
        \makecell{\textbf{Quantum} \\ \textbf{Problem}} & \makecell{\textbf{Conditioned} \\ \textbf{Object}} & \makecell{\textbf{Algorithmic Condition} \\ \textbf{(This Work)}} & \makecell{\textbf{Algorithmic Condition} \\ \textbf{(Previous)}} \\
        \hline
        \makecell{Probability \\ Amplitudes} & \makecell{Unitary Operators \\ $\{U_\varepsilon\}_{\varepsilon \in E}$} & \makecell{$\norm{U_\varepsilon-\mathbb{I}}\leq\frac{1}{e^3\Delta\binom{r}{2}}$} & \makecell{-} \\
        \makecell{Expectation \\ Values} & \makecell{Self-Adjoint Operators \\ $\{O_v\}_{v \in V}$} & \makecell{$\norm{O_v-\mathbb{I}}\leq\frac{1}{e^3k^{3d}}$} & \makecell{$\norm{O_v-\mathbb{I}}<\frac{1}{60k^{5d}}$~\cite{bravyi2021classical}} \\
        \makecell{Partition \\ Functions} & \makecell{Inverse Temperature \\ $\beta$} & \makecell{$\abs{\beta}\leq\frac{1}{e^4\Delta\binom{r}{2}}$} & \makecell{$\abs{\beta}\leq\frac{1}{16e^4\Delta\binom{r}{2}}$~\cite{kuwahara2020clustering}} \\
        \makecell{Thermal Expectation \\ Values} & \makecell{Inverse Temperature \\ $\beta$} & \makecell{$\abs{\beta}\leq\frac{1}{e^4\Delta\binom{r}{2}}$} & \makecell{$\abs{\beta}\leq\frac{1}{2e^2(\Delta-1)r(\Delta r-r+1)}$~\cite{wild2023classical}} \\
        \hline
    \end{tabularx}
    \caption{Summary of algorithmic results for quantum problems. In the context of probability amplitudes, partition functions, and thermal expectation values, $\Delta$ denotes the maximum degree and $r$ denotes the rank of the interaction multihypergraph. In the context of expectation values, $d$ denotes the depth of the quantum circuit, and $k$ denotes the maximum number of qudits each gate can act on.}
    \label{table:approximationquantumproblemssummary}
\end{table}

Our algorithmic framework is based on the cluster expansion of abstract polymer models formalism of Koteck\'y and Preiss~\cite{kotecky1986cluster}. We consider polymers that are connected subgraphs of bounded-degree bounded-rank multihypergraphs with compatibility relations defined by vertex disjointness. The key insight underlying our framework is that when the polymer weights decay sufficiently fast, computing the truncated cluster expansion to sufficiently high order allows us to obtain a multiplicative approximation to the abstract polymer model partition function. Our framework can be viewed as a straightforward generalisation of the framework of Helmuth, Perkins, and Regts~\cite{helmuth2020algorithmic}, and Borgs et al.~\cite{borgs2020efficient} from the case of bounded-degree graphs to bounded-degree bounded-rank multihypergraphs. This approach is closely related to that of Patel and Regts~\cite{patel2017deterministic} using Barvinok's method~\cite{barvinok2016combinatorics}; see Ref.~\cite{patel2022approximate} for a survey of this method.

Our hardness of approximation framework is based on reductions from the Ising model partition function. We apply this framework to obtain hardness of approximation results for approximating probability amplitudes of quantum circuits and partition functions of quantum spin systems. This establishes a computational complexity transition for these problems and shows that our algorithmic conditions are optimal under complexity-theoretic assumptions. Further, we show that our algorithmic condition is almost optimal for expectation values and optimal for thermal expectation values in the sense of zero freeness.

This paper is structured as follows. In Section~\ref{section:Preliminaries}, we introduce the necessary preliminaries. Then, in Section~\ref{section:GeneralFramework}, we establish our algorithmic and hardness of approximation framework. In Section~\ref{section:Applications}, we apply our framework to several quantum problems. Finally, we conclude in Section~\ref{section:ConclusionAndOutlook} with some remarks and open problems.

\section{Preliminaries}
\label{section:Preliminaries}

\subsection{Graph Theory}
\label{section:GraphTheory}

A \emph{multigraph} is a graph in which multiple edges between vertices are permitted. A \emph{hypergraph} is a graph in which edges between any number of vertices are permitted. A \emph{multihypergraph} is a graph in which multiple edges between vertices and edges between any number of vertices are permitted. We shall assume that the edges in a multihypergraph are uniquely labelled, that is, all edges are considered distinct. Let $G=(V, E)$ be a multihypergraph. We denote the \emph{order} of $G$ by $\abs{G}=\abs{V(G)}$ and the \emph{size} of $G$ by $\norm{G}=\abs{E(G)}$. The \emph{maximum degree} $\Delta(G)$ of $G$ is the maximum degree over all vertices of $G$ and the \emph{rank} $r(G)$ of $G$ is the maximum cardinality of an edge of $G$. The \emph{distance} $d(u, v)$ between two vertices $u$ and $v$ in $G$ is defined as the size of the shortest path connecting them. A multihypergraph is called $\Delta$-\emph{regular} if all the vertices have degree $\Delta$ and called $r$-\emph{uniform} if all the edges have cardinality $r$.

\subsection{Abstract Polymer Models}
\label{section:AbstractPolymerModels}

An \emph{abstract polymer model} is a triple $(\mathcal{C}, w, \sim)$, where $\mathcal{C}$ is a countable set of objects called \emph{polymers}, \mbox{$w:\mathcal{C}\to\mathbb{C}$} is a function that assigns to each polymer $\gamma\in\mathcal{C}$ a complex number $w_\gamma$ called the \emph{weight} of the polymer, and $\sim$ is a \emph{symmetric compatibility relation} such that each polymer is incompatible with itself. A set of polymers is called \emph{admissible} if the polymers in the set are all pairwise compatible. Note that the empty set is admissible. Let $\mathcal{G}$ denote the collection of all admissible sets of polymers from $\mathcal{C}$. The abstract polymer partition function is defined by
\begin{equation}
    Z(\mathcal{C},w) \coloneqq \sum_{\Gamma\in\mathcal{G}}\prod_{\gamma\in\Gamma}w_\gamma. \notag
\end{equation}
The archetypal example of an abstract polymer model is the independence polynomial. Let $G=(V, E)$ be a graph and let $\mathcal{I}$ denote the collection of all independent sets of $G$. Recall that an independent set of $G$ is a subset of vertices with no edges between them. The independence polynomial $I(G;x)$ of $G$ is a polynomial in $x$, defined by
\begin{equation}
    I(G;x) \coloneqq \sum_{I\in\mathcal{I}}x^\abs{I}. \notag
\end{equation}
This corresponds to an abstract polymer model $(\mathcal{C}, w, \sim)$ as follows. The polymers $\mathcal{C}$ are the vertices $V$ of $G$, the weight function $w$ is given by $w_\gamma=x$ for all $\gamma\in\mathcal{C}$, and two polymers are compatible if and only if there is no edge between them in $G$. An admissible set of polymers is then an independent set of $G$, and it follows that the partition function of this model $Z(\mathcal{C},w)$ is precisely the independence polynomial $I(G;x)$ of $G$. In the context of statistical physics, the independence polynomial $I(G;x)$ is the partition function of the hard-core model with activity $x$. In this model, each independent set corresponds to a possible configuration of particles that are subject to hard-core repulsion. The abstract polymer model can be viewed as a generalisation of the independence polynomial. In particular, it attempts to capture the independence properties of a problem.

A useful tool for representing a problem as an abstract model is the principle of inclusion-exclusion. The principle is formalised by the following well-known lemma (see for example \cite[Theorem 12.1]{graham1995handbook}); we provide a proof for completeness.
\begin{lemma}[Principle of inclusion-exclusion]
    \label{lemma:InclusionExclusion}
    Let $f$ be a function defined on the subsets of finite set $E$, then
    \begin{equation}
        f(E) = \sum_{S \subseteq E}(-1)^\abs{S}\sum_{T \subseteq S}(-1)^\abs{T}f(T). \notag
    \end{equation}
\end{lemma}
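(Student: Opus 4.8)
The plan is to prove the identity by exchanging the order of the double summation on the right-hand side and identifying, for each subset $T$, the total coefficient with which $f(T)$ appears. First I would collect all contributions of a given $f(T)$: in the nested sum $T$ ranges over subsets of $S$ while $S$ ranges over subsets of $E$, so the pair $(S,T)$ is constrained precisely by $T \subseteq S \subseteq E$, and hence
\begin{equation}
    \sum_{S \subseteq E}(-1)^{\abs{S}}\sum_{T \subseteq S}(-1)^{\abs{T}}f(T) = \sum_{T \subseteq E}(-1)^{\abs{T}}f(T)\sum_{S\,:\,T \subseteq S \subseteq E}(-1)^{\abs{S}}. \notag
\end{equation}

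Next I would evaluate the inner sum over $S$. Every $S$ with $T \subseteq S \subseteq E$ can be written uniquely as $S = T \cup R$ with $R \subseteq E \setminus T$, and then $\abs{S} = \abs{T} + \abs{R}$, so the inner sum factors as $(-1)^{\abs{T}}\sum_{R \subseteq E \setminus T}(-1)^{\abs{R}}$. The remaining alternating sum is $\sum_{j=0}^{\abs{E \setminus T}}\binom{\abs{E \setminus T}}{j}(-1)^j = (1-1)^{\abs{E \setminus T}}$ by the binomial theorem, which equals $1$ when $T = E$ and $0$ otherwise.

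Substituting back, every term with $T \neq E$ vanishes, and the surviving $T = E$ term contributes $(-1)^{\abs{E}}f(E)\cdot(-1)^{\abs{E}} = f(E)$, which is the claimed identity. I do not anticipate a genuine obstacle here: the only point that needs care is justifying the change of summation order and using the parametrisation $S = T \cup R$ so that no subset $S$ is double-counted; once that is in place, the argument reduces to the binomial theorem.
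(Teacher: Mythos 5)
Your proposal is correct and follows essentially the same route as the paper's proof: interchange the order of summation, reparametrise the inner sum over $S$ with $T \subseteq S \subseteq E$ by the complement $R \subseteq E\setminus T$, and apply the binomial theorem so that only the $T=E$ term survives. No gaps here.
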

\begin{proof}
    By interchanging the summations, we have
    \begin{align}
        \sum_{S \subseteq E}(-1)^\abs{S}\sum_{T \subseteq S}(-1)^\abs{T}f(T) &= \sum_{T \subseteq E}(-1)^\abs{T}f(T)\sum_{T \subseteq S \subseteq E}(-1)^\abs{S} \notag \\
        &= \sum_{T \subseteq E}f(T)\sum_{S \subseteq E{\setminus}T}(-1)^\abs{S} \notag \\
        &= \sum_{T \subseteq E}f(T)\sum_{m=0}^\abs{E{\setminus}T}\binom{\abs{E{\setminus}T}}{m}(-1)^m. \notag
    \end{align}
    Now, by applying the binomial theorem, we obtain
    \begin{equation}
        \sum_{S \subseteq E}(-1)^\abs{S}\sum_{T \subseteq S}(-1)^\abs{T}f(T) = f(E), \notag
    \end{equation}
    completing the proof.
\end{proof}

As we shall see in Section~\ref{section:Applications}, several quantum problems admit an abstract polymer model representation.

\subsection{Abstract Cluster Expansion}
\label{section:AbstractClusterExpansion}

We now define the \emph{abstract cluster expansion}~\cite{kotecky1986cluster, friedli2017statistical}. Let $\Gamma$ be a non-empty ordered tuple of polymers. The \emph{incompatibility graph} $H_\Gamma$ of $\Gamma$ is the graph with vertex set $\Gamma$ and edges between any two polymers if and only if they are incompatible. $\Gamma$ is called a \emph{cluster} if its incompatibility graph $H_\Gamma$ is connected. A polymer and cluster are compatible if the polymer is compatible with every polymer in the cluster. Let $\mathcal{G}_C$ denote the set of all clusters of polymers from $\mathcal{C}$. The abstract cluster expansion is a formal power series for $\log{Z(\mathcal{C},w)}$ in the variables $w_\gamma$, defined by
\begin{equation}
    \log(Z(\mathcal{C},w)) \coloneqq \sum_{\Gamma\in\mathcal{G}_C}\varphi(H_\Gamma)\prod_{\gamma\in\Gamma}w_\gamma, \notag
\end{equation}
where $\varphi(H)$ denotes the \emph{Ursell function} of a graph $H$:
\begin{equation}
    \varphi(H) \coloneqq \frac{1}{\abs{H}!}\sum_{\substack{S \subseteq E(H) \\ \text{spanning} \\ \text{connected}}}(-1)^{\abs{S}}. \notag
\end{equation}
The sum is over all spanning connected edge sets of $H$. For a derivation of the cluster expansion, we refer the reader to Ref.~\cite{friedli2017statistical}.

An important theorem due to Koteck\'y and Preiss~\cite{kotecky1986cluster} provides a sufficient criterion for the absolute convergence of the cluster expansion. An improved convergence criterion is given in Ref.~\cite{fernandez2007cluster}.
\begin{theorem}[Koteck\'y and Preiss~\cite{kotecky1986cluster}]
    \label{theorem:KoteckyPreissConvergenceCriterion}
    Let $(\mathcal{C}, w, \sim)$ be an abstract polymer model and let $a:\mathcal{C}\to\mathbb{R}^+$ and $d:\mathcal{C}\to\mathbb{R}^+$ be functions such that
    \begin{equation}
    \sum_{\gamma^*\nsim\gamma}\abs{w_{\gamma^*}}e^{a(\gamma^*)+d(\gamma^*)} \leq a(\gamma), \notag \notag
    \end{equation}
    for all polymers $\gamma\in\mathcal{C}$. Then the cluster expansion for $\log(Z(\mathcal{C},w))$ converges absolutely, $Z(\mathcal{C},w)\neq0$, and
    \begin{equation}
        \sum_{\substack{\Gamma\in\mathcal{G}_C \\ \Gamma\nsim\gamma }}\abs{\varphi(H_\Gamma)\prod_{\gamma^*\in\Gamma}w_{\gamma^*}}e^{\sum_{\gamma^*\in\Gamma}d(\gamma^*)} \leq a(\gamma), \notag
    \end{equation}
    for all polymers $\gamma\in\mathcal{C}$.
\end{theorem}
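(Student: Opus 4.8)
My plan is to reduce everything to the displayed cluster-sum bound
\[
  \sum_{\substack{\Gamma\in\mathcal{G}_C \\ \Gamma\nsim\gamma }}\abs{\varphi(H_\Gamma)\prod_{\gamma^*\in\Gamma}w_{\gamma^*}}e^{\sum_{\gamma^*\in\Gamma}d(\gamma^*)} \le a(\gamma),
\]
and then read off the other two claims. Since every polymer is incompatible with itself, this bound in particular controls the total weight of all clusters containing a fixed polymer, which is precisely the content of absolute convergence of the cluster expansion. For $Z(\mathcal{C},w)\ne0$ I would first take $\mathcal{C}$ finite, so $Z$ is a polynomial in the weights with $Z|_{w=0}=1$, and run the standard continuity argument along the scaling $t\mapsto tw$, $t\in[0,1]$: the Koteck\'y--Preiss hypothesis is preserved under this scaling with the same $a,d$, so the cluster sum converges uniformly on the segment; since near $t=0$ one has $Z(tw)\ne0$ and $\log Z(tw)$ equal to the cluster sum, a vanishing of $Z(tw)$ at some $t_0\le1$ would force the left side to blow up while the right side stays bounded, a contradiction. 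Hence $Z(w)\ne0$ and $Z(w)=\exp\!\bigl(\sum_{\Gamma\in\mathcal{G}_C}\varphi(H_\Gamma)\prod_{\gamma\in\Gamma}w_\gamma\bigr)$; the countable case follows by exhausting $\mathcal{C}$ with finite subsystems and using the uniform bound.

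The cluster-sum bound itself I would prove by induction on a truncation parameter. For $N\ge1$ set
\[
  \Phi_N(\gamma) \coloneqq \sum_{\substack{\Gamma\in\mathcal{G}_C,\ \abs{\Gamma}\le N \\ \Gamma\nsim\gamma}}\abs{\varphi(H_\Gamma)}\prod_{\gamma^*\in\Gamma}\abs{w_{\gamma^*}}e^{d(\gamma^*)},
\]
so it suffices to prove $\Phi_N(\gamma)\le a(\gamma)$ for all $N$ and $\gamma$ and let $N\to\infty$. The base case $N=1$ is immediate: a size-one cluster is a single polymer $\gamma^*\nsim\gamma$ with Ursell value $\varphi=1$, whence $\Phi_1(\gamma)=\sum_{\gamma^*\nsim\gamma}\abs{w_{\gamma^*}}e^{d(\gamma^*)}\le\sum_{\gamma^*\nsim\gamma}\abs{w_{\gamma^*}}e^{a(\gamma^*)+d(\gamma^*)}\le a(\gamma)$ by the hypothesis, using $a\ge0$. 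For the inductive step I would bound the Ursell function by the Penrose tree-graph inequality $\abs{\varphi(H)}\le\frac{1}{\abs{H}!}\#\{\text{spanning trees of }H\}$ (a standard consequence of Penrose's tree-partition scheme; see Ref.~\cite{friedli2017statistical}), reorganise the resulting double sum over (cluster, spanning tree) as a sum over labelled tree shapes rooted at a polymer incompatible with $\gamma$, and observe that the weight then factorises over the subtrees hanging off the root; summing over the number of these subtrees and dividing by the corresponding label-symmetry factors converts this into an exponential generating function and yields the recursion
\[
  \Phi_N(\gamma)\ \le\ \sum_{\gamma^*\nsim\gamma}\abs{w_{\gamma^*}}e^{d(\gamma^*)}\exp\!\bigl(\Phi_{N-1}(\gamma^*)\bigr),
\]
where relaxing ``total size $\le N$'' to ``each subtree has size $\le N-1$'' is exactly what makes this an inequality. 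The inductive hypothesis $\Phi_{N-1}(\gamma^*)\le a(\gamma^*)$ then bounds the right side by $\sum_{\gamma^*\nsim\gamma}\abs{w_{\gamma^*}}e^{a(\gamma^*)+d(\gamma^*)}\le a(\gamma)$, closing the induction.

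I expect the main obstacle to be the combinatorial step producing the exponential: one must verify that, after the tree-graph bound, the sum over clusters-with-distinguished-spanning-tree incompatible with $\gamma$ genuinely decouples into independent choices of rooted subtrees with the correct symmetry factors, so that $\sum_{k\ge0}\frac{1}{k!}(\,\cdot\,)^k=\exp(\,\cdot\,)$ applies and the size truncation passes cleanly from $N$ to $N-1$. Reconciling the bookkeeping of ordered tuples of polymers, labelled tree vertices, and the $1/\abs{H}!$ in the Ursell function is the delicate point; the remaining estimates are routine. An alternative worth keeping in reserve is Dobrushin's inductive scheme, which establishes $Z\ne0$ and the cluster bound simultaneously by induction on $\abs{\mathcal{C}}$ and avoids the tree-graph inequality, though it is no simpler here.
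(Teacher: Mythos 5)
The paper does not prove this statement at all: Theorem~\ref{theorem:KoteckyPreissConvergenceCriterion} is imported verbatim from Koteck\'y and Preiss~\cite{kotecky1986cluster} and is used as a black box, so there is no in-paper argument to compare against. Your sketch is essentially the standard proof of the Koteck\'y--Preiss criterion (the inductive/tree-graph route, as in Ref.~\cite{friedli2017statistical} or Ueltschi's treatment), and it is sound in outline: the base case, the factorisation of rooted trees into subtrees each incompatible with the root, the relaxation from ``total size $\le N$'' to ``each subtree of size $\le N-1$'', and the continuity-in-$t$ argument for $Z\neq 0$ on finite polymer sets followed by exhaustion are all the right ingredients. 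One bookkeeping point you should make explicit: after invoking the Penrose bound $\abs{\varphi(H)}\le t(H)/\abs{H}!$ (with $t(H)$ the number of spanning trees), the recursion you derive naturally closes not for $\Phi_N$ itself but for the spanning-tree majorant $\tilde\Phi_N(\gamma)\coloneqq\sum_{(\Gamma,T)}\frac{1}{\abs{\Gamma}!}\prod_{\gamma^*\in\Gamma}\abs{w_{\gamma^*}}e^{d(\gamma^*)}$, where the sum is over cluster/spanning-tree pairs with $\abs{\Gamma}\le N$ and $\Gamma\nsim\gamma$; one proves $\tilde\Phi_N(\gamma)\le\sum_{\gamma^*\nsim\gamma}\abs{w_{\gamma^*}}e^{d(\gamma^*)}\exp\bigl(\tilde\Phi_{N-1}(\gamma^*)\bigr)\le a(\gamma)$ by induction and then uses $\Phi_N\le\tilde\Phi_N$, since the exponent produced by summing over unordered collections of subtrees is the tree-counting quantity, not $\Phi_{N-1}$. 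This is exactly the ``delicate point'' you flag, and with it resolved your argument is a complete and correct proof; it buys self-containedness, whereas the paper's choice to cite the theorem keeps the framework short and lets it also quote the sharper criteria of Ref.~\cite{fernandez2007cluster} without extra work.
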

We shall use Theorem~\ref{theorem:KoteckyPreissConvergenceCriterion} to establish the absolute convergence of the cluster expansion. 

In the case of the independence polynomial, the radius of convergence is given by Shearer's bound for the Lov\'asz Local Lemma~\cite{shearer1985problem}; this was elucidated by Scott and Sokal~\cite{scott2005repulsive}. For results on the hypergraph independence polynomial see Refs.~\cite{galvin2022zeroes, bencs2023optimal}. Note that the Koteck\'y-Preiss convergence criterion can be viewed as a type of local lemma.

Let $\norm{\;\cdot\;}:\mathcal{C}\to\mathbb{Z}^+$ be a function that assigns to each polymer $\gamma\in\mathcal{C}$ a positive integer $\norm{\gamma}$ called the \emph{size} of the polymer. A useful quantity for algorithmic purposes is the truncated
cluster expansion $T_m(Z(\mathcal{C},w))$ for $\log(Z(\mathcal{C},w))$:
\begin{equation}
    T_m(Z(\mathcal{C},w)) \coloneqq \sum_{\substack{\Gamma\in\mathcal{G}_C \\ \norm{\Gamma} \geq m}}\varphi(H_\Gamma)\prod_{\gamma\in\Gamma}w_\gamma, \notag
\end{equation}
where $\norm{\Gamma}=\sum_{\gamma\in\Gamma}\norm{\gamma}$.

It is often convenient to consider clusters as multisets of polymers. Define a multiset cluster to be a multiset $(\Gamma, m_\Gamma)$ of polymers $\Gamma$ with multiplicity function $m_\Gamma:\Gamma\to\mathbb{Z}^+$ whose incompatibility graph is connected. Here the definition of the incompatibility graph is extended to multisets in the natural way. Let $\hat{\mathcal{G}}_C$ denote the collection of all multiset clusters of polymers from $\mathcal{C}$. Note that, for a given multiset $(\Gamma, m_\Gamma)$, there are precisely $\frac{\left(\sum_{\gamma\in\Gamma}m_\Gamma(\gamma)\right)!}{\prod_{\gamma\in\Gamma}m_\Gamma(\gamma)!}$ tuples that correspond to it. The abstract cluster expansion may then be written as
\begin{equation}
    \log(Z(\mathcal{C},w)) = \sum_{(\Gamma, m_\Gamma)\in\hat{\mathcal{G}}_C}\hat{\varphi}\left(H_{(\Gamma, m_\Gamma)}\right)\prod_{\gamma\in\Gamma}\frac{w_\gamma^{m_\Gamma(\gamma)}}{m_\Gamma(\gamma)!}, \notag
\end{equation}
where
\begin{equation}
    \hat{\varphi}(H) \coloneqq \sum_{\substack{S \subseteq E(H) \\ \text{spanning} \\ \text{connected}}}(-1)^{\abs{S}}. \notag
\end{equation}

\subsection{Approximation Schemes}
\label{section:ApproximationSchemes}

Let $\epsilon>0$ be a real number. An \emph{additive $\epsilon$-approximation} to $z$ is a complex number $\hat{z}$ such that $\abs{z-\hat{z}}\leq\epsilon$. A \emph{multiplicative $\epsilon$-approximation} to $z$ is a complex number $\hat{z}$ such that $\abs{z-\hat{z}}\leq\epsilon\abs{z}$. Note that an additive-error approximation to the logarithm of a number is equivalent to a multiplicative approximation to that number. A \emph{fully polynomial-time approximation scheme} for a sequence of complex numbers $(z_n)_{n\in\mathbb{N}}$ is a deterministic algorithm that, for any $n$ and $\epsilon>0$, produces a multiplicative $\epsilon$-approximation to $z_n$ in time polynomial in $n$ and $1/\epsilon$.

\subsection{Computational Complexity}
\label{section:ComputationalComplexity}

We shall refer to the following complexity classes: P (polynomial time), RP (randomised polynomial time), NP (non-deterministic polynomial time), and \#P. For a formal definition of these complexity classes,
we refer the reader to Ref.~\cite{arora2009computational}.

\section{General Framework}
\label{section:GeneralFramework}

\subsection{Approximation Algorithms}
\label{section:ApproximationAlgorithms}

In this section we establish a general framework for developing approximation algorithms for abstract polymer model partition functions. We consider abstract polymer models in which the polymers are connected subgraphs of bounded-degree bounded-rank multihypergraphs and compatibility is defined by vertex disjointness. When the polymer weights of these models decay sufficiently fast, then the logarithm of the partition function can be controlled by a convergent cluster expansion. Our algorithm approximates the logarithm of the partition function by computing the truncated cluster expansion to sufficiently high order.

Our general framework is based on that of Helmuth, Perkins, and Regts~\cite{helmuth2020algorithmic} and Borgs et al.~\cite{borgs2020efficient} where approximation algorithms were developed in the setting of bounded-degree graphs. Our algorithm can be viewed as a straightforward generalisation of theirs to the setting of bounded-degree bounded-rank multihypergraphs. Our main theorem is as follows.
\begin{theorem}
    \label{theorem:ApproximationAlgorithmAbstractPolymerModelPartitionFunction}
    Fix $\Delta,r\in\mathbb{Z}_{\geq2}$. Let $G=(V, E)$ be a multihypergraph of maximum degree at most $\Delta$ and rank at most $r$. Further let $(\mathcal{C}, w, \sim)$ be an abstract polymer model such that the polymers are connected subgraphs of $G$ and that two polymers $\gamma$ and $\gamma'$ are compatible if and only if $V(\gamma) \cap V(\gamma')=\varnothing$. Suppose that, for all polymers $\gamma\in\mathcal{C}$, the weight $w_\gamma$ can be computed in time $\exp(O(\norm{\gamma}))$ and satisfies
    \begin{equation}
        \abs{w_\gamma} \leq \left(\frac{1}{e^3\Delta\binom{r}{2}}\right)^\norm{\gamma}. \notag
    \end{equation}
    Then the cluster expansion for $\log(Z(\mathcal{C},w))$ converges absolutely, $Z(\mathcal{C},w)\neq0$, and there is a fully polynomial-time approximation scheme for $Z(\mathcal{C},w)$.
\end{theorem}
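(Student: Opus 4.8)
The plan is to follow the approach of Helmuth, Perkins, and Regts~\cite{helmuth2020algorithmic} and of Borgs et al.~\cite{borgs2020efficient}, generalised from bounded-degree graphs to bounded-degree bounded-rank multihypergraphs, in three stages. First, I would invoke the Koteck\'y--Preiss criterion (Theorem~\ref{theorem:KoteckyPreissConvergenceCriterion}) to establish absolute convergence of the cluster expansion, $Z(\mathcal{C},w)\neq0$, and a quantitative tail bound on clusters. Second, I would deduce that the truncated cluster expansion $T_m(Z(\mathcal{C},w))$ approximates $\log(Z(\mathcal{C},w))$ with an error decaying exponentially in the truncation order $m$. Third, I would show that $T_m(Z(\mathcal{C},w))$ is computable in time $\mathrm{poly}(\abs{V})\cdot e^{O(m)}$; taking $m=\Theta(\log(\abs{V}/\epsilon))$ and returning $\exp(T_m(Z(\mathcal{C},w)))$ then produces a multiplicative $\epsilon$-approximation to $Z(\mathcal{C},w)$ in time $\mathrm{poly}(\abs{V},1/\epsilon)$, i.e.\ a fully polynomial-time approximation scheme.

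For the first stage, I would apply Theorem~\ref{theorem:KoteckyPreissConvergenceCriterion} with $a$ and $d$ chosen as suitable positive multiples of the polymer size $\norm{\cdot}$. Since two polymers are incompatible precisely when they share a vertex, the sum $\sum_{\gamma^*\nsim\gamma}\abs{w_{\gamma^*}}e^{a(\gamma^*)+d(\gamma^*)}$ is at most the sum, over each vertex $v$ of $\gamma$, of the contribution of all connected subgraphs $\gamma^*$ of $G$ passing through $v$. The key input is a combinatorial estimate bounding the number of connected subgraphs of $G$ of a given size through a fixed vertex by roughly $\bigl(e\Delta\binom{r}{2}\bigr)^{\mathrm{size}}$; this is the step that genuinely requires adapting the graph-case counting to multihypergraphs, obtained by encoding a connected subhypergraph via a spanning tree of its edge-intersection structure and charging the $\le\Delta$ hyperedges incident to each of the $\le r$ vertices of a hyperedge, with $\binom{r}{2}$ serving as a convenient bound absorbing the attachment bookkeeping and the multi-edge multiplicities. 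Combining this with the decay hypothesis $\abs{w_\gamma}\le\bigl(e^3\Delta\binom{r}{2}\bigr)^{-\norm{\gamma}}$ reduces the Koteck\'y--Preiss condition to a convergent geometric series in which the three factors of $e$ in the denominator are consumed respectively by the counting bound and by the two exponential weights $e^{a}$ and $e^{d}$; the criterion is thus met and its conclusions, including the bound $\sum_{\Gamma\nsim\gamma}\abs{\varphi(H_\Gamma)}\prod_{\gamma'\in\Gamma}\abs{w_{\gamma'}}\,e^{\sum_{\gamma'\in\Gamma}d(\gamma')}\le a(\gamma)$, follow.

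For the second stage, I would use this last inequality: since every cluster is anchored at some vertex of $G$, summing the tail bound over $v\in V$ and using $d(\gamma)\propto\norm{\gamma}$ shows that $\abs{\log(Z(\mathcal{C},w))-T_m(Z(\mathcal{C},w))}\le\abs{V}\cdot e^{-\Omega(m)}$, so it suffices to take $m=\Theta(\log(\abs{V}/\epsilon))$. For the third stage, the assumed $e^{O(\norm{\gamma})}$-time computability of the weights together with the same counting estimate shows there are only $\mathrm{poly}(\abs{V})\cdot e^{O(m)}$ clusters of size below $m$, which can be enumerated and weighted efficiently; the one subtlety, exactly as in the bounded-degree graph case, is to evaluate the resulting sum (in particular the Ursell functions) in genuinely polynomial rather than quasi-polynomial time, handled as by Patel and Regts~\cite{patel2017deterministic} by re-expressing $T_m$ in terms of counts of connected subhypergraphs of bounded size. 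Finally, since an additive error in $\log Z$ is a multiplicative error in $Z$, outputting $\exp(T_m(Z(\mathcal{C},w)))$ yields the FPTAS. I expect the principal obstacle to be the combinatorial counting lemma for connected subhypergraphs and pinning the constant down so that the Koteck\'y--Preiss criterion closes with the stated bound $\tfrac{1}{e^3\Delta\binom{r}{2}}$; the remaining steps are a routine port of the graph-case arguments.
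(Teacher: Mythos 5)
Your high-level architecture (Koteck\'y--Preiss convergence plus tail bound, truncation at order $m=\Theta(\log(\abs{G}/\epsilon))$, enumeration and evaluation of the truncated expansion in time $e^{O(m)}\abs{G}^{O(1)}$) is exactly the paper's, but there is a genuine gap in the one step you yourself flag as the crux: the counting estimate and the choice of $a,d$ you propose do not close the Koteck\'y--Preiss condition at the stated threshold. Because incompatibility is vertex-sharing, the left-hand side of the KP inequality for a polymer $\gamma$ is a sum over its $\abs{\gamma}$ vertices of a per-vertex quantity, while your $a(\gamma)$ is proportional to the edge count $\norm{\gamma}$, and $\abs{\gamma}$ can be as large as $(r-1)\norm{\gamma}+1$ (loose paths, linear hypertrees). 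With your counting bound of roughly $\bigl(e\Delta\binom{r}{2}\bigr)^{m}$ connected subgraphs of size $m$ through a fixed vertex, the per-vertex sum is $\sum_{m\ge1}\bigl(e\Delta\binom{r}{2}\bigr)^{m}\bigl(e^{3}\Delta\binom{r}{2}\bigr)^{-m}e^{(a+d)\text{-cost}\cdot m}$, which is a constant of order one, not of order $1/r$; matching it against $a(\gamma)\propto\norm{\gamma}$ forces the coefficient of $a$ to grow like $r$, which then re-enters the exponent $e^{a(\gamma^{*})}$ and destroys convergence. A short optimisation over the constants (allowing $a$ to be any nonnegative combination of $\abs{\cdot}$ and $\norm{\cdot}$ and $d$ any positive multiple of $\norm{\cdot}$) shows the condition cannot be satisfied for $r\ge3$ with the weight bound $\bigl(e^{3}\Delta\binom{r}{2}\bigr)^{-\norm{\gamma}}$ and a $\binom{r}{2}$-type count. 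The paper's proof needs, and proves, the sharper count $\abs{C_{m,v}(G)}\le\frac{(e\Delta(r-1))^{m}}{2}$, obtained by comparison with the infinite $\Delta$-regular $r$-uniform \emph{linear} hypertree and the tree-subgraph counting lemma of Borgs et al.; combined with the weight bound this makes the per-vertex sum at most $\frac{1}{2}\sum_{m\ge1}(2/(er))^{m}\le\frac{1}{2(r-1)}$, and KP then closes with the vertex-proportional choice $a(\gamma)=\frac{1}{2(r-1)}\abs{\gamma}$, $d(\gamma)=\frac{1}{2}\norm{\gamma}$, using $\abs{\gamma}\le(r-1)\norm{\gamma}+1$. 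In other words, the $\binom{r}{2}$ in the hypothesis is not slack that absorbs ``attachment bookkeeping'' in the counting; the gap between the true counting rate $\Delta(r-1)$ and the $\Delta\binom{r}{2}$ in the weight decay is precisely what produces the $1/r$ per-vertex budget the vertex-anchored argument requires.

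The remaining stages are essentially the paper's and are fine: the KP conclusion summed over the auxiliary single-vertex polymers gives $\abs{T_m(Z(\mathcal{C},w))-\log(Z(\mathcal{C},w))}\le\abs{G}e^{-m/2}$, clusters of size at most $m$ are listed in time $e^{O(m)}\abs{G}^{O(1)}$, and each Ursell function is evaluated exactly in time $e^{O(m)}$ via the Tutte-polynomial evaluation at $(0,1)$ using the algorithm of Bj\"orklund et al., so with $m=O(\log(\abs{G}/\epsilon))$ the whole computation is genuinely polynomial; the Patel--Regts re-expression you invoke is an alternative route but is not needed here, and as written your sketch leaves the cluster enumeration and Ursell evaluation unspecified rather than reducing them to the same counting lemma as the paper does.
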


In Section~\ref{section:Applications} we shall apply Theorem~\ref{theorem:ApproximationAlgorithmAbstractPolymerModelPartitionFunction} to establish efficient approximation algorithms for several quantum problems.

Our proof of Theorem~\ref{theorem:ApproximationAlgorithmAbstractPolymerModelPartitionFunction} requires several lemmas. We first prove the following lemma which bounds the number of polymers of a certain size containing a particular vertex.

\begin{lemma}
    \label{lemma:ConnectedSubraphCount}
    Let $G=(V, E)$ be a multihypergraph of maximum degree at most $\Delta$ and rank at most $r$, and let $v \in V$ be a vertex. The number of connected subgraphs with $m$ edges that contain vertex $v$ is at most $\frac{(e\Delta(r-1))^m}{2}$.
\end{lemma}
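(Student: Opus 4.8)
The plan is to reduce the problem of counting connected subhypergraphs to a standard bound on the number of connected subgraphs in a bounded-degree multigraph, via an auxiliary line-graph-type construction. First I would associate to the multihypergraph $G$ an auxiliary multigraph $L$ whose vertices are the edges of $G$, where two edges of $G$ are joined in $L$ by as many parallel edges as the number of vertices they share. A connected subgraph of $G$ with $m$ edges containing the vertex $v$ corresponds to a connected subset of $m$ vertices in $L$, all lying among the edges incident to $v$ or reachable from them; more precisely, since $v$ has degree at most $\Delta$, any such connected subhypergraph must contain at least one of the at most $\Delta$ edges incident to $v$. So it suffices to bound, for a fixed edge $f$ of $G$, the number of connected $m$-vertex subsets of $L$ containing $f$, and then multiply by $\Delta$ (or rather by the number of edges at $v$, at most $\Delta$), and divide by $m$ or handle the overcount.

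The key estimate I would invoke is the classical fact (essentially due to the tree-counting argument behind the Lovász Local Lemma / Borgs--Chayes--type bounds) that in a multigraph of maximum degree $D$, the number of connected subgraphs on $m$ vertices containing a fixed vertex is at most $\frac{(eD)^{m-1}}{?}$ — concretely, at most $\frac{(eD)^{m}}{D}$ or similar. Here the maximum degree of $L$: an edge $f$ of $G$ has cardinality at most $r$, each of its at most $r$ vertices has degree at most $\Delta$ in $G$, so $f$ meets at most $r(\Delta-1)$ other edges, counted with multiplicity this is at most $r(\Delta-1)$ as well (shared vertices), but actually the relevant branching parameter works out to $\Delta(r-1)$ after being careful: each of the $m$ edges already chosen contributes at most $r$ vertices, each of degree $\leq \Delta$, giving $\leq mr(\Delta-1)$ candidate new edges, and the standard generating-function / Kruskal--Katona-style count then yields the $(e\Delta(r-1))^m$ growth. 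I would run the argument either through a breadth-first spanning tree of the subhypergraph together with a counting bound of the form $\prod (1 + x)^{\text{degree}} \le e^{x \cdot \text{size}}$, optimizing $x$, exactly as in Helmuth--Perkins--Regts and Borgs et al., or cite such a lemma directly and track how $\Delta$ is replaced by $\Delta(r-1)$.

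The main obstacle — and the only real content beyond bookkeeping — is getting the branching factor to be exactly $\Delta(r-1)$ rather than something slightly larger like $\Delta r$ or $(\Delta-1)r$, and getting the constant $\tfrac12$ out front. The factor $(r-1)$ rather than $r$ comes from the fact that when we extend a connected subhypergraph by a new edge, that new edge must share a vertex with the current subhypergraph, so at most $r-1$ of its vertices are "new" branching points; the $\tfrac12$ is the same symmetry-of-the-root saving that appears in the graph case (a connected $m$-edge subgraph has at least $2$ incident structures one could have started the BFS from, or equivalently the first edge is double-counted), and I would extract it by the same argument as in the bounded-degree graph setting, e.g. by noting the count of connected subgraphs with $m$ edges through $v$ is at most $\frac1m$ times the number of such subgraphs with a distinguished edge, or by the standard refinement that improves $(e\Delta)^m$ to $\tfrac{(e\Delta)^m}{2}$. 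I expect the proof to proceed: (i) set up the BFS exploration of a connected $m$-edge subhypergraph rooted at an edge containing $v$; (ii) bound the number of exploration sequences by a product over steps, each step having at most $\Delta(r-1)$ choices up to an encoding of which already-seen vertex is reused; (iii) apply the inequality $\binom{N}{m} \le \frac{(eN/m)^m}{?}$ or a Catalan/generating-function bound to convert the sequence count into the subgraph count, absorbing the overcounting and producing the $e$ and the $\tfrac12$.
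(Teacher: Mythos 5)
Your overall plan (pass to an auxiliary bounded-degree graph and invoke a rooted tree/Catalan-type subgraph count) is in the right family, but the specific reduction you propose cannot deliver the claimed branching factor, and that factor is the whole content of the lemma. In your line graph $L$ the degree of a hyperedge $f$ is only bounded by $r(\Delta-1)$ (each of its at most $r$ vertices meets at most $\Delta-1$ further edges), so counting connected vertex subsets of $L$ through a fixed vertex yields growth of order $(er(\Delta-1))^m$, possibly times an extra factor $\Delta$ for the choice of starting edge at $v$. Since $r(\Delta-1)>\Delta(r-1)$ exactly when $\Delta>r$ (already in the ordinary graph case $r=2$, $\Delta\geq3$), this is strictly weaker than the stated bound $\tfrac{(e\Delta(r-1))^m}{2}$; and your sentence asserting that ``$\leq mr(\Delta-1)$ candidate new edges'' plus a generating-function count ``yields the $(e\Delta(r-1))^m$ growth'' is a non sequitur --- that is precisely the gap. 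The observation you mention only in passing (a newly attached hyperedge spends one vertex on the attachment, leaving at most $r-1$ new vertices, each with at most $\Delta-1$ unused incident edges) has to be implemented structurally; it is invisible inside the line graph, which cannot record which vertices of an already-chosen edge are ``new''.

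The paper's proof does exactly this bookkeeping by comparison with the infinite $\Delta$-regular $r$-uniform \emph{linear} hypertree $T_{\Delta,r,v}$ rooted at $v$: first $\abs{C_{m,v}(G)}\leq\abs{C_{m,v}(T_{\Delta,r,v})}$, and then connected subgraphs of the hypertree through $v$ are put in bijection with rooted subtrees of an auxiliary tree $T^\star_{\Delta,r,v}$ on vertex set $\{v\}\cup E(T_{\Delta,r,v})$, where hyperedges are joined only if they intersect \emph{and} lie at different distances from $v$; this tree has maximum degree $(\Delta-1)(r-1)+1\leq\Delta(r-1)$, and the rooted-subtree bound $\frac{1}{m+1}\binom{(m+1)\Delta(r-1)}{m}\leq\frac{(e\Delta(r-1))^m}{2}$ of Borgs et al.\ finishes the argument --- in particular the factor $\tfrac12$ comes from this numerical estimate, not from a root-symmetry or double-counting saving as you suggest. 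If you wish to avoid the hypertree comparison, you would need an exploration argument that alternates vertex steps (at most $\Delta$, or $\Delta-1$ after the first, edge choices per vertex) with edge steps (at most $r-1$ new vertices per edge), which amounts to reconstructing $T^\star_{\Delta,r,v}$; as written, your proposal does not reach the stated bound.
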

\begin{proof}
    Let $C_{m,v}(G)$ denote the set of connected subgraphs of $G$ with $m$ edges that contain the vertex $v \in V$. Further let $T_{\Delta, r, v}$ denote the infinite $\Delta$-regular $r$-uniform linear hypertree with root $v$. Recall that a hypergraph is linear if the intersection of any pair of edges contains at most one vertex. Let $T^\star_{\Delta, r, v}$ be the graph with vertex set $\{v\} \cup E(T_{\Delta, r, v})$ and edges between vertices $v$ and $\varepsilon \in E(T_{\Delta, r, v})$ if and only if $v \in \varepsilon$ and edges between vertices $\varepsilon, \varepsilon' \in E(T_{\Delta, r, v})$ if and only if $\varepsilon \cap \varepsilon'\neq\varnothing$ and $d(v, \varepsilon) \neq d(v, \varepsilon')$. Note that $T^\star_{\Delta, r, v}$ is a tree with maximum degree precisely $(\Delta-1)(r-1)+1 \leq \Delta(r-1)$ and there is a natural bijection between $C_{m,v}(T_{\Delta, r, v})$ and $C_{m,v}(T^\star_{\Delta, r, v})$. The cardinality of $C_{m,v}(T^\star_{\Delta, r, v})$ is at most $\frac{1}{m+1}\binom{(m+1)\Delta(r-1)}{m}$~\cite[Lemma 2.1]{borgs2013left}.
    Hence, we have
    \begin{equation}
        \abs{C_{m,v}(G)} \leq \abs{C_{m,v}(T_{\Delta, r, v})} = \abs{C_{m,v}(T^\star_{\Delta, r, v})} \leq \frac{1}{m+1}\binom{(m+1)\Delta(r-1)}{m} \leq \frac{(e\Delta(r-1))^m}{2}, \notag
    \end{equation}
    completing the proof.
\end{proof}

\begin{remark}
    The proof of Lemma~\ref{lemma:ConnectedSubraphCount} gives a slightly sharper bound of $\frac{(e((\Delta-1)(r-1)+1))^m}{2}$. Improved bounds may be obtained for certain classes of multihypergraphs.
\end{remark}

We now show that provided the polymer weights decay sufficiently fast, then the cluster expansion converges absolutely and the truncated cluster expansion provides a good approximation to $\log(Z(\mathcal{C},w))$. This is formalised by the following lemma which utilises the Koteck\'y-Preiss convergence criterion.
\begin{lemma}
    \label{lemma:ConvergenceClusterExpansion}
    Let $G=(V, E)$ be a multihypergraph of maximum degree at most $\Delta$ and rank at most $r$. Further let $(\mathcal{C}, w, \sim)$ be an abstract polymer model such that the polymers are connected subgraphs of $G$ and that two polymers $\gamma$ and $\gamma'$ are compatible if and only if $V(\gamma) \cap V(\gamma')=\varnothing$. Suppose that, for all polymers $\gamma\in\mathcal{C}$, the weight $w_\gamma$ satisfies
    \begin{equation}
        \abs{w_\gamma} \leq \left(\frac{1}{e^3\Delta\binom{r}{2}}\right)^\norm{\gamma}. \notag
    \end{equation}
    Then the cluster expansion for $\log(Z(\mathcal{C},w))$ converges absolutely, $Z(\mathcal{C},w)\neq0$, and for $m\in\mathbb{Z}^+$,
    \begin{equation}
        \abs{T_m(Z(\mathcal{C},w))-\log(Z(\mathcal{C},w))} \leq \abs{G}e^{-\frac{m}{2}}. \notag
    \end{equation}
\end{lemma}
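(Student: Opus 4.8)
The plan is to apply the Kotecký--Preiss convergence criterion (Theorem~\ref{theorem:KoteckyPreissConvergenceCriterion}) with the choices $a(\gamma) = \norm{\gamma}$ and $d(\gamma) = \norm{\gamma}$ (or, more flexibly, $a(\gamma) = c_1\norm{\gamma}$ and $d(\gamma) = c_2 \norm{\gamma}$ for suitable constants, but the symmetric choice tailored to the constant $e^3$ should work). First I would verify the hypothesis of Theorem~\ref{theorem:KoteckyPreissConvergenceCriterion}: fix a polymer $\gamma$ and estimate $\sum_{\gamma^* \nsim \gamma} \abs{w_{\gamma^*}} e^{a(\gamma^*) + d(\gamma^*)}$. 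Since incompatibility means $V(\gamma^*) \cap V(\gamma) \neq \varnothing$, every such $\gamma^*$ contains at least one vertex of $\gamma$, so I can bound the sum by $\sum_{v \in V(\gamma)} \sum_{\gamma^* \ni v} \abs{w_{\gamma^*}} e^{2\norm{\gamma^*}}$. Using the weight bound $\abs{w_{\gamma^*}} \leq (e^3 \Delta \binom{r}{2})^{-\norm{\gamma^*}}$ and Lemma~\ref{lemma:ConnectedSubraphCount}, which says the number of connected subgraphs with $m$ edges through a fixed vertex is at most $\tfrac12 (e\Delta(r-1))^m$, the inner sum is at most $\sum_{m \geq 1} \tfrac12 (e\Delta(r-1))^m \cdot e^{2m} (e^3 \Delta \binom{r}{2})^{-m}$. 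Since $\binom{r}{2} = \tfrac{r(r-1)}{2} \geq \tfrac{r-1}{2}\cdot 2 \cdot \tfrac{1}{?}$—more precisely $\Delta(r-1) / (\Delta \binom{r}{2}) = 2/r \leq 1$—the ratio in each term is at most $e^{1+2-3} \cdot (2/r) = 2/r \leq 1$, so the geometric series converges; I will need to check that the sum is bounded by $\norm{\gamma} = a(\gamma)$. A short computation: each term is at most $\tfrac12 (2/r)^m \leq \tfrac12 \cdot 2^{-m}\cdot\text{(something)}$; summing over $m \geq 1$ and over the at most $\norm{\gamma}+1$ vertices of $\gamma$ (a connected subgraph with $\norm{\gamma}$ edges has at most $\norm{\gamma}(r-1)+1$ vertices, but we only need a crude bound linear in $\norm{\gamma}$), one gets a quantity bounded by $\norm{\gamma}$ after using the slack in the constant $e^3$. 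I should be slightly careful that the vertex count of $\gamma$ is at most $C\norm{\gamma}$ for an absolute constant (true since $r$ is fixed and each edge adds at most $r-1$ new vertices, and $\norm{\gamma}\ge 1$), and absorb that factor into the exponential decay; this is where the precise value $e^3$ rather than a smaller constant is used.

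Granting the Kotecký--Preiss hypothesis, the theorem immediately gives absolute convergence of the cluster expansion, $Z(\mathcal{C},w) \neq 0$, and the bound
\begin{equation}
    \sum_{\substack{\Gamma \in \mathcal{G}_C \\ \Gamma \nsim \gamma}} \abs*{\varphi(H_\Gamma) \prod_{\gamma^* \in \Gamma} w_{\gamma^*}} e^{\sum_{\gamma^* \in \Gamma} d(\gamma^*)} \leq a(\gamma) = \norm{\gamma} \notag
\end{equation}
for every polymer $\gamma$. The remaining task is to convert this into the truncation error bound. The tail $T_m(Z(\mathcal{C},w)) - \log(Z(\mathcal{C},w))$ is, up to sign, $\sum_{\Gamma : \norm{\Gamma} < m} \varphi(H_\Gamma)\prod_{\gamma^* \in \Gamma} w_{\gamma^*}$; I want to bound its absolute value. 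The standard trick is to insert $1 \leq e^{(\norm{\Gamma} - m + 1)/2} \cdot e^{-(\norm{\Gamma}-m+1)/2}$ — wait, more cleanly: for clusters with $\norm{\Gamma} < m$ we have $1 < e^{(m - \norm{\Gamma})/2}$, so
\begin{equation}
    \abs*{\sum_{\substack{\Gamma \in \mathcal{G}_C \\ \norm{\Gamma} < m}} \varphi(H_\Gamma) \prod_{\gamma^* \in \Gamma} w_{\gamma^*}} \leq e^{-m/2} \sum_{\Gamma \in \mathcal{G}_C} \abs*{\varphi(H_\Gamma) \prod_{\gamma^* \in \Gamma} w_{\gamma^*}} e^{\norm{\Gamma}/2}, \notag
\end{equation}
and since $d(\gamma^*) = \norm{\gamma^*}$ gives $\sum_{\gamma^* \in \Gamma} d(\gamma^*) = \norm{\Gamma} \geq \norm{\Gamma}/2$, the sum over all clusters is controlled by summing the Kotecký--Preiss bound over a set of "anchor" polymers. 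Concretely, every cluster $\Gamma$ contains at least one polymer $\gamma^*$ with a vertex in a fixed set, so $\sum_{\Gamma \in \mathcal{G}_C} \abs{\varphi(H_\Gamma)\prod w} e^{\norm{\Gamma}/2} \leq \sum_{v \in V} \sum_{\Gamma \ni \text{(some polymer at } v)} \cdots \leq \sum_{v \in V} (\text{bound at a single-vertex anchor}) \leq \sum_{v\in V} 1 = \abs{G}$, using that the Kotecký--Preiss output bound with $a(\gamma) = \norm\gamma$, applied appropriately, yields at most $1$ per vertex. This gives exactly $\abs{G} e^{-m/2}$.

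The main obstacle I anticipate is the bookkeeping in the last step: relating the unrestricted sum $\sum_{\Gamma \in \mathcal{G}_C}$ (with the $e^{\norm{\Gamma}/2}$ weight) to the vertex-anchored Kotecký--Preiss bound $\sum_{\Gamma \nsim \gamma} \abs{\varphi(H_\Gamma) \prod w} e^{\sum d} \leq \norm\gamma$. One clean way is to introduce for each vertex $v$ a fictitious "trivial" polymer or simply argue directly that any cluster $\Gamma$ has some polymer containing a lexicographically-least vertex $v(\Gamma)$, and then $\sum_{\Gamma \in \mathcal{G}_C} (\cdots) = \sum_{v \in V} \sum_{\Gamma : v(\Gamma) = v} (\cdots) \leq \sum_{v\in V}\sum_{\Gamma : \text{some }\gamma^*\in\Gamma \text{ with } v \in V(\gamma^*)}(\cdots)$, and bound the inner sum by applying the Kotecký--Preiss conclusion with a single-vertex polymer $\gamma$ at $v$ if one exists, or by a slightly more careful argument noting the conclusion of Theorem~\ref{theorem:KoteckyPreissConvergenceCriterion} bounds exactly such anchored cluster sums. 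The constant $e^3$ must be used to ensure the per-vertex bound is at most $1$ (so the total is $\abs{G}$) and simultaneously that the convergence hypothesis holds; I would double-check that a single choice of $a, d$ discharges both requirements, adjusting to $a(\gamma) = d(\gamma) = \norm\gamma$ or a rescaled version as the arithmetic dictates. The needed inequality reduces to checking $e^{3}\Delta\binom r2 \geq e^{1+2}\Delta(r-1)\cdot\tfrac{?}{}$, i.e. that $e\Delta(r-1)e^{2} \le e^{3}\Delta\binom r 2$, equivalently $r-1 \le \binom r 2 = \tfrac{r(r-1)}2$, i.e. $r \ge 2$, which holds by hypothesis — so the geometric ratio is at most $1$, and the strict slack needed comes from summing over $m\ge 1$ and the vertices of $\gamma$, absorbed by one of the factors of $e$.
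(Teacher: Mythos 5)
Your overall strategy (Kotecký--Preiss plus the subgraph-counting lemma, then a vertex-anchored bound on the tail) is the same as the paper's, but your choice of the functions $a$ and $d$ does not satisfy the Kotecký--Preiss hypothesis, and this is not a matter of slack that can be absorbed. With $a(\gamma)=d(\gamma)=\norm{\gamma}$, each polymer of size $m$ contributes $\abs{w_\gamma}e^{a+d}\leq\left(e^{2}/\!\left(e^{3}\Delta\binom{r}{2}\right)\right)^{m}$, and multiplying by the counting bound $\tfrac{1}{2}(e\Delta(r-1))^{m}$ gives the ratio $\left(e\Delta(r-1)\cdot e^{2}\right)/\left(e^{3}\Delta\binom{r}{2}\right)=2/r$ per term: all three factors of $e$ in $e^{3}$ are already spent (one on the counting lemma, two on $e^{a+d}$), so there is no "spare factor of $e$" left. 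For $r=2$ the ratio is exactly $1$ and the per-vertex series diverges, so the hypothesis fails outright; for $r\geq 3$ the per-vertex sum is $\tfrac{1}{2}\sum_{m\geq1}(2/r)^{m}=\tfrac{1}{r-2}$, and summing over the up to $(r-1)\norm{\gamma}+1$ vertices of $\gamma$ gives roughly $\tfrac{r-1}{r-2}\norm{\gamma}>\norm{\gamma}=a(\gamma)$, so the condition still fails. Moreover, the flexible variant $a=c_1\norm{\cdot}$, $d=c_2\norm{\cdot}$ cannot rescue the stated bound: the truncation error $\abs{G}e^{-m/2}$ requires $c_2\geq\tfrac12$, but for $r=2$ the Kotecký--Preiss condition with edge-proportional $a$ forces $e^{c_1+c_2-2}\leq c_1\left(1-e^{c_1+c_2-2}\right)$, whose optimum over $c_1$ caps $c_2$ at roughly $0.42<\tfrac12$. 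So the gap is structural: with $a$ proportional to the number of edges, the decay rate $\tfrac12$ in the exponent is unreachable under the given weight bound.

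The paper's proof fixes exactly this point by taking $a(\gamma)=\tfrac{1}{2(r-1)}\abs{\gamma}$ (proportional to the \emph{vertex} count) and $d(\gamma)=\tfrac{1}{2}\norm{\gamma}$, together with $\abs{\gamma}\leq(r-1)\norm{\gamma}+1$. Then $e^{a(\gamma)+d(\gamma)}\leq e^{\norm{\gamma}+\frac{1}{2(r-1)}}$ costs only one factor of $e$ per edge, the per-term ratio becomes $2/(er)\leq 1/e$, and the per-vertex sum is at most $\tfrac{1}{2(r-1)}$, which when summed over the $\abs{\gamma}$ vertices of $\gamma$ reproduces $a(\gamma)$ exactly; the auxiliary single-vertex polymers $\gamma_v$ (with $a(\gamma_v)=\tfrac{1}{2(r-1)}\leq1$) then give the per-vertex anchored cluster bound, and $d=\tfrac12\norm{\cdot}$ delivers the $e^{-m/2}$ tail. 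Your anchoring and tail-extraction steps are essentially right (note the error to bound is the tail $\sum_{\norm{\Gamma}\geq m}$, for which your inequality $1\leq e^{(\norm{\Gamma}-m)/2}$ is the correct insertion), but without the vertex-count-based choice of $a$ the key hypothesis of Theorem~\ref{theorem:KoteckyPreissConvergenceCriterion} is not verified and the lemma as stated does not follow from your argument.
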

\begin{proof}
    We introduce a polymer $\gamma_v$ to every vertex $v$ in $G$ consisting of only that vertex. We define $\gamma_v$ to be incompatible with every polymer that contains $v$. Then, we have
    \begin{equation}
        \sum_{\gamma\nsim\gamma_v}\abs{w_\gamma}e^{\frac{1}{2}\left(\frac{1}{r-1}\abs{\gamma}+\norm{\gamma}\right)} \leq e^{\frac{1}{2(r-1)}}\sum_{\gamma\nsim\gamma_v}\abs{w_\gamma}e^{\norm{\gamma}} \leq e^{\frac{1}{2(r-1)}}\sum_{\gamma\nsim\gamma_v}\left(\frac{1}{e^2\Delta\binom{r}{2}}\right)^\norm{\gamma}, \notag
    \end{equation}
    where we have used the fact that $\abs{\gamma}\leq(r-1)\norm{\gamma}+1$. By Lemma~\ref{lemma:ConnectedSubraphCount}, the number of polymers $\gamma$ with $\norm{\gamma}=m$ that are incompatible with $\gamma_v$ is at most $\frac{(e\Delta(r-1))^m}{2}$. Thus, we may write
    \begin{equation}
        \sum_{\gamma\nsim\gamma_v}\abs{w_\gamma}e^{\frac{1}{2}\left(\frac{1}{r-1}\abs{\gamma}+\norm{\gamma}\right)} \leq \frac{e^{\frac{1}{2(r-1)}}}{2}\sum_{m=1}^\infty\left(\frac{2}{er}\right)^m \leq \frac{1}{2(r-1)}. \notag
    \end{equation}
    Fix a polymer $\gamma$. By summing over all vertices in $\gamma$, we obtain
    \begin{equation}
        \sum_{\gamma^*\nsim\gamma}\abs{w_{\gamma^*}}e^{\frac{1}{2}\left(\frac{1}{r-1}\abs{\gamma^*}+\norm{\gamma^*}\right)} \leq \frac{1}{2(r-1)}\abs{\gamma}. \notag
    \end{equation}
    Now by applying Theorem~\ref{theorem:KoteckyPreissConvergenceCriterion} with $a(\gamma)=\frac{1}{2(r-1)}\abs{\gamma}$ and $d(\gamma)=\frac{1}{2}\norm{\gamma}$, we have that the cluster expansion converges absolutely, $Z(\mathcal{C},w)\neq0$, and
    \begin{equation}
        \sum_{\substack{\Gamma\in\mathcal{G}_C \\ \Gamma\ni\gamma_v }}\abs{\varphi(H_\Gamma)\prod_{\gamma\in\Gamma}w_\gamma}e^{\frac{1}{2}\norm{\Gamma}} \leq 1. \notag
    \end{equation}
    By summing over all vertices in $G$, we obtain
    \begin{equation}
        \sum_{\substack{\Gamma\in\mathcal{G}_C \\ \norm{\Gamma} \geq m }}\abs{\varphi(H_\Gamma)\prod_{\gamma\in\Gamma}w_\gamma} \leq \abs{G}e^{-\frac{m}{2}}, \notag
    \end{equation}
    completing the proof.
\end{proof}

Lemma~\ref{lemma:ConvergenceClusterExpansion} implies that to obtain a multiplicative $\epsilon$-approximation $Z(\mathcal{C},w)$, it is sufficient to compute the truncated cluster expansion $T_m(Z(\mathcal{C},w))$ to order $m=O(\log(\abs{G}/\epsilon))$. We shall now establish an algorithm for computing $T_m(Z(\mathcal{C},w))$ in time $\exp(O(m))\cdot\abs{G}^{O(1)}$. This requires the following two lemmas.
\begin{lemma}
    \label{lemma:ListClustersAlgorithm}
    Fix $\Delta,r\in\mathbb{Z}_{\geq2}$. Let $G=(V, E)$ be a multihypergraph of maximum degree at most $\Delta$ and rank at most $r$. Further let $(\mathcal{C}, w, \sim)$ be an abstract polymer model such that the polymers are connected subgraphs of $G$ and that two polymers $\gamma$ and $\gamma'$ are compatible if and only if $V(\gamma) \cap V(\gamma')=\varnothing$. The clusters of size at most $m$ can be listed in time $\exp(O(m))\cdot\abs{G}^{O(1)}$.
\end{lemma}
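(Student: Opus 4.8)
My plan is to reduce the enumeration of clusters to the enumeration of connected subgraphs of $G$ of bounded size, and then to control the combinatorics using Lemma~\ref{lemma:ConnectedSubraphCount}. It is convenient to view clusters as multisets of polymers, as the cluster expansion is evaluated in that form; a cluster of size at most $m$ then contains at most $m$ polymers counted with multiplicity, since every polymer has size at least $1$. The key structural observation is that the \emph{support} of a cluster $(\Gamma,m_\Gamma)$ --- the subgraph $H$ whose vertex and edge sets are the unions of those of its polymers --- is a connected subgraph of $G$: incompatible polymers share a vertex and the incompatibility graph is connected, so any two vertices of $H$ are joined by a walk obtained by concatenating walks through successive polymers. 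Moreover $H$ has at most $\norm{\Gamma}\le m$ edges, and every polymer of the cluster is a connected subgraph of $H$. Hence each cluster of size at most $m$ arises by first choosing a connected subgraph $H$ of $G$ with at most $m$ edges, then choosing a multiset of connected subgraphs of $H$ with total size at most $m$ whose union is $H$ and whose incompatibility graph is connected.

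The first step is to list all connected subgraphs of $G$ with at most $m$ edges. By Lemma~\ref{lemma:ConnectedSubraphCount}, for each $v\in V$ the number of such subgraphs with $j$ edges containing $v$ is at most $\frac{1}{2}(e\Delta(r-1))^j$; summing over $v$ and $1\le j\le m$ bounds their number by $\exp(O(m))\abs{G}$, and a standard branching procedure --- repeatedly choosing a boundary hyperedge of the current connected subgraph and branching on whether to adjoin it --- lists them in time $\exp(O(m))\abs{G}^{O(1)}$, exactly as in the graph setting of~\cite{helmuth2020algorithmic,borgs2020efficient}. For the second step, fix a connected subgraph $H$ with at most $m$ edges; its connected subgraphs are precisely the members of the list already produced that are contained in $H$, sorted by number of edges. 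If a candidate multiset uses $p_j$ distinct polymers with $j$ edges then $\sum_j jp_j\le m$, and by Lemma~\ref{lemma:ConnectedSubraphCount} the number of realisations of a profile $(p_j)_j$ is at most $\prod_j\binom{N_j}{p_j}$ with $N_j\le rm\,(e\Delta(r-1))^j$. Maximising $\sum_j p_j\log(eN_j/p_j)$ subject to $\sum_j jp_j\le m$ is a concave program whose optimum has $p_j$ decaying geometrically in $j$, so $\sum_j p_j=O(m)$ and $\prod_j\binom{N_j}{p_j}=\exp(O(m))$; since there are only $\exp(O(\sqrt m))$ profiles, the number of candidate polymer sets is $\exp(O(m))$, and each such set $S$ admits at most $\binom{m}{\abs{S}}\le 2^m$ admissible multiplicity functions. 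Enumerating all of these, discarding any that fail to cover $E(H)$ or whose incompatibility graph is disconnected, and iterating over all supports $H$, lists every cluster of size at most $m$ in total time $\exp(O(m))\abs{G}^{O(1)}$.

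The hard part is the counting bound in the second step: a support with few edges can still contain exponentially many distinct connected subgraphs, so a naive enumeration of sub-multisets branches far too much, and one must exploit that any fixed cluster of size at most $m$ uses only geometrically few polymers of each size --- the content of the concave optimisation above --- so that the per-size contributions telescope to $\exp(O(m))$ rather than $\exp(O(m\log m))$ or worse. This is the hypergraph analogue of the cluster-counting estimate underpinning the algorithms of Helmuth, Perkins, and Regts~\cite{helmuth2020algorithmic} and Borgs et al.~\cite{borgs2020efficient}, with Lemma~\ref{lemma:ConnectedSubraphCount} playing the role of the connected-subgraph count for bounded-degree graphs; the remaining ingredients --- the growth procedure for connected subgraphs and the bookkeeping for supports, covers, and multiplicities --- are routine.
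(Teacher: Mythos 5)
Your argument is correct, and its first stage (every cluster's support is a connected subgraph of $G$ with at most $m$ edges, so enumerate those supports first) is exactly the paper's decomposition, using the same key ingredient, Lemma~\ref{lemma:ConnectedSubraphCount}. Where you diverge is in the second stage, the bound on how many clusters live over a fixed support. The paper first enumerates edge labellings of the support by multiplicities summing to at most $m$, and then bounds the number of clusters inducing a given labelled subgraph by a short induction: peel off a polymer through a fixed vertex, of which there are at most $(e\Delta(r-1))^n$ of size $n$, and sum the geometric series, giving the explicit bound $(e\Delta(r-1))^{2m}$ and a direct depth-first enumeration. You instead count by size profiles of the distinct polymers used, bounding the choices by $\prod_j\binom{N_j}{p_j}$ with $N_j\le rm\,(e\Delta(r-1))^j$ and controlling this via an entropy/Lagrangian optimisation subject to $\sum_j jp_j\le m$, then multiplying by at most $\binom{m}{\abs{S}}$ multiplicity functions and filtering out candidates that fail to cover the support or have disconnected incompatibility graph. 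This works and yields the same $\exp(O(m))\cdot\abs{G}^{O(1)}$ runtime, but the step you label a ``concave program'' is asserted rather than proved: the point that needs an argument is that $\sum_j p_j\log(eN_j/p_j)=O(m)$ despite the $\log(rm)$ inside the logarithm, which follows from the Lagrange computation ($x_j\propto e^{-\lambda j}$ with $\lambda$ a constant), and your ``$\sum_j p_j=O(m)$'' is trivial (it is at most $\sum_j jp_j\le m$) and not the crux. The paper's induction buys a cleaner, fully explicit per-support bound with no entropy optimisation; your route is more bookkeeping-heavy but makes the multiset structure (distinct polymers versus multiplicities) explicit, which aligns nicely with the multiset form of the cluster expansion actually used downstream.
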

\begin{proof}
    Our proof follows a similar approach to that of Ref.~\cite[Theorem 6]{helmuth2020algorithmic}. We list all connected subgraphs of $G$ with at most $m$ edges in time $\exp(O(m))\cdot\abs{G}^{O(1)}$ by depth-first search. For each of these subgraphs, we consider all ways to label the edges with positive integers such that their sum is at most $m$ in time $\exp(O(m))$. For each of these labelled subgraphs, we consider all clusters that correspond to it, i.e., clusters whose multiset sum over polymers induces the subgraph with multiplicities given by the edge labels.

    We now prove by induction that the number of such clusters for a subgraph with label sum $m$ is at most $(e\Delta(r-1))^{2m}$. This is clearly true when $m=0$. Now suppose that the number of such clusters for a subgraph with label sum $m$ is at most $(e\Delta(r-1))^{2m}$. For a subgraph with label sum $m+1$, we choose an arbitrary vertex in the subgraph and consider all polymers that contain that vertex. By Lemma~\ref{lemma:ConnectedSubraphCount}, there are at most $(e\Delta(r-1))^n$ such polymers of size $n$. By removing each polymer from the subgraph and applying the induction hypothesis, we have that the number of clusters in the subgraph is at most
    \begin{equation}
        \sum_{n=1}^{m+1}(e\Delta(r-1))^n(e\Delta(r-1))^{2(m+1-n)} \leq (e\Delta(r-1))^{2(m+1)}\sum_{n=1}^{m+1}(e\Delta(r-1))^{-n} \leq (e\Delta(r-1))^{2(m+1)}, \notag
    \end{equation}
    completing the induction. These clusters can be enumerated in time $\exp(O(m))$ by depth-first search, completing the proof.
\end{proof}

\begin{lemma}
    \label{lemma:UrsellFunctionAlgorithm}
    The Ursell function $\varphi(H)$ can be computed in time $\exp(O(\abs{H}))$.
\end{lemma}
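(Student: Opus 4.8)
The plan is to reduce the computation of $\varphi(H)$ to the evaluation of the Tutte polynomial of $H$ at a single point, and then invoke a classical recursive algorithm. Recall that $\varphi(H) = \frac{1}{\abs{H}!}\sum_{S}(-1)^{\abs{S}}$, where the sum is over all spanning connected subsets $S \subseteq E(H)$. The quantity $\sum_{S \text{ span., conn.}}(-1)^{\abs{S}}$ is, up to a sign, the evaluation of the Tutte polynomial $T_H(x,y)$ at $(x,y)=(1,0)$; more precisely, for a connected graph $H$ on $n$ vertices with $e$ edges, $\sum_{S \text{ span., conn.}}(-1)^{\abs{S}} = (-1)^{n-1} T_H(1,0)$, and if $H$ is disconnected the sum is zero. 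So the first step is to check connectivity of $H$ (trivial), and if $H$ is connected, compute $T_H(1,0)$.

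Next I would compute $T_H(1,0)$ by the standard deletion–contraction recursion. Naively this gives a running time of $2^{\abs{E(H)}}$, which is $\exp(O(\abs{H}))$ only if $\abs{E(H)} = O(\abs{H})$ — but $H$ can have up to $\binom{\abs{H}}{2}$ edges (indeed in our application $H$ is an incompatibility graph and may be dense), so a little more care is needed. The cleaner route is to use the subgraph-counting form directly: first collapse parallel edges (the sum over spanning connected edge subsets only depends on $H$ through which pairs of vertices are adjacent, since a spanning connected subgraph that uses $\geq 2$ copies of a multi-edge contributes with cancelling signs in a way one must track — so actually I would instead argue as follows). An alternative and more transparent step: since $H$ has at most $\abs{H}$ vertices, enumerate a spanning tree first and then the $\leq \binom{\abs{H}}{2}$ non-tree edges; but that is again too many subsets. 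The safe approach that definitely runs in $\exp(O(\abs{H}))$ is to observe that $\sum_{S \text{ span., conn.}}(-1)^{\abs{S}}$ can be computed by inclusion–exclusion over vertex partitions: using the exponential formula / a Möbius-inversion over the partition lattice of $V(H)$, one writes the number (with signs) of spanning connected subgraphs in terms of $\sum_{S \subseteq E(H)}(-1)^{\abs{S}} = \prod_{\text{multi-edges}}(1 + (-1)) \cdot(\dots)$-type products restricted to each block, and the partition lattice of an $n$-element set has size $B_n = \exp(O(n\log n))$. To keep the bound at $\exp(O(\abs{H}))$ rather than $\exp(O(\abs{H}\log\abs{H}))$ I would instead use the determinant/Tutte route: compute $T_H(1,0)$ via the $\exp(O(\abs{H}))$-time algorithm for the Tutte polynomial on graphs with $n$ vertices (e.g.\ the vertex-based dynamic programming of Björklund–Husfeldt–Kaski–Koivisto, which runs in time $2^n \cdot \mathrm{poly}(n)$ after first replacing each bundle of parallel edges of multiplicity $\mu$ by its effect, which for the point $(1,0)$ just means a parallel edge of multiplicity $\mu$ contributes a factor that can be precomputed in $O(\mu)$ time). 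Then multiply by $(-1)^{\abs{H}-1}/\abs{H}!$.

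The main obstacle is the bookkeeping for multi-edges and for the sign, i.e.\ verifying cleanly that $\sum_{S \text{ span., conn.}}(-1)^{\abs{S}} = (-1)^{\abs{H}-1} T_{H'}(1,0)$ where $H'$ is $H$ with each parallel class of multiplicity $\mu$ appropriately weighted, and checking that the factor from a multiplicity-$\mu$ parallel class is computable in time $\mathrm{poly}(\mu)$ — combined with the fact that $\mu$ itself is bounded by $\abs{E(H)} = \exp(O(\abs{H}))$ in the worst case, which is already within budget. Once this identity is in hand, the algorithm is: (i) if $H$ disconnected, return $0$; (ii) reduce multi-edges, tracking the accumulated scalar; (iii) run the $2^{\abs{H}}\mathrm{poly}(\abs{H})$ Tutte evaluation at $(1,0)$; (iv) multiply by the sign and by $1/\abs{H}!$. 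Every step costs $\exp(O(\abs{H}))$, giving the claimed bound.
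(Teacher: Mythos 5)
Your proposal is correct and follows essentially the same route as the paper: rewrite $\sum_{S\ \mathrm{spanning\ connected}}(-1)^{|S|}$ as a signed evaluation of the Tutte polynomial of $H$ at a single point and compute it with the vertex-exponential algorithm of Bj\"orklund et al., which runs in time $\exp(O(|H|))$ even though $H$ may have $\Theta(|H|^2)$ edges. Incidentally, your evaluation point $(1,0)$ with sign $(-1)^{|H|-1}$ is the correct identity (the paper's displayed $T_H(0,1)$ is a typo), and your concern about parallel edges is moot in the intended application since incompatibility graphs are simple, while the cited algorithm handles multigraphs in any case.
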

\begin{proof}
    Our proof follows that of Ref.~\cite[Lemma~5]{helmuth2020algorithmic}. For a connected graph $H$, we have
    \begin{equation}
        \varphi(H) = \frac{1}{\norm{H}!}\sum_{\substack{S \subseteq E(H) \\ \text{spanning} \\ \text{connected}}}(-1)^{\abs{S}} = -\frac{(-1)^{\abs{H}}}{\norm{H}!}T_H(0,1), \notag
    \end{equation}
    where $T_H(x,y)$ denotes the Tutte polynomial of $H$ defined by
    \begin{equation}
        T_H(x,y) \coloneqq \sum_{S \subseteq E}(x-1)^{k(S)-k(E)}(y-1)^{k(S)+\abs{S}-\abs{H}}. \notag
    \end{equation}
    Here $k(S)$ denotes the number of connected components of the subgraph with edge set $S$. The Ursell function can then be computed in time $\exp(O(\abs{H}))$ by evaluating the Tutte polynomial in time $\exp(O(\abs{H}))$ via an algorithm of B\"orklund et al.~\cite{bjorklund2008computing}. This completes the proof.
\end{proof}

\begin{lemma}
    \label{lemma:TruncatedClusterExpansionApproximationAlgorithm}
    Fix $\Delta,r\in\mathbb{Z}_{\geq2}$. Let $G=(V, E)$ be a multihypergraph of maximum degree at most $\Delta$ and rank at most $r$. Further let $(\mathcal{C}, w, \sim)$ be an abstract polymer model such that the polymers are connected subgraphs of $G$ and that two polymers $\gamma$ and $\gamma'$ are compatible if and only if $V(\gamma) \cap V(\gamma')=\varnothing$. Suppose that, for all polymers $\gamma\in\mathcal{C}$, the weight $w_\gamma$ can be computed in time $\exp(O(\norm{\gamma}))$. Then the truncated cluster expansion $T_m(Z(\mathcal{C},w))$ can be computed in time $\exp(O(m))\cdot\abs{G}^{O(1)}$.
\end{lemma}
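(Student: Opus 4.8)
The plan is to assemble the pieces already developed: Lemma~\ref{lemma:ListClustersAlgorithm} to enumerate the relevant combinatorial objects, Lemma~\ref{lemma:UrsellFunctionAlgorithm} to evaluate the Ursell function on each, and the hypothesis that polymer weights are computable in time $\exp(O(\norm{\gamma}))$ to evaluate the weight products. Concretely, recall that
\begin{equation}
    T_m(Z(\mathcal{C},w)) = \sum_{\substack{\Gamma\in\mathcal{G}_C \\ \norm{\Gamma} \geq m}}\varphi(H_\Gamma)\prod_{\gamma\in\Gamma}w_\gamma, \notag
\end{equation}
so the complement $\log(Z(\mathcal{C},w)) - T_m(Z(\mathcal{C},w))$ is the sum over clusters of size \emph{less than} $m$; since $\varphi$ and $w$ are both defined on the finite set of clusters with $\norm{\Gamma} < m$, it is this finite sum that we actually compute, and by Lemma~\ref{lemma:ConvergenceClusterExpansion} the tail we drop is at most $\abs{G}e^{-m/2}$ in magnitude. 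Strictly speaking, ``computing $T_m$'' here should be read as computing the order-$m$ truncation $\sum_{\norm{\Gamma}<m}\varphi(H_\Gamma)\prod_{\gamma\in\Gamma}w_\gamma$; I would state this reduction up front so the object being computed is unambiguous.

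First I would invoke Lemma~\ref{lemma:ListClustersAlgorithm} to produce a list of all clusters of size at most $m-1$ (equivalently, of size less than $m$) in time $\exp(O(m))\cdot\abs{G}^{O(1)}$. It is cleanest to work with multiset clusters and the identity
\begin{equation}
    \log(Z(\mathcal{C},w)) = \sum_{(\Gamma, m_\Gamma)\in\hat{\mathcal{G}}_C}\hat{\varphi}\!\left(H_{(\Gamma, m_\Gamma)}\right)\prod_{\gamma\in\Gamma}\frac{w_\gamma^{m_\Gamma(\gamma)}}{m_\Gamma(\gamma)!}, \notag
\end{equation}
so that the combinatorial enumeration from Lemma~\ref{lemma:ListClustersAlgorithm} (which is phrased in terms of labelled connected subgraphs with edge-label sum at most $m$) maps directly onto the terms; the count $(e\Delta(r-1))^{2m}$ established there, together with the $\exp(O(m))\cdot\abs{G}^{O(1)}$ enumeration bound on connected subgraphs, shows there are at most $\exp(O(m))\cdot\abs{G}^{O(1)}$ terms. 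For each term I would: (i) compute the weight of each constituent polymer $\gamma$ in time $\exp(O(\norm{\gamma}))$ by hypothesis, forming the product $\prod_\gamma w_\gamma^{m_\Gamma(\gamma)}/m_\Gamma(\gamma)!$ in time $\exp(O(m))$ since $\norm{\Gamma}<m$ and a product of at most $m$ such factors is taken; and (ii) compute $\hat\varphi(H_{(\Gamma,m_\Gamma)})$, equivalently $\varphi$ up to the known combinatorial prefactor, in time $\exp(O(\abs{H_{(\Gamma,m_\Gamma)}}))=\exp(O(m))$ by Lemma~\ref{lemma:UrsellFunctionAlgorithm}, noting that the number of polymers in a cluster of size less than $m$ is at most $m$ (each polymer has size $\geq 1$). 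Summing these $\exp(O(m))\cdot\abs{G}^{O(1)}$ contributions, each computed in $\exp(O(m))$ time, gives total running time $\exp(O(m))\cdot\abs{G}^{O(1)}$.

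I do not expect a serious obstacle here — this is a bookkeeping argument that chains together Lemmas~\ref{lemma:ConnectedSubraphCount}, \ref{lemma:ListClustersAlgorithm}, and \ref{lemma:UrsellFunctionAlgorithm} — but the one point requiring care is the arithmetic-precision issue implicit in ``computing'' a sum of complex numbers: the weights $w_\gamma$ are arbitrary complex numbers supplied by an oracle, so one should either assume they are given to sufficient precision or carry the truncation/rounding error through and observe that it is absorbed into the $\exp(O(m))\cdot\abs{G}^{O(1)}$ budget once $m$ is chosen as $O(\log(\abs{G}/\epsilon))$ in the downstream application. The other mild subtlety is making the correspondence between the labelled-subgraph enumeration of Lemma~\ref{lemma:ListClustersAlgorithm} and multiset clusters precise — in particular that distinct tuples giving the same multiset are correctly accounted for by passing to $\hat\varphi$ and the $1/m_\Gamma(\gamma)!$ weighting — but this is exactly the content of the multiset reformulation recorded in Section~\ref{section:AbstractClusterExpansion}, so I would simply cite it.
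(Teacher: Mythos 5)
Your proposal is correct and follows essentially the same route as the paper's own (very terse) proof: enumerate all clusters of size at most $m$ via Lemma~\ref{lemma:ListClustersAlgorithm}, evaluate the Ursell function on each via Lemma~\ref{lemma:UrsellFunctionAlgorithm}, evaluate the polymer weights in time $\exp(O(\norm{\gamma}))$ by hypothesis, and sum the $\exp(O(m))\cdot\abs{G}^{O(1)}$ terms. Your additional remarks --- reading $T_m$ as the finite sum over clusters of size less than $m$ (the displayed definition's ``$\norm{\Gamma}\geq m$'' is evidently a typo for the truncation), the multiset-versus-tuple bookkeeping, and the arithmetic-precision caveat --- are sensible clarifications of points the paper leaves implicit, not deviations from its argument.
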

\begin{proof}
    We can list all clusters of size at most $m$ in time $\exp(O(m))\cdot\abs{G}^{O(1)}$ by Lemma~\ref{lemma:ListClustersAlgorithm}. For each of these clusters, we can compute the Ursell function in time $\exp(O(m))$ by Lemma~\ref{lemma:UrsellFunctionAlgorithm}, and the polymer weights in time $\exp(O(m))$ by assumption. Hence, the truncated cluster expansion $T_m(Z(\mathcal{C},w))$ can be computed in time $\exp(O(m))\cdot\abs{G}^{O(1)}$.
\end{proof}

Combining Lemma~\ref{lemma:ConvergenceClusterExpansion} with Lemma~\ref{lemma:TruncatedClusterExpansionApproximationAlgorithm} proves Theorem~\ref{theorem:ApproximationAlgorithmAbstractPolymerModelPartitionFunction}.

\subsection{Hardness of Approximation}
\label{section:HardnessOfApproximation}

In this section we establish the hardness of approximating abstract polymer model partition functions. In particular, we establish the hardness of approximating the Ising model partition function at imaginary temperature on bounded-degree graphs, which will be useful for our purposes via reductions. This
setting was studied in Ref.~\cite{galanis2022complexity}, which established hardness of approximation results for this problem. We utilise the results of Ref.~\cite{galanis2022complexity} to obtain significantly sharper bounds when the maximum degree is sufficiently large.

We model an Ising system by a multigraph $G=(V, E)$. At each vertex $v$ of $G$ there is a two-dimensional classical spin space $\{-1,+1\}$. The classical spin space on the multihypergraph is given by $\{-1,+1\}^V$. An interaction $\phi$ assigns a real number $\phi(\varepsilon)$ to each edge $\varepsilon$ of $G$. We are interested in the partition function $Z_{\mathrm{Ising}}(G;\beta)$ at inverse temperature $\beta$, defined by
\begin{equation}
    Z_{\mathrm{Ising}}(G;\beta) \coloneqq \sum_{\sigma\in\{-1,+1\}^V}\prod_{\{u,v\} \in E}e^{-\beta\phi\left(\{u,v\}\right)\sigma_u\sigma_v}. \notag
\end{equation}
We shall normalise the partition function by a multiplicative factor of $\frac{1}{2^\abs{G}}$. Further, we shall assume that $\abs{\phi(\varepsilon)}\leq1$ for all $\varepsilon \in E$, which is always possible by a rescaling of $\beta$. We shall consider the case where the inverse temperature $\beta$ is imaginary, i.e., $\beta=i\theta$ for $\theta\in\mathbb{R}$. The Ising model partition function at imaginary temperature arises naturally in the probability amplitudes of quantum circuits~\cite{mann2019approximation}. Our hardness result concerning the approximation of $Z_{\mathrm{Ising}}(G;i\theta)$ is as follows.
\begin{theorem}
    \label{theorem:IsingModelPartitionFunctionApproximationHardness}
    Fix $\epsilon>0$, $\Delta\in\mathbb{Z}_{\geq3}$, and $\theta\in\mathbb{R}$ such that $\abs{\theta}\geq\frac{3\pi}{5(\Delta-2)}$. It is \emph{\#P-hard} to approximate the Ising model partition function $Z_{\mathrm{Ising}}(G;i\theta)$ up to a multiplicative $\epsilon$-approximation on multigraphs of maximum degree at most $\Delta$.
\end{theorem}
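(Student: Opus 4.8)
The plan is to reduce from a known \#P-hard problem about the Ising partition function at imaginary temperature. The natural candidate is the hardness result of Ref.~\cite{galanis2022complexity}, which establishes \#P-hardness of multiplicatively approximating $Z_{\mathrm{Ising}}(G;i\theta)$ on bounded-degree graphs for a suitable range of $\theta$, typically phrased in terms of a parameter related to $e^{2i\theta}$ lying outside some region of the complex plane (often involving the condition that the associated edge activity is not a real number in a certain interval, or more precisely that $\tan\theta$ exceeds a threshold depending on $\Delta$). The target threshold here, $\abs{\theta}\geq\frac{3\pi}{5(\Delta-2)}$, is sharper than what one gets by a direct invocation for large $\Delta$, so the work is in amplifying the interaction strength while controlling the effective degree.

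First I would recall the precise statement of the relevant theorem from \cite{galanis2022complexity}: it gives hardness on graphs of maximum degree at most some fixed $\Delta_0$ (e.g.\ $3$) whenever $\theta$ avoids a neighbourhood of the ``easy'' points $0$ and $\pi/2$. Then the key gadget step is \emph{edge subdivision and parallel-edge replacement}: replacing a single edge of interaction strength $1$ by a path of $\ell$ edges, or by a bundle of parallel edges, transforms the effective activity $e^{2i\theta}$ in a controlled way (series and parallel composition of two-terminal Ising edges act by Möbius-type transformations on the activity). By choosing the number of edges in series appropriately — essentially $\ell \approx \Delta - 2$ — one can take a graph of maximum degree $3$ for which hardness is known at some base angle $\theta_0$ and realise an equivalent instance of maximum degree at most $\Delta$ whose per-edge angle is the small value $\theta \approx \theta_0/\ell$. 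Tracking the constants through this composition, with the quantitative input from \cite{galanis2022complexity} on exactly which base angles are hard, should yield the stated bound $\abs{\theta}\geq\frac{3\pi}{5(\Delta-2)}$. One must also check that the normalisation by $2^{-\abs{G}}$ and the rescaling $\abs{\phi(\varepsilon)}\le 1$ are preserved under the gadget (they are, since subdivision only introduces new spins each contributing a factor absorbed into the normalisation, and parallel edges keep $\abs{\phi}\le 1$).

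The main obstacle I anticipate is the bookkeeping of the effective-activity transformation under series composition for \emph{imaginary} temperature: unlike the ferromagnetic real case, the Möbius action on $e^{2i\theta}$ can wrap around the unit circle, so one needs to verify that the iterated composition lands in the hardness region of \cite{galanis2022complexity} rather than accidentally hitting one of the excluded ``easy'' points, and that $\Delta$ copies in series genuinely suffice (as opposed to needing $\Theta(\Delta \log \Delta)$ copies, which would weaken the bound). A secondary subtlety is that the reduction must be approximation-preserving: a multiplicative $\epsilon$-approximation for the degree-$\Delta$ instance must translate back to a multiplicative $\epsilon'$-approximation for the degree-$3$ instance with $\epsilon'$ not too small, which holds because the gadget contributes only a fixed nonzero multiplicative constant per edge (bounded away from $0$ for the relevant angles), but this boundedness needs to be checked explicitly. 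Once these points are settled, the theorem follows by composing the reduction with the hardness result of \cite{galanis2022complexity}.
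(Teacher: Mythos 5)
Your overall strategy --- reduce from the \#P-hardness result of \cite{galanis2022complexity} on maximum-degree-$3$ instances via a local edge gadget that trades a large angle for many small-angle edges --- is the same as the paper's, but the gadget you foreground (series subdivision, with M\"obius bookkeeping of the effective activity and $\ell\approx\Delta-2$ edges in a path) does not work, and the correct gadget is the one you mention only in passing. At imaginary temperature, series composition obeys the usual rule $\tanh\beta_{\mathrm{eff}}=\prod_j\tanh\beta_j$, and with $\beta_j=i\theta$ this gives $\tanh\beta_{\mathrm{eff}}=(i\tan\theta)^{\ell}$: the effective coupling leaves the imaginary-temperature family and its strength \emph{shrinks} like $\tan^{\ell}\theta$, so a path of small-angle edges can never simulate a hard large-angle edge; the per-edge angle is not $\theta_0/\ell$. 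A second sign the mechanism is misidentified is that subdivision does not increase the maximum degree at all, so it could not account for the $\Delta$-dependence of the threshold $\frac{3\pi}{5(\Delta-2)}$; the degree dependence enters precisely because the correct gadget is \emph{parallel} composition, which raises the degree.

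The paper's proof is exactly the parallel-bundle ($k$-thickening) reduction: replacing every edge of a degree-$3$ multigraph $G$ by $k$ parallel edges of coupling $\theta/k$ gives the identity $Z_{\mathrm{Ising}}(G;i\theta)=Z_{\mathrm{Ising}}\bigl(G_k;\tfrac{i\theta}{k}\bigr)$ exactly (the couplings on parallel edges simply add in the exponent), the maximum degree becomes $3k$, and choosing $k=\lfloor\Delta/3\rfloor\geq(\Delta-2)/3$ turns the degree-$3$ hardness for $|\theta|\geq\pi/5$ from \cite{galanis2022complexity} into hardness at angles $|\theta|\geq\frac{\pi}{5k}$, which is implied by $|\theta|\geq\frac{3\pi}{5(\Delta-2)}$. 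Because the two partition functions are literally equal (not merely proportional), there is no approximation-preservation bookkeeping, no per-edge multiplicative constants to bound away from zero, and no risk of the composed activity wandering onto an ``easy'' point --- the issues you flag as the main obstacles simply do not arise. To repair your write-up, drop the series/M\"obius analysis and carry out the thickening argument; the rest of your outline (invoking the degree-$3$ result and checking the normalisation and $|\phi(\varepsilon)|\leq1$ are preserved) then goes through immediately.
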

\begin{proof}
    By Ref.~\cite[Theorem 3]{galanis2022complexity}, it is \#P-hard to approximate the Ising model partition function $Z_{\mathrm{Ising}}(G;i\theta)$ up to a multiplicative $\epsilon$-approximation on multigraphs of maximum degree $3$ for $\abs{\theta}\geq\frac{\pi}{5}>\arctan\left(\frac{1}{\sqrt{2}}\right)$. For a graph $G$ of maximum degree $3$ and a positive integer $k\in\mathbb{Z}^+$, let $G_k$ denote the \emph{$k$-thickening} of $G$, that is, the multigraph formed by replacing each edge of $G$ with $k$ parallel edges. Note that the maximum degree of $G_k$ is precisely $3k$. Now observe that, for any $k\in\mathbb{Z}^+$, we have $Z_{\mathrm{Ising}}(G;i\theta)=Z_{\mathrm{Ising}}\left(G_k;\frac{i\theta}{k}\right)$. Hence, it is \#P-hard to approximate $Z_{\mathrm{Ising}}(G;i\theta)$ up to a multiplicative $\epsilon$-approximation on multigraphs of maximum degree at most $3k$ for $\theta\geq\frac{\pi}{5k}$. It follows that it is \#P-hard to approximate $Z_{\mathrm{Ising}}(G;i\theta)$ up to a multiplicative $\epsilon$-approximation on multigraphs of maximum degree at most $\Delta$ for $\theta\geq\frac{3\pi}{5(\Delta-2)}$, completing the proof.
\end{proof}

\begin{remark}
    The proof of Theorem~\ref{theorem:IsingModelPartitionFunctionApproximationHardness} gives a slightly sharper bound. Further, the proof technique may be applied to the case of complex $\beta$.
\end{remark}

This offers a significant improvement over Ref.~\cite{galanis2022complexity} when $\Delta\geq7$, which applies when $\abs{\theta}>\arctan\left(\frac{1}{\sqrt{\Delta-1}}\right)$. In Section~\ref{section:Applications} we shall apply Theorem~\ref{theorem:ApproximationAlgorithmAbstractPolymerModelPartitionFunction} to establish the hardness of approximation of several quantum problems. We shall now show that the Ising model partition function $Z_{\mathrm{Ising}}(G;\beta)$ admits an abstract polymer model representation. This is formalised by the following lemma.
\begin{lemma}
    \label{lemma:IsingModelPartitionFunctionAbstractPolymerModel}
    The Ising model partition function $Z_{\mathrm{Ising}}(G;\beta)$ admits the following abstract polymer model representation.
    \begin{equation}
        Z_{\mathrm{Ising}}(G;\beta) = \sum_{\Gamma\in\mathcal{G}}\prod_{\gamma\in\Gamma}w_\gamma, \notag
    \end{equation}
    where
    \begin{equation}
        w_\gamma \coloneqq \frac{1}{2^\abs{\gamma}}\sum_{\sigma\in\{-1,+1\}^{V(\gamma)}}\prod_{\{u,v\} \in E(\gamma)}\left(e^{-\beta\phi\left(\{u,v\}\right)\sigma_u\sigma_v}-1\right). \notag
    \end{equation}
\end{lemma}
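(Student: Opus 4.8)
The plan is to carry out the standard high-temperature (Mayer) expansion of the Ising partition function. Starting from the normalised definition
\[
    Z_{\mathrm{Ising}}(G;\beta) = \frac{1}{2^{\abs{V}}}\sum_{\sigma\in\{-1,+1\}^V}\prod_{\{u,v\} \in E}e^{-\beta\phi(\{u,v\})\sigma_u\sigma_v},
\]
I would write each edge factor as $e^{-\beta\phi(\{u,v\})\sigma_u\sigma_v} = 1 + \bigl(e^{-\beta\phi(\{u,v\})\sigma_u\sigma_v}-1\bigr)$ and multiply out, so that the product over $E$ becomes $\sum_{E'\subseteq E}\prod_{\{u,v\}\in E'}\bigl(e^{-\beta\phi(\{u,v\})\sigma_u\sigma_v}-1\bigr)$. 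Interchanging the two finite sums over $\sigma$ and over $E'$ then gives
\[
    Z_{\mathrm{Ising}}(G;\beta) = \frac{1}{2^{\abs{V}}}\sum_{E'\subseteq E}\,\sum_{\sigma\in\{-1,+1\}^V}\prod_{\{u,v\}\in E'}\bigl(e^{-\beta\phi(\{u,v\})\sigma_u\sigma_v}-1\bigr).
\]

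Next I would fix $E'\subseteq E$ and decompose the spanning subgraph $(V,E')$ into its connected components, separating the set $I$ of isolated vertices (those incident to no edge of $E'$) from the components $\gamma_1,\dots,\gamma_k$ containing at least one edge. Since the inner product factorises across disjoint vertex sets and each isolated vertex contributes a free factor of $2$ to the spin sum, the inner sum equals $2^{\abs{I}}\prod_{j=1}^{k}\bigl(\sum_{\sigma\in\{-1,+1\}^{V(\gamma_j)}}\prod_{\{u,v\}\in E(\gamma_j)}\bigl(e^{-\beta\phi(\{u,v\})\sigma_u\sigma_v}-1\bigr)\bigr)$. Splitting the normalisation as $\tfrac{1}{2^{\abs{V}}}=\tfrac{1}{2^{\abs{I}}}\prod_{j=1}^{k}\tfrac{1}{2^{\abs{\gamma_j}}}$, which is valid because $\abs{V}=\abs{I}+\sum_{j}\abs{\gamma_j}$, the factor $2^{\abs{I}}$ cancels and each component contributes precisely $w_{\gamma_j}$. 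Hence $Z_{\mathrm{Ising}}(G;\beta)=\sum_{E'\subseteq E}\prod_{j=1}^{k}w_{\gamma_j}$.

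Finally I would observe that $E'\mapsto\{\gamma_1,\dots,\gamma_k\}$ is a bijection between subsets of $E$ and admissible sets of polymers: the components of $(V,E')$ are connected subgraphs of $G$ that are pairwise vertex-disjoint, hence admissible, and conversely an admissible $\Gamma$ is recovered from $E'=\bigcup_{\gamma\in\Gamma}E(\gamma)$, with $E'=\varnothing$ corresponding to the empty admissible set and an empty product equal to $1$ (consistent with $Z_{\mathrm{Ising}}$ being normalised so that this term is $1$). Rewriting the sum over $E'$ as a sum over $\mathcal{G}$ gives the claimed representation. I expect the only delicate point to be the bookkeeping of the powers of $2$ — matching the global normalisation $\tfrac{1}{2^{\abs{V}}}$ against the isolated-vertex contributions so that each polymer weight emerges as the stated purely local quantity $w_\gamma$; the rest is routine.
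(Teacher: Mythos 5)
Your proposal is correct and is essentially the same argument as the paper's: both are the standard high-temperature (Mayer) expansion, factorised over the connected components of each edge subset, with vertex-disjoint components identified with admissible polymer sets; the paper merely packages the edge-factor expansion through its inclusion--exclusion lemma (Lemma~\ref{lemma:InclusionExclusion}) rather than writing $e^{-\beta\phi\sigma_u\sigma_v}=1+(e^{-\beta\phi\sigma_u\sigma_v}-1)$ and multiplying out directly. Your explicit bookkeeping of the isolated-vertex factors of $2$ against the global normalisation $2^{-\abs{V}}$ is exactly the step the paper leaves implicit, so no gap remains.
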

\begin{proof}
    By applying Lemma~\ref{lemma:InclusionExclusion} with $f(E)=\frac{1}{2^\abs{G}}\sum_{\sigma\in\{-1,+1\}^V}\prod_{\{u,v\} \in E}e^{-\beta\phi\left(\{u,v\}\right)\sigma_u\sigma_v}$, we have
    \begin{align}
        Z_{\mathrm{Ising}}(G;\beta) &= \frac{1}{2^\abs{G}}\sum_{\sigma\in\{-1,+1\}^V}\prod_{\{u,v\} \in E}e^{-\beta\phi\left(\{u,v\}\right)\sigma_u\sigma_v} \notag \\
        &= \frac{1}{2^\abs{G}}\sum_{S \subseteq E}(-1)^\abs{S}\sum_{T \subseteq S}(-1)^\abs{T}\sum_{\sigma\in\{-1,+1\}^V}\prod_{\{u,v\} \in T}e^{-\beta\phi\left(\{u,v\}\right)\sigma_u\sigma_v}. \notag
    \end{align}
    For a subset $S \subseteq E$, let $\Gamma_S$ denote the maximally connected components of $S$. Recall that polymers are connected subgraphs of $G$ with compatibility relations defined by vertex disjointness. For a polymer $\gamma$, $\abs{\gamma}$ denotes the order and $\norm{\gamma}$ denotes the size of the corresponding subgraph. By factorising over these components, we have
    \begin{align}
        Z_{\mathrm{Ising}}(G;\beta) &= \sum_{S \subseteq E}\prod_{\gamma \in \Gamma_S}(-1)^\norm{\gamma}\sum_{T \subseteq E(\gamma)}(-1)^\abs{T}\frac{1}{2^\abs{\gamma}}\sum_{\sigma\in\{-1,+1\}^{V(\gamma)}}\prod_{\{u,v\} \in T}e^{-\beta\phi\left(\{u,v\}\right)\sigma_u\sigma_v} \notag \\
        &= \sum_{S \subseteq E}\prod_{\gamma \in \Gamma_S}\frac{1}{2^\abs{\gamma}}\sum_{\sigma\in\{-1,+1\}^{V(\gamma)}}\prod_{\{u,v\} \in E(\gamma)}\left(e^{-\beta\phi\left(\{u,v\}\right)\sigma_u\sigma_v}-1\right) \notag \\
        &= \sum_{\Gamma\in\mathcal{G}}\prod_{\gamma\in\Gamma}w_\gamma. \notag
    \end{align}
    This completes the proof.
\end{proof}

We note that Lemma~\ref{lemma:IsingModelPartitionFunctionAbstractPolymerModel} can be combined with Theorem~\ref{theorem:ApproximationAlgorithmAbstractPolymerModelPartitionFunction} to establish an efficient approximation algorithm for $Z_{\mathrm{Ising}}(G;\beta)$ on graphs of maximum degree at most $\Delta$ when $\abs{\beta}\leq\frac{1}{e^4\Delta}$. Efficient approximation algorithms with significantly sharper bounds have previously been established~\cite{mann2019approximation, galanis2022complexity}. In particular, Ref.~\cite{galanis2022complexity} established an efficient approximation algorithm that applies when $\abs{\beta}<\frac{\pi}{4(\Delta-1)}$. In the case when $\beta$ is real, the exact point of a computational complexity transition is known under the complexity-theoretic assumption that RP is not equal to NP due to the approximation algorithm of Ref.~\cite{sinclair2014approximation} and the hardness of approximation results of Refs.~\cite{sly2012computational, galanis2016inapproximability}.

\section{Applications}
\label{section:Applications}

In this section we apply our algorithmic framework to establish efficient approximation algorithms for classes of quantum problems. This includes probability amplitudes, expectation values, partition functions, and thermal expectation values. We apply our hardness of approximation framework to show the optimality of our algorithmic conditions for probability amplitudes and partition functions under complexity-theoretic assumptions. Further, we show that our algorithmic condition is almost optimal for expectation values and optimal for thermal expectation values in the sense of zero freeness.

\subsection{Probability Amplitudes}
\label{section:ProbabilityAmplitudes}

In this section we study the problem of approximating probability amplitudes of quantum circuits. This problem is known to be \#P-hard in general~\cite{goldberg2017complexity}; however, we show that, for a class of quantum circuits close to the identity, this problem admits an efficient approximation algorithm. Further, we show that this algorithmic condition is optimal under complexity-theoretic assumptions.

We model a quantum circuit by a multihypergraph $G=(V, E)$. At each vertex $v$ of $G$ there is a $d$-dimensional Hilbert space $\mathcal{H}_v$ with $d<\infty$. The Hilbert space on the multihypergraph is given by $\mathcal{H}_G\coloneqq\bigotimes_{v \in V}\mathcal{H}_v$. An interaction $U$ assigns a unitary operator $U_\varepsilon$ on $\mathcal{H}_\varepsilon\coloneqq\bigotimes_{v \in \varepsilon}\mathcal{H}_v$ to each edge $\varepsilon$ of $G$. We shall assume there is an implicit ordering of the unitary operators given by the edge labels which determines the order in which products of these operators are taken. The quantum circuit on $G$ is defined by $U_G\coloneqq\prod_{\varepsilon \in E}U_\varepsilon$. We are interested in the probability amplitude $A_{U_G}$, defined by $A_{U_G}\coloneqq\matrixel{0^\abs{G}}{U_G}{0^\abs{G}}$. Note that any probability amplitude may be expressed in this form by a simple modification of the circuit. Our algorithmic result concerning the approximation of $A_{U_G}$ is as follows.
\begin{theorem}
    \label{theorem:ApproximationAlgorithmQuantumProbabilityAmplitude}
    Fix $\Delta,r\in\mathbb{Z}_{\geq2}$. Let $G=(V, E)$ be a multihypergraph of maximum degree at most $\Delta$ and rank at most $r$. Suppose that, for all $\varepsilon \in E$,
    \begin{equation}
        \norm{U_\varepsilon-\mathbb{I}} \leq \frac{1}{e^3\Delta\binom{r}{2}}. \notag
    \end{equation}
    Then the cluster expansion for $\log(A_{U_G})$ converges absolutely, $A_{U_G}\neq0$, and there is a fully polynomial-time approximation scheme for $A_{U_G}$.
\end{theorem}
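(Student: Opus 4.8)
The plan is to realise $A_{U_G}$ as the partition function of an abstract polymer model of the kind handled by Theorem~\ref{theorem:ApproximationAlgorithmAbstractPolymerModelPartitionFunction}, and then to invoke that theorem directly. Writing each $U_\varepsilon = \mathbb{I} + (U_\varepsilon - \mathbb{I})$ and expanding the ordered product $U_G = \prod_{\varepsilon \in E}U_\varepsilon$ by distributivity, while retaining the ordering on the edges given by their labels, one obtains
\begin{equation}
    A_{U_G} = \sum_{S \subseteq E}\matrixel{0^\abs{G}}{\prod_{\varepsilon \in S}(U_\varepsilon - \mathbb{I})}{0^\abs{G}}, \notag
\end{equation}
where the product over $\varepsilon \in S$ is taken in increasing order of the edge labels. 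This is the quantum-circuit analogue of the inclusion--exclusion step used in Lemma~\ref{lemma:IsingModelPartitionFunctionAbstractPolymerModel}.

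Next I would show that each term factorises over the maximally connected components of $S$ (viewed as a subgraph of $G$). If $\gamma$ and $\gamma'$ are vertex-disjoint connected subgraphs of $G$, then the operators $U_\varepsilon - \mathbb{I}$ with $\varepsilon \in E(\gamma)$ act nontrivially only on $\mathcal{H}_{V(\gamma)}$ and those with $\varepsilon \in E(\gamma')$ only on $\mathcal{H}_{V(\gamma')}$, so they commute; consequently the interleaving of edges belonging to distinct components of $S$ is immaterial, and the ordered product over $S$ equals the tensor product, over the components $\gamma$ of $S$, of the ordered products over $E(\gamma)$. Since the all-zeros reference state is a product state across the vertices, the amplitude factorises in the same way, with vertices not covered by $S$ contributing a unit factor. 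Defining, for every connected subgraph $\gamma$ of $G$ with at least one edge,
\begin{equation}
    w_\gamma \coloneqq \matrixel{0^\abs{\gamma}}{\prod_{\varepsilon \in E(\gamma)}(U_\varepsilon - \mathbb{I})}{0^\abs{\gamma}}, \notag
\end{equation}
with polymer size $\norm{\gamma}$ equal to the number of edges of $\gamma$, and declaring two polymers compatible exactly when they are vertex-disjoint, the sum over $S$ becomes a sum over admissible sets of polymers, so that $A_{U_G} = Z(\mathcal{C}, w)$.

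It then remains to verify the two hypotheses of Theorem~\ref{theorem:ApproximationAlgorithmAbstractPolymerModelPartitionFunction}. Submultiplicativity of the operator norm gives $\abs{w_\gamma} \leq \prod_{\varepsilon \in E(\gamma)}\norm{U_\varepsilon - \mathbb{I}} \leq \left(\frac{1}{e^3\Delta\binom{r}{2}}\right)^{\norm{\gamma}}$, which is exactly the required weight bound. For the running time, a connected polymer $\gamma$ with $\norm{\gamma}$ edges spans at most $(r-1)\norm{\gamma}+1$ vertices, so $\mathcal{H}_{V(\gamma)}$ has dimension at most $d^{(r-1)\norm{\gamma}+1}$; computing $w_\gamma$ amounts to multiplying at most $\norm{\gamma}$ matrices of this dimension and extracting a single entry, which takes time $\operatorname{poly}\!\left(d^{(r-1)\norm{\gamma}+1}\right) = \exp(O(\norm{\gamma}))$ since $d$ and $r$ are constants. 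Applying Theorem~\ref{theorem:ApproximationAlgorithmAbstractPolymerModelPartitionFunction} to $(\mathcal{C}, w, \sim)$ then yields absolute convergence of the cluster expansion for $\log(A_{U_G})$, $A_{U_G}\neq0$, and the fully polynomial-time approximation scheme.

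I expect the main obstacle to be the careful bookkeeping of operator orderings in the first two steps: one must confirm that, after expanding the non-commuting product, the contribution of each $S$ depends only on its connected components and not on the way their edges are interleaved in the global order, so that the weights $w_\gamma$ are well defined and the polymer partition function reproduces $A_{U_G}$ exactly. Once this is in place, the remaining estimates are routine and closely parallel those behind Lemma~\ref{lemma:IsingModelPartitionFunctionAbstractPolymerModel}.
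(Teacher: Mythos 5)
Your proposal is correct and follows essentially the same route as the paper: the direct expansion of $\prod_{\varepsilon}(\mathbb{I}+(U_\varepsilon-\mathbb{I}))$ is just the inclusion--exclusion step of Lemma~\ref{lemma:QuantumProbabilityAmplitudeAbstractPolymerModel} written in the other direction, and your factorisation over vertex-disjoint components, the submultiplicative weight bound, and the $\exp(O(\norm{\gamma}))$ weight computation match Lemmas~\ref{lemma:QuantumProbabilityAmplitudeAbstractPolymerModel}--\ref{lemma:QuantumProbabilityAmplitudeWeightAlgorithm} before invoking Theorem~\ref{theorem:ApproximationAlgorithmAbstractPolymerModelPartitionFunction}. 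Your explicit attention to the ordering of the non-commuting factors is exactly the point the paper handles implicitly via the fixed edge-label ordering, so there is no gap.
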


\begin{remark}
    Theorem~\ref{theorem:ApproximationAlgorithmQuantumProbabilityAmplitude} also applies to probability amplitudes of the form $\matrixel{\psi}{U_G}{\psi}$, where $\ket{\psi}$ is a product state over qudits, i.e., $\ket{\psi}\coloneqq\bigotimes_{v \in V}\ket{\psi_v}$. Further, Theorem~\ref{theorem:ApproximationAlgorithmQuantumProbabilityAmplitude} applies to unitary operators of the form $U_\varepsilon=e^{-i\theta\Phi(\varepsilon)}$, where $\theta$ is a real number such that $\abs{\theta}\leq\frac{1}{e^4\Delta\binom{r}{2}}$ and $\Phi(\varepsilon)$ is a self-adjoint operator on $\mathcal{H}_\varepsilon$ with $\norm{\Phi(\varepsilon)}\leq1$.
\end{remark}

We prove Theorem~\ref{theorem:ApproximationAlgorithmQuantumProbabilityAmplitude} by showing that the conditions required to apply Theorem~\ref{theorem:ApproximationAlgorithmAbstractPolymerModelPartitionFunction} are satisfied. That is, we show that (1) the probability amplitude $A_{U_G}$ admits a suitable abstract polymer model representation, (2) the polymer weights satisfy the desired bound, and (3) the polymer weights can be computed in the desired time. This is achieved in the following three lemmas.
\begin{lemma}
    \label{lemma:QuantumProbabilityAmplitudeAbstractPolymerModel}
    The probability amplitude $A_{U_G}$ admits the following abstract polymer model representation.
    \begin{equation}
        A_{U_G} = \sum_{\Gamma\in\mathcal{G}}\prod_{\gamma\in\Gamma}w_\gamma, \notag
    \end{equation}
    where
    \begin{equation}
        w_\gamma \coloneqq \matrixel{0^\abs{\gamma}}{\left[\prod_{\varepsilon \in E(\gamma)}(U_\varepsilon-\mathbb{I})\right]}{0^\abs{\gamma}}. \notag
    \end{equation}
\end{lemma}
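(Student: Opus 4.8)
The plan is to mirror the proof of Lemma~\ref{lemma:IsingModelPartitionFunctionAbstractPolymerModel}, replacing the classical spin sum with the quantum amplitude and the edge weights $e^{-\beta\phi\sigma_u\sigma_v}$ with the operators $U_\varepsilon$. First I would apply the principle of inclusion-exclusion, Lemma~\ref{lemma:InclusionExclusion}, to the set-function $f(S) \coloneqq \matrixel{0^{\abs{G}}}{\prod_{\varepsilon \in S}U_\varepsilon}{0^{\abs{G}}}$, where the product respects the edge ordering. This gives
\begin{equation}
    A_{U_G} = f(E) = \sum_{S \subseteq E}(-1)^{\abs{S}}\sum_{T \subseteq S}(-1)^{\abs{T}}\matrixel{0^{\abs{G}}}{\prod_{\varepsilon \in T}U_\varepsilon}{0^{\abs{G}}}. \notag
\end{equation}

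Next I would rewrite the inner double sum. For a fixed $S$, the quantity $\sum_{T \subseteq S}(-1)^{\abs{T}}\prod_{\varepsilon \in T}U_\varepsilon$ should be recognised as $\prod_{\varepsilon \in S}(U_\varepsilon - \mathbb{I})$, where again the product is taken in the edge order; the sign $(-1)^{\abs{S}}$ is absorbed since expanding $\prod_{\varepsilon\in S}(U_\varepsilon-\mathbb{I})$ produces exactly the terms $(-1)^{\abs{S}-\abs{T}}\prod_{\varepsilon\in T}U_\varepsilon$. Thus $A_{U_G} = \sum_{S\subseteq E}\matrixel{0^{\abs{G}}}{\prod_{\varepsilon\in S}(U_\varepsilon-\mathbb{I})}{0^{\abs{G}}}$. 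I would then decompose $S$ into its maximally connected components $\Gamma_S$ (as subgraphs of $G$). Since the operator $U_\varepsilon - \mathbb{I}$ acts only on $\mathcal{H}_\varepsilon \subseteq \mathcal{H}_{V(\gamma)}$ for the component $\gamma$ containing $\varepsilon$, operators from distinct components act on disjoint tensor factors and commute; the product over $S$ therefore factorises over $\Gamma_S$, and $\ket{0^{\abs{G}}} = \bigotimes_v \ket{0}$ factorises too. This yields $A_{U_G} = \sum_{S\subseteq E}\prod_{\gamma\in\Gamma_S} w_\gamma$ with $w_\gamma = \matrixel{0^{\abs{\gamma}}}{\prod_{\varepsilon\in E(\gamma)}(U_\varepsilon-\mathbb{I})}{0^{\abs{\gamma}}}$, and since every subset $S\subseteq E$ corresponds bijectively to a unique admissible (vertex-disjoint) family of connected components, the sum over $S$ becomes $\sum_{\Gamma\in\mathcal{G}}\prod_{\gamma\in\Gamma}w_\gamma$, as claimed.

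The main subtlety, and the one point I would be careful about, is the non-commutativity of the $U_\varepsilon$: the implicit edge ordering must be propagated consistently through each step so that ``$\prod_{\varepsilon\in T}U_\varepsilon$'' and ``$\prod_{\varepsilon\in E(\gamma)}(U_\varepsilon-\mathbb{I})$'' always mean the ordered product. The factorisation over connected components is valid precisely because operators associated with vertex-disjoint components act on disjoint subsystems and hence commute, so the relative order of blocks is immaterial even though the order within a block is not. Everything else is a direct transcription of the Ising argument, so I do not expect any further obstacle.
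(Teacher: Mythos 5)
Your proposal is correct and follows essentially the same route as the paper's proof: inclusion-exclusion with $f(S)=\matrixel{0^{\abs{G}}}{\prod_{\varepsilon\in S}U_\varepsilon}{0^{\abs{G}}}$, recognising the alternating inner sum as the ordered product $\prod_{\varepsilon\in S}(U_\varepsilon-\mathbb{I})$, factorising over the maximally connected components of $S$, and identifying subsets of $E$ with admissible polymer sets. The only difference is cosmetic (you collapse the alternating sum before factorising, the paper does it after), and your explicit remarks on edge ordering and commutation across vertex-disjoint components just make precise what the paper leaves implicit.
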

\begin{proof}
    By applying Lemma~\ref{lemma:InclusionExclusion} with $f(E)=\matrixel{0^\abs{G}}{\left(\prod_{\varepsilon \in E}U_\varepsilon\right)}{0^\abs{G}}$, we have
    \begin{align}
        A_{U_G} &= \matrixel{0^\abs{G}}{U_G}{0^\abs{G}} \notag \\
        &= \sum_{S \subseteq E}(-1)^\abs{S}\sum_{T \subseteq S}(-1)^\abs{T}\matrixel{0^\abs{G}}{\left(\prod_{\varepsilon \in T}U_\varepsilon\right)}{0^\abs{G}}. \notag
    \end{align}
    For a subset $S \subseteq E$, let $\Gamma_S$ denote the maximally connected components of $S$. By factorising over these components, we have
    \begin{align}
        A_{U_G} &= \sum_{S \subseteq E}\prod_{\gamma \in \Gamma_S}(-1)^\norm{\gamma}\sum_{T \subseteq E(\gamma)}(-1)^\abs{T}\matrixel{0^\abs{\gamma}}{\left(\prod_{\varepsilon \in T}U_\varepsilon\right)}{0^\abs{\gamma}} \notag \\
        &= \sum_{S \subseteq E}\prod_{\gamma \in \Gamma_S}\matrixel{0^\abs{\gamma}}{\left[\prod_{\varepsilon \in E(\gamma)}(U_\varepsilon-\mathbb{I})\right]}{0^\abs{\gamma}} \notag \\
        &= \sum_{\Gamma\in\mathcal{G}}\prod_{\gamma\in\Gamma}w_\gamma. \notag
    \end{align}
    This completes the proof.
\end{proof}

\begin{lemma}
    \label{lemma:QuantumProbabilityAmplitudeWeightBound}
    Fix $\Delta,r\in\mathbb{Z}_{\geq2}$. Let $G=(V, E)$ be a multihypergraph of maximum degree at most $\Delta$ and rank at most $r$. Suppose that, for all $\varepsilon \in E$,
    \begin{equation}
        \norm{U_\varepsilon-\mathbb{I}} \leq \frac{1}{e^3\Delta\binom{r}{2}}. \notag
    \end{equation}
    Then, for all polymers $\gamma\in\mathcal{C}$, the weight $w_\gamma$ satisfies
    \begin{equation}
        \abs{w_\gamma} \leq \left(\frac{1}{e^3\Delta\binom{r}{2}}\right)^\norm{\gamma}. \notag
    \end{equation}
\end{lemma}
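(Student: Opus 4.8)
The plan is to pass from the matrix element defining $w_\gamma$ to the operator norm of the product $\prod_{\varepsilon \in E(\gamma)}(U_\varepsilon-\mathbb{I})$, and then to control that norm factor by factor using the hypothesis.

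First I would note that in the definition of $w_\gamma$ each operator $U_\varepsilon$ is to be understood as acting on $\mathcal{H}_\gamma=\bigotimes_{v\in V(\gamma)}\mathcal{H}_v$, namely as $U_\varepsilon$ on $\mathcal{H}_\varepsilon$ tensored with the identity on the remaining qudits of $\gamma$. Since tensoring with the identity leaves the operator norm unchanged, $\norm{U_\varepsilon-\mathbb{I}}$ computed on $\mathcal{H}_\gamma$ equals $\norm{U_\varepsilon-\mathbb{I}}$ computed on $\mathcal{H}_\varepsilon$, so the hypothesis $\norm{U_\varepsilon-\mathbb{I}}\leq\frac{1}{e^3\Delta\binom{r}{2}}$ applies verbatim to each factor of the product.

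Next, since $\ket{0^{\abs{\gamma}}}$ is a unit vector, the Cauchy--Schwarz inequality gives $\abs{w_\gamma}\leq\norm{\prod_{\varepsilon\in E(\gamma)}(U_\varepsilon-\mathbb{I})}$, and submultiplicativity of the operator norm then yields $\abs{w_\gamma}\leq\prod_{\varepsilon\in E(\gamma)}\norm{U_\varepsilon-\mathbb{I}}\leq\left(\frac{1}{e^3\Delta\binom{r}{2}}\right)^{\abs{E(\gamma)}}$. Recalling that $\norm{\gamma}=\abs{E(\gamma)}$ by definition of the size of a subgraph, this is exactly the claimed bound. The argument uses only elementary properties of the operator norm; in particular the ordering of the factors in the product, and hence the implicit edge-labelling of $\gamma$, plays no role in the norm estimate, so there is no real obstacle — the only point to state carefully is the embedding of the edge operators into $\mathcal{H}_\gamma$.
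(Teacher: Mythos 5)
Your argument is correct and is essentially the paper's proof: the paper likewise bounds the matrix element by the operator norm of the product and then uses submultiplicativity to get $\abs{w_\gamma}\leq\prod_{\varepsilon\in E(\gamma)}\norm{U_\varepsilon-\mathbb{I}}\leq\left(e^3\Delta\binom{r}{2}\right)^{-\norm{\gamma}}$. Your additional remarks about embedding each $U_\varepsilon$ into $\mathcal{H}_\gamma$ and the irrelevance of the factor ordering are just careful spelling-out of steps the paper leaves implicit.
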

\begin{proof}
    Fix a polymer $\gamma$. We have
    \begin{equation}
        \abs{w_\gamma} \leq \prod_{\varepsilon \in E(\gamma)}\norm{U_\varepsilon-\mathbb{I}} \leq \left(\frac{1}{e^3\Delta\binom{r}{2}}\right)^\norm{\gamma}, \notag
    \end{equation}
    completing the proof.
\end{proof}

\begin{lemma}
    \label{lemma:QuantumProbabilityAmplitudeWeightAlgorithm}
    The weight $w_\gamma$ of a polymer $\gamma$ can be computed in time $\exp(O(\norm{\gamma}))$.
\end{lemma}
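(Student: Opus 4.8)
The plan is to show directly that the weight $w_\gamma = \matrixel{0^\abs{\gamma}}{\left[\prod_{\varepsilon \in E(\gamma)}(U_\varepsilon-\mathbb{I})\right]}{0^\abs{\gamma}}$ can be evaluated by a brute-force matrix computation on the Hilbert space $\mathcal{H}_{V(\gamma)} = \bigotimes_{v \in V(\gamma)}\mathcal{H}_v$, whose dimension is $d^\abs{\gamma}$. The key observation is that $\abs{\gamma}$, the order of the polymer (number of vertices), is controlled by its size $\norm{\gamma}$ (number of edges): since $\gamma$ is a connected subgraph of a multihypergraph of rank at most $r$, we have $\abs{\gamma} \leq (r-1)\norm{\gamma}+1$, exactly the bound already used in the proof of Lemma~\ref{lemma:ConvergenceClusterExpansion}. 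Hence $d^{\abs{\gamma}} \leq d^{(r-1)\norm{\gamma}+1} = \exp(O(\norm{\gamma}))$, with the constant depending on the fixed parameters $d$ and $r$.

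The steps, in order, would be: first, identify $V(\gamma)$ and $E(\gamma)$ from the description of $\gamma$ in time polynomial in $\norm{\gamma}$. Second, for each edge $\varepsilon \in E(\gamma)$, form the operator $U_\varepsilon - \mathbb{I}$ on $\mathcal{H}_\varepsilon$ and embed it into $\mathcal{H}_{V(\gamma)}$ by tensoring with the identity on the remaining vertices; each such operator is a $d^{\abs{\gamma}} \times d^{\abs{\gamma}}$ matrix. Third, multiply these matrices together in the order prescribed by the edge labels — this is a product of at most $\norm{\gamma}$ matrices, each of size $\exp(O(\norm{\gamma}))$, so the whole product costs $\exp(O(\norm{\gamma}))$ arithmetic operations. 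Fourth, extract the $\bra{0^{\abs{\gamma}}}\cdots\ket{0^{\abs{\gamma}}}$ entry of the resulting matrix. Every arithmetic operation is on numbers of bounded precision (or one tracks precision to the accuracy required by the approximation scheme, which only contributes polynomial overhead), so the total running time is $\exp(O(\norm{\gamma}))$.

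I do not expect a serious obstacle here; the only point requiring a word of care is the conversion between $\abs{\gamma}$ and $\norm{\gamma}$, which is the reason the bounded-rank hypothesis is needed — without it a polymer with few edges could still span many vertices and the brute-force space would not be $\exp(O(\norm{\gamma}))$. One should also note that the implicit edge ordering of the full circuit restricts to a well-defined ordering on $E(\gamma)$, so the product $\prod_{\varepsilon \in E(\gamma)}(U_\varepsilon - \mathbb{I})$ is unambiguous. With these remarks the lemma follows.
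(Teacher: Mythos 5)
Your proposal is correct and matches the paper's argument, which simply invokes sparse matrix--vector multiplication: both rest on brute-force evaluation over the polymer's Hilbert space of dimension $d^{\abs{\gamma}} \leq d^{(r-1)\norm{\gamma}+1} = \exp(O(\norm{\gamma}))$. The only (immaterial) difference is that the paper applies each factor $(U_\varepsilon-\mathbb{I})$ successively to the vector $\ket{0^{\abs{\gamma}}}$ rather than forming full matrix products, which stays within the same $\exp(O(\norm{\gamma}))$ budget.
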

\begin{proof}
    The result follows by sparse matrix-vector multiplication.
\end{proof}

Combining Theorem~\ref{theorem:ApproximationAlgorithmAbstractPolymerModelPartitionFunction} with Lemma~\ref{lemma:QuantumProbabilityAmplitudeAbstractPolymerModel}, Lemma~\ref{lemma:QuantumProbabilityAmplitudeWeightBound}, and Lemma~\ref{lemma:QuantumProbabilityAmplitudeWeightAlgorithm} proves Theorem~\ref{theorem:ApproximationAlgorithmQuantumProbabilityAmplitude}. We now show that the algorithmic condition of Theorem~\ref{theorem:ApproximationAlgorithmQuantumProbabilityAmplitude} is optimal in the case of multigraphs under complexity-theoretic assumptions. This is achieved by establishing a hardness of approximation result for the probability amplitude $A_{U_G}$. For convenience, we shall consider unitary operators of the form $U_\varepsilon=e^{-i\theta\Phi(\varepsilon)}$, where $\theta$ is a real number and $\Phi(\varepsilon)$ is a self-adjoint operator on $\mathcal{H}_\varepsilon$ with $\norm{\Phi(\varepsilon)}\leq1$. Our hardness result concerning the approximation of $A_{U_G}(\theta)$ is as follows.
\begin{theorem}
    \label{theorem:ApproximationHardnessQuantumProbabilityAmplitude}
    Fix $\epsilon>0$, $\Delta\in\mathbb{Z}_{\geq3}$, and $\theta\in\mathbb{R}$ such that $\abs{\theta}\geq\frac{3\pi}{5(\Delta-2)}$. It is \emph{\#P-hard} to approximate the probability amplitude $A_{U_G}(\theta)$ up to a multiplicative $\epsilon$-approximation on multigraphs of maximum degree at most $\Delta$.
\end{theorem}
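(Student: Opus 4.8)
The plan is to reduce from the hardness of approximating the Ising model partition function at imaginary temperature, established in Theorem~\ref{theorem:IsingModelPartitionFunctionApproximationHardness}. Given an Ising system on a multigraph $G=(V,E)$ of maximum degree at most $\Delta$, with interaction $\phi$ satisfying $\abs{\phi(\varepsilon)}\leq1$ for all $\varepsilon\in E$, I would build a quantum circuit on the multigraph $G$ itself, placing a qubit at each vertex and assigning to each edge $\{u,v\}\in E$ the two-qubit gate $U_{\{u,v\}}\coloneqq e^{-i\theta\Phi(\{u,v\})}$ with $\Phi(\{u,v\})\coloneqq\phi(\{u,v\})X_uX_v$. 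Each $\Phi(\{u,v\})$ is self-adjoint with $\norm{\Phi(\{u,v\})}=\abs{\phi(\{u,v\})}\leq1$, so this is a valid instance of the probability amplitude problem whose underlying multihypergraph is $G$, hence of maximum degree at most $\Delta$ and rank $2$.

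The heart of the argument is the exact identity $A_{U_G}(\theta)=Z_{\mathrm{Ising}}(G;i\theta)$, where the right-hand side uses the normalisation fixed in Section~\ref{section:HardnessOfApproximation}. The operators $X_uX_v$ over the edges of $G$ pairwise commute, so $U_G$ is well defined independently of the edge ordering. Writing $\ket{0^{\abs{G}}}=H^{\otimes\abs{G}}\ket{+^{\abs{G}}}$ and using $HXH=Z$, conjugation of $U_G$ by $H^{\otimes\abs{G}}$ turns each gate $e^{-i\theta\phi(\{u,v\})X_uX_v}$ into the diagonal gate $e^{-i\theta\phi(\{u,v\})Z_uZ_v}$, so that
\begin{equation}
    A_{U_G}(\theta)=\matrixel{+^{\abs{G}}}{\prod_{\{u,v\}\in E}e^{-i\theta\phi(\{u,v\})Z_uZ_v}}{+^{\abs{G}}}. \notag
\end{equation}
Expanding $\ket{+^{\abs{G}}}$ in the computational basis and identifying basis states with spin configurations $\sigma\in\{-1,+1\}^V$ collapses this to $\frac{1}{2^{\abs{G}}}\sum_{\sigma\in\{-1,+1\}^V}\prod_{\{u,v\}\in E}e^{-i\theta\phi(\{u,v\})\sigma_u\sigma_v}$, which is exactly $Z_{\mathrm{Ising}}(G;i\theta)$.

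Since this reduction is computable in polynomial time and preserves the value exactly, any multiplicative $\epsilon$-approximation to $A_{U_G}(\theta)$ is a multiplicative $\epsilon$-approximation to $Z_{\mathrm{Ising}}(G;i\theta)$. The claim then follows immediately from Theorem~\ref{theorem:IsingModelPartitionFunctionApproximationHardness}, which states that approximating $Z_{\mathrm{Ising}}(G;i\theta)$ up to a multiplicative $\epsilon$-approximation on multigraphs of maximum degree at most $\Delta$ is \#P-hard whenever $\abs{\theta}\geq\frac{3\pi}{5(\Delta-2)}$.

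I do not anticipate a serious obstacle; the one point that needs care is the choice of $X_uX_v$ rather than $Z_uZ_v$ generators. Using $X$-type generators is what makes the amplitude $\matrixel{0^{\abs{G}}}{U_G}{0^{\abs{G}}}$, rather than $\matrixel{+^{\abs{G}}}{U_G}{+^{\abs{G}}}$, equal to the Ising partition function without inserting a layer of single-qubit Hadamard gates; such a layer would both raise the maximum degree by two (weakening the threshold to $\frac{3\pi}{5(\Delta-4)}$) and fail to be of the form $e^{-i\theta\Phi}$ for the fixed $\theta$ under consideration. So the substantive content is simply the verification of the displayed identity.
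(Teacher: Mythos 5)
Your proposal is correct and follows essentially the same route as the paper: encode the Ising interactions as $X$-type gates $U_\varepsilon=e^{-i\theta\phi(\varepsilon)X_uX_v}$ so that $\matrixel{0^{\abs{G}}}{U_G}{0^{\abs{G}}}$ equals $\frac{1}{2^{\abs{G}}}Z_{\mathrm{Ising}}(G;i\theta)$, then invoke Theorem~\ref{theorem:IsingModelPartitionFunctionApproximationHardness}. The only difference is presentational — the paper states the amplitude--partition-function identity directly, whereas you verify it via Hadamard conjugation and note the commutativity of the gates, which is a fine (slightly more explicit) justification of the same step.
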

\begin{proof}
    Our proof is based on a reduction from the Ising model partition function. We consider quantum circuits on multigraphs with a two-dimensional Hilbert space at each vertex and unitary operators of the form $U_\varepsilon=e^{-i\theta\phi(\varepsilon)\bigotimes_{v \in \varepsilon}X_v}$, where $\phi(\varepsilon)$ is a real number satisfying $\abs{\phi(\varepsilon)}\leq1$. We have
    \begin{align}
        A_{U_G}(\theta) &= \matrixel{0^\abs{G}}{\left(\prod_{\varepsilon \in E}e^{-i\theta\phi(\varepsilon)\bigotimes_{v \in \varepsilon}X_v}\right)}{0^\abs{G}} \notag \\
        &= \frac{1}{2^\abs{G}}\sum_{\sigma\in\{-1,+1\}^V}\prod_{\{u,v\} \in E}e^{-i\theta\phi\left(\{u,v\}\right)\sigma_u\sigma_v} \notag \\
        &= \frac{1}{2^\abs{G}}Z_{\mathrm{Ising}}(G;i\theta). \notag
    \end{align}
    The proof then follows from Theorem~\ref{theorem:IsingModelPartitionFunctionApproximationHardness}.
\end{proof}

Our results establish a 
computational complexity transition from P to \#P-hard for the problem of approximating probability amplitudes.

\subsection{Expectation Values}
\label{section:ExpectationValues}

In this section we study the problem of approximating expectation values of quantum circuits. This problem is known to be \#P-hard in general~\cite{goldberg2017complexity}; in particular, it is a special case of computing output probabilities of quantum circuits. We show that, for a class of quantum circuits with operators close to the identity, this problem admits an efficient approximation algorithm. This setting was studied in Ref.~\cite{bravyi2021classical}, which established an efficient approximation algorithm for this problem. Our approach offers a simpler and sharper analysis in a more slightly general setting. Further, we show that this algorithmic condition is almost optimal in the sense of zero freeness.

We model a quantum circuit by a multihypergraph $G=(V, E)$ as in Section~\ref{section:ProbabilityAmplitudes} and assume that the size of $G$ is at most a polynomial in the order of $G$. An operator $O$ assigns a self-adjoint operator $O_v$ on $\mathcal{H}_v$ to each vertex $v$ of $G$. The operator $O_G$ on $G$ is defined by $O_G\coloneqq\prod_{v \in V}O_v$. We are interested in the expectation value $\expval{O}_{U_G}$, defined by $\expval{O}_{U_G}\coloneqq\matrixel{0^\abs{G}}{{U_G}^\dag O_G U_G}{0^\abs{G}}$.

We now introduce some further definitions that will be useful for our analysis. Let $S_E\coloneqq(\varepsilon)_{\varepsilon \in E}$ denote the sequence of edges from $G$ sorted in increasing order with respect to the edge labels. For a vertex $v$ of $G$, let $S_v$ denote the longest increasing subsequence of $S_E$ such that every prefix induces a connected subgraph of $G$ containing $v$. We define the \emph{causal subgraph} $C_v$ of $v$ to be the subgraph of $G$ induced by the sequence $S_v$. For a subset $U$ of vertices of $G$, we define the \emph{causal subgraph} $C_U$ of $U$ to be the subgraph of $G$ induced by the set $\bigcup_{v \in U}E(C_v)$. We define the \emph{causal intersection hypergraph} $C(G)$ of $G$ to be the hypergraph with vertex set $V$ and edge set $\{V(C_v)\}_{v \in V}$. We identify the edges of $C(G)$ with the vertices of $G$. Note that the connected components of a subgraph $S$ of $C(G)$ are in one-to-one correspondence with the connected components of $C_{E(S)}$. We shall consider polymers that are connected subgraphs of $C(G)$. Our algorithmic result concerning the approximation of $\expval{O}_{U_G}$ is as follows.

\begin{theorem}
    \label{theorem:ApproximationAlgorithmQuantumExpectationValue}
    Fix $\Delta,r\in\mathbb{Z}_{\geq2}$. Let $G=(V, E)$ be a multihypergraph such that the causal intersection hypergraph $C(G)$ of $G$ has maximum degree at most $\Delta$ and rank at most $r$. Suppose that, for all $v \in V$,
    \begin{equation}
        \norm{O_v-\mathbb{I}} \leq \frac{1}{e^3\Delta\binom{r}{2}}. \notag
    \end{equation}
    Then the cluster expansion for $\log(\expval{O}_{U_G})$ converges absolutely, $\expval{O}_{U_G}\neq0$, and there is a fully polynomial-time approximation scheme for $\expval{O}_{U_G}$.
\end{theorem}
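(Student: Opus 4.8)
The plan is to show that $\expval{O}_{U_G}$ fits the template of Theorem~\ref{theorem:ApproximationAlgorithmAbstractPolymerModelPartitionFunction}, using the causal intersection hypergraph $C(G)$ in place of $G$, exactly as was done for probability amplitudes in Section~\ref{section:ProbabilityAmplitudes}. Concretely, I would prove three lemmas mirroring Lemmas~\ref{lemma:QuantumProbabilityAmplitudeAbstractPolymerModel}, \ref{lemma:QuantumProbabilityAmplitudeWeightBound}, and \ref{lemma:QuantumProbabilityAmplitudeWeightAlgorithm}: (1) $\expval{O}_{U_G}$ admits an abstract polymer model representation whose polymers are connected subgraphs of $C(G)$ with compatibility given by vertex disjointness; (2) the polymer weights satisfy $\abs{w_\gamma}\leq\left(\frac{1}{e^3\Delta\binom{r}{2}}\right)^{\norm{\gamma}}$; and (3) each weight $w_\gamma$ is computable in time $\exp(O(\norm{\gamma}))$. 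Granting these, Theorem~\ref{theorem:ApproximationAlgorithmAbstractPolymerModelPartitionFunction} applied to $C(G)$ immediately yields absolute convergence of the cluster expansion, $\expval{O}_{U_G}\neq 0$, and a fully polynomial-time approximation scheme (using that $\abs{C(G)}=\abs{G}$ and that $\norm{G}$ is polynomial in $\abs{G}$, so the running time is polynomial in the input size).

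For step (1), the key observation is a causality/lightcone argument. Write $\expval{O}_{U_G}=\matrixel{0^{\abs{G}}}{U_G^\dag\left(\prod_{v\in V}O_v\right)U_G}{0^{\abs{G}}}$ and apply inclusion-exclusion (Lemma~\ref{lemma:InclusionExclusion}) to the set $V$ with $f(W)=\matrixel{0^{\abs{G}}}{U_G^\dag\left(\prod_{v\in W}O_v\right)U_G}{0^{\abs{G}}}$, so that $\expval{O}_{U_G}=\sum_{W\subseteq V}\prod_{v\in W}(\text{something})$ after regrouping. The crucial point is that when we replace each $O_v$ by $(O_v-\mathbb{I})$ and conjugate, only the gates $U_\varepsilon$ in the causal subgraph $C_W$ survive — gates outside the union of lightcones cancel between $U_G$ and $U_G^\dag$ against the identity — so each term localises to the subgraph $C_W$. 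The connected components of $W$ viewed inside $C(G)$ (equivalently, the connected components of $C_W$, by the correspondence noted in the text) then factorise the expression, giving polymers $\gamma$ that are connected subgraphs of $C(G)$ and a weight of the form
\begin{equation}
    w_\gamma = \matrixel{0^{\abs{\gamma}}}{{U_{C_\gamma}}^\dag\left[\prod_{v\in E(\gamma)}(O_v-\mathbb{I})\right]U_{C_\gamma}}{0^{\abs{\gamma}}}, \notag
\end{equation}
where $U_{C_\gamma}$ is the product of the gates in the causal subgraph associated to $\gamma$ and $E(\gamma)$ denotes the vertices of $G$ identified with edges of $C(G)$. Here $\norm{\gamma}$, the number of edges of $\gamma$ in $C(G)$, equals the number of vertices $v$ contributing a factor $(O_v-\mathbb{I})$.

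Steps (2) and (3) are then routine. For (2), submultiplicativity of the operator norm and unitarity of $U_{C_\gamma}$ give $\abs{w_\gamma}\leq\prod_{v\in E(\gamma)}\norm{O_v-\mathbb{I}}\leq\left(\frac{1}{e^3\Delta\binom{r}{2}}\right)^{\norm{\gamma}}$, matching the hypothesis of Theorem~\ref{theorem:ApproximationAlgorithmAbstractPolymerModelPartitionFunction}. For (3), $C_\gamma$ has $O(\norm{\gamma})$ edges (each causal subgraph $C_v$ restricted within a size-$\norm{\gamma}$ polymer has bounded size since $C(G)$ has bounded degree and rank), so $U_{C_\gamma}$ acts on $\exp(O(\norm{\gamma}))$ dimensions and $w_\gamma$ is computed by sparse matrix–vector multiplication in time $\exp(O(\norm{\gamma}))$. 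The main obstacle is step (1): one must carefully verify the causality cancellation — that conjugating $\prod_{v\in W}(O_v-\mathbb{I})$ by $U_G$ depends only on gates in $C_W$ — and confirm that the definitions of $S_v$, $C_v$, $C_U$, and the correspondence between components of subgraphs of $C(G)$ and components of $C_{E(S)}$ are exactly what is needed to make the factorisation over connected components go through, so that the resulting objects are genuinely polymers (connected subgraphs of $C(G)$) with vertex-disjointness compatibility. Everything after that is a direct invocation of the general framework.
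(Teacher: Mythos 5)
Your proposal follows essentially the same route as the paper: inclusion-exclusion over the vertices, a causality/lightcone argument localising each factorised term to the causal subgraph $C_{E(\gamma)}$ so that polymers are connected subgraphs of $C(G)$ with weights $w_\gamma = \matrixel{0^{\abs{\gamma}}}{{U_{C_{E(\gamma)}}}^\dag\bigl[\prod_{\varepsilon \in E(\gamma)}(O_\varepsilon-\mathbb{I})\bigr]U_{C_{E(\gamma)}}}{0^{\abs{\gamma}}}$, the weight bound via submultiplicativity and unitarity, exponential-time weight computation, and then a direct invocation of Theorem~\ref{theorem:ApproximationAlgorithmAbstractPolymerModelPartitionFunction}. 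This matches the paper's proof via Lemmas~\ref{lemma:QuantumExpectationValueAbstractPolymerModel}, \ref{lemma:QuantumExpectationValueWeightBound}, and \ref{lemma:QuantumExpectationValueWeightAlgorithm}, including the causal-cancellation step you correctly identify as the crux.
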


\begin{remark}
    Theorem~\ref{theorem:ApproximationAlgorithmQuantumExpectationValue} may be extended to a slightly more general class of product operators.
\end{remark}

In the case when $G$ corresponds to a quantum circuit $U_G$ of depth at most $d$ with each gate acting on at most $k$ qudits, the causal intersection hypergraph $C(G)$ has maximum degree at most $k^d$ and rank at most $k^d$. Further, when $G$ is restricted to edges on the lattice graph $\mathbb{Z}^\nu$, the causal intersection hypergraph $C(G)$ has maximum degree at most $(2d)^\nu$ and rank at most $(2d)^\nu$. This implies that our algorithm may be applied to these classes of quantum circuits when $\norm{O_v-\mathbb{I}}\leq\frac{2}{e^3k^{3d}}$ and $\norm{O_v-\mathbb{I}}\leq\frac{2}{e^3(2d)^{3\nu}}$ for all $v \in V$, respectively. A more refined analysis in the latter case shows that our algorithm may be applied when $\norm{O_v-\mathbb{I}}\leq\frac{2}{e^32^{3\nu}d^{2\nu}}$  for all $v \in V$. This offers a significant improvement over Ref.~\cite{bravyi2021classical}, which applies to these classes when $\norm{O_v-\mathbb{I}}<\frac{1}{60k^{5d}}$ and $\norm{O_v-\mathbb{I}}<\frac{1}{60(16d)^{2\nu}}$ for all $v \in V$, respectively.

We prove Theorem~\ref{theorem:ApproximationAlgorithmQuantumExpectationValue} by showing that the conditions required to apply Theorem~\ref{theorem:ApproximationAlgorithmAbstractPolymerModelPartitionFunction} are satisfied. That is, we show that (1) the expectation value $\expval{O}_{U_G}$ admits a suitable abstract polymer model representation, (2) the polymer weights satisfy the desired bound, and (3) the polymer weights can be computed in the desired time. This is achieved in the following three lemmas.
\begin{lemma}
    \label{lemma:QuantumExpectationValueAbstractPolymerModel}
    The expectation value $\expval{O}_{U_G}$ admits the following abstract polymer model representation.
    \begin{equation}
        \expval{O}_{U_G} = \sum_{\Gamma\in\mathcal{G}}\prod_{\gamma\in\Gamma}w_\gamma, \notag
    \end{equation}
    where
    \begin{equation}
        w_\gamma \coloneqq \matrixel{0^\abs{\gamma}}{{U_{C_{E(\gamma)}}}^\dag\left[\prod_{\varepsilon \in E(\gamma)}(O_\varepsilon-\mathbb{I})\right]U_{C_{E(\gamma)}}}{0^\abs{\gamma}}. \notag
    \end{equation}
\end{lemma}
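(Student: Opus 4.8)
The plan is to mimic the proof of Lemma~\ref{lemma:QuantumProbabilityAmplitudeAbstractPolymerModel}, but applied to the expectation value $\matrixel{0^\abs{G}}{{U_G}^\dag O_G U_G}{0^\abs{G}}$, using the principle of inclusion-exclusion (Lemma~\ref{lemma:InclusionExclusion}) over the \emph{vertex set} $V$ rather than the edge set, since the operator $O_G = \prod_{v \in V} O_v$ is a product indexed by vertices. First I would define $f(U) \coloneqq \matrixel{0^\abs{G}}{{U_G}^\dag \left(\prod_{v \in U} O_v\right) U_G}{0^\abs{G}}$ for $U \subseteq V$ and apply Lemma~\ref{lemma:InclusionExclusion} to write
\begin{equation}
    \expval{O}_{U_G} = f(V) = \sum_{S \subseteq V}(-1)^\abs{S}\sum_{T \subseteq S}(-1)^\abs{T}\matrixel{0^\abs{G}}{{U_G}^\dag \left(\prod_{v \in T} O_v\right) U_G}{0^\abs{G}}. \notag
\end{equation}

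The key step is to argue that, for a fixed $T \subseteq V$, the quantity $\matrixel{0^\abs{G}}{{U_G}^\dag \left(\prod_{v \in T} O_v\right) U_G}{0^\abs{G}}$ depends only on the gates in the causal subgraph $C_T = \bigcup_{v \in T} C_v$, so that it equals $\matrixel{0^\abs{C_T}}{{U_{C_T}}^\dag \left(\prod_{v \in T} O_v\right) U_{C_T}}{0^\abs{C_T}}$. This is the usual light-cone / causal-cone argument: gates acting entirely outside the causal region of $T$ either cancel against their adjoints (those applied after the last gate touching $C_T$ on each qudit) or act trivially on $\ket{0}$ before anything in $C_T$ touches that qudit — the careful bookkeeping here, using the definition of $S_v$ as the longest increasing subsequence whose prefixes stay connected and contain $v$, is what makes the causal subgraph the right object and is the main obstacle. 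Given this reduction, I would then regroup the double sum over $S, T \subseteq V$ according to the connected components of $S$ in the causal intersection hypergraph $C(G)$: writing $\Gamma_S$ for the connected components of $S$ viewed as a subgraph of $C(G)$, and using that (as noted in the text) these correspond one-to-one with the connected components of $C_{E(S)}$, the term factorizes over components because the causal subgraphs of distinct components are vertex-disjoint in $G$ (so the corresponding unitaries and operators act on disjoint tensor factors and the $\ket{0}$-amplitude is multiplicative).

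After factorizing, each component $\gamma$ contributes $(-1)^{\norm{\gamma}}\sum_{T \subseteq E(\gamma)}(-1)^\abs{T}\matrixel{0^\abs{\gamma}}{{U_{C_{E(\gamma)}}}^\dag\left(\prod_{v \in T} O_v\right)U_{C_{E(\gamma)}}}{0^\abs{\gamma}}$, where I identify the edges of $C(G)$ with the vertices of $G$, so that $T \subseteq E(\gamma)$ ranges over subsets of the vertex-labels of $\gamma$ and $\norm{\gamma}$ is the number of such labels. Pulling the sign $(-1)^{\norm{\gamma}} = \prod_{v}(-1)$ inside and combining with $(-1)^\abs{T}$ turns $\prod_{v \in T} O_v$ times the signs into $\prod_{\varepsilon \in E(\gamma)}(O_\varepsilon - \mathbb{I})$ by the binomial expansion (here $O_\varepsilon$ denotes $O_v$ for the vertex $v$ identified with edge $\varepsilon$), exactly as in the proof of Lemma~\ref{lemma:QuantumProbabilityAmplitudeAbstractPolymerModel}. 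This yields
\begin{equation}
    \expval{O}_{U_G} = \sum_{S \subseteq E(C(G))}\prod_{\gamma \in \Gamma_S} w_\gamma = \sum_{\Gamma \in \mathcal{G}}\prod_{\gamma\in\Gamma}w_\gamma \notag
\end{equation}
with $w_\gamma$ as stated, where the last equality holds because summing over all subsets $S$ of $C(G)$ and factoring into connected components is the same as summing over all admissible (vertex-disjoint) families of connected subgraphs of $C(G)$. The only subtlety beyond the probability-amplitude case is verifying the causal-subgraph reduction and that the causal subgraphs of vertex-disjoint (in $C(G)$) polymers are genuinely disjoint in $G$; everything else is the same inclusion-exclusion-then-factorize pattern.
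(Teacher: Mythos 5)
Your proposal is correct and follows essentially the same route as the paper: inclusion-exclusion over the vertex set, identification of vertices of $G$ with edges of the causal intersection hypergraph $C(G)$, replacement of $U_G$ by the causal-subgraph unitary $U_{C_{E(\gamma)}}$ via the light-cone cancellation, factorisation over vertex-disjoint connected components, and the binomial regrouping into $\prod_{\varepsilon\in E(\gamma)}(O_\varepsilon-\mathbb{I})$. The only difference is cosmetic ordering — you perform the causal-cone reduction before regrouping into components, whereas the paper factorises first and then substitutes $U_{C_{E(\gamma)}}$ — and you in fact flag the causal-cone bookkeeping more explicitly than the paper does.
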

\begin{proof}
    By applying Lemma~\ref{lemma:InclusionExclusion} with $f(V)=\matrixel{0^\abs{G}}{{U_G}^\dag\left(\prod_{v \in V}O_v\right)U_G}{0^\abs{G}}$, we have
    \begin{align}
        \expval{O}_{U_G} &= \matrixel{0^\abs{G}}{{U_G}^\dag O_G U_G}{0^\abs{G}} \notag \\
        &= \sum_{S \subseteq V}(-1)^\abs{S}\sum_{T \subseteq S}(-1)^\abs{T}\matrixel{0^\abs{G}}{{U_G}^\dag\left(\prod_{v \in T}O_v\right)U_G}{0^\abs{G}} \notag \\
        &= \sum_{S \subseteq E(C(G))}(-1)^\abs{S}\sum_{T \subseteq S}(-1)^\abs{T}\matrixel{0^\abs{G}}{{U_G}^\dag\left(\prod_{\varepsilon \in T}O_\varepsilon\right)U_G}{0^\abs{G}}. \notag
    \end{align}
    For a subset $S \subseteq E(C(G))$, let $\Gamma_S$ denote the maximally connected components of $S$. By factorising over these components, we have
    \begin{align}
        \expval{O}_{U_G} &= \sum_{S \subseteq E}\prod_{\gamma \in \Gamma_S}(-1)^\norm{\gamma}\sum_{T \subseteq E(\gamma)}(-1)^\abs{T}\matrixel{0^\abs{\gamma}}{{U_G}^\dag\left(\prod_{\varepsilon \in T}O_\varepsilon\right)U_G}{0^\abs{\gamma}} \notag \\
        &= \sum_{S \subseteq E}\prod_{\gamma \in \Gamma_S}(-1)^\norm{\gamma}\sum_{T \subseteq E(\gamma)}(-1)^\abs{T}\matrixel{0^\abs{\gamma}}{{U_{C_{E(\gamma)}}}^\dag\left(\prod_{\varepsilon \in T}O_\varepsilon\right)U_{C_{E(\gamma)}}}{0^\abs{\gamma}} \notag \\
        &= \sum_{S \subseteq E}\prod_{\gamma \in \Gamma_S}\matrixel{0^\abs{\gamma}}{{U_{C_{E(\gamma)}}}^\dag\left[\prod_{\varepsilon \in E(\gamma)}(O_\varepsilon-\mathbb{I})\right]U_{C_{E(\gamma)}}}{0^\abs{\gamma}} \notag \\
        &= \sum_{\Gamma\in\mathcal{G}}\prod_{\gamma\in\Gamma}w_\gamma. \notag
    \end{align}
    This completes the proof.
\end{proof}

\begin{lemma}
    \label{lemma:QuantumExpectationValueWeightBound}
    Fix $\Delta,r\in\mathbb{Z}_{\geq2}$. Let $G=(V, E)$ be a multihypergraph such that the causal intersection hypergraph $C(G)$ of $G$ has maximum degree at most $\Delta$ and rank at most $r$. Suppose that, for all $v \in V$,
    \begin{equation}
        \norm{O_v-\mathbb{I}} \leq \frac{1}{e^3\Delta\binom{r}{2}}. \notag
    \end{equation}
    Then, for all polymers $\gamma\in\mathcal{C}$, the weight $w_\gamma$ satisfies
    \begin{equation}
        \abs{w_\gamma} \leq \left(\frac{1}{e^3\Delta\binom{r}{2}}\right)^\norm{\gamma}. \notag
    \end{equation}
\end{lemma}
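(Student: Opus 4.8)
The plan is to mirror the proof of Lemma~\ref{lemma:QuantumProbabilityAmplitudeWeightBound}, the only new ingredient being that conjugation by a unitary operator leaves the operator norm invariant. Fix a polymer $\gamma$, that is, a connected subgraph of the causal intersection hypergraph $C(G)$. The weight $w_\gamma$ is the expression furnished by Lemma~\ref{lemma:QuantumExpectationValueAbstractPolymerModel}, and since $\ket{0^\abs{\gamma}}$ is a unit vector we have $\abs{w_\gamma} \leq \norm{{U_{C_{E(\gamma)}}}^\dag\left[\prod_{\varepsilon \in E(\gamma)}(O_\varepsilon-\mathbb{I})\right]U_{C_{E(\gamma)}}}$.

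The first step is to peel off the unitary conjugation. The operator $U_{C_{E(\gamma)}}$ is a product of the unitary gates $U_\varepsilon$ and is therefore itself unitary, so $\norm{U_{C_{E(\gamma)}}} = \norm{{U_{C_{E(\gamma)}}}^\dag} = 1$. Submultiplicativity of the operator norm then yields $\abs{w_\gamma} \leq \norm{\prod_{\varepsilon \in E(\gamma)}(O_\varepsilon-\mathbb{I})}$. The second step is to bound the remaining product. Each edge $\varepsilon$ of $C(G)$ is identified with a vertex $v$ of $G$, so $O_\varepsilon-\mathbb{I}$ acts as $O_v-\mathbb{I}$; applying submultiplicativity once more, the hypothesis $\norm{O_v-\mathbb{I}} \leq \frac{1}{e^3\Delta\binom{r}{2}}$ gives $\norm{\prod_{\varepsilon \in E(\gamma)}(O_\varepsilon-\mathbb{I})} \leq \prod_{\varepsilon \in E(\gamma)}\norm{O_\varepsilon-\mathbb{I}} \leq \left(\frac{1}{e^3\Delta\binom{r}{2}}\right)^{\abs{E(\gamma)}}$. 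Since $\norm{\gamma}=\abs{E(\gamma)}$ is precisely the size of $\gamma$ as a subgraph of $C(G)$, combining the two estimates gives the claimed bound.

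I do not expect a genuine obstacle: the estimate is essentially a one-line norm bound, made possible by the fact that the causal-subgraph circuit $U_{C_{E(\gamma)}}$ is genuinely unitary so that its contribution to the norm is exactly $1$. The only thing requiring care is the bookkeeping — confirming that $\norm{\gamma}$ counts the edges of $\gamma$ in $C(G)$, equivalently the number of vertices of $G$ whose operators $O_v$ appear in the product, so that the exponent $\abs{E(\gamma)}$ in the bound is exactly $\norm{\gamma}$.
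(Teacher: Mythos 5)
Your proof is correct and follows the same route as the paper's: the paper's one-line argument $\abs{w_\gamma} \leq \prod_{\varepsilon \in E(\gamma)}\norm{O_\varepsilon-\mathbb{I}}$ implicitly uses exactly the steps you spell out — the unit-vector bound, invariance of the operator norm under the unitary conjugation by $U_{C_{E(\gamma)}}$, and submultiplicativity — together with the identification of edges of $C(G)$ with vertices of $G$ so that the exponent is $\norm{\gamma}$. Your write-up just makes these steps explicit.
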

\begin{proof}
    Fix a polymer $\gamma$. We have
    \begin{equation}
        \abs{w_\gamma} \leq \prod_{\varepsilon \in E(\gamma)}\norm{O_\varepsilon-\mathbb{I}} \leq \left(\frac{1}{e^3\Delta\binom{r}{2}}\right)^\norm{\gamma}, \notag
    \end{equation}
    completing the proof.
\end{proof}

\begin{lemma}
    \label{lemma:QuantumExpectationValueWeightAlgorithm}
    The weight $w_\gamma$ of a polymer $\gamma$ can be computed in time $\exp(O(\norm{\gamma}))$.
\end{lemma}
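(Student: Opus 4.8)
The plan is to recognise $w_\gamma$, in the form supplied by Lemma~\ref{lemma:QuantumExpectationValueAbstractPolymerModel}, as a single matrix element of an explicit product of operators all acting on the Hilbert space $\bigotimes_{u \in V(C_{E(\gamma)})}\mathcal{H}_u$ of the causal subgraph $C_{E(\gamma)}$, and then to evaluate it by sparse matrix--vector multiplication exactly as in the proof of Lemma~\ref{lemma:QuantumProbabilityAmplitudeWeightAlgorithm}. The first step is to bound the number of qudits involved. Since $\gamma$ is a connected subgraph of $C(G)$ and $V(C_{E(\gamma)}) = V(\gamma) = \bigcup_{\varepsilon \in E(\gamma)} V(C_\varepsilon)$, while every edge of $C(G)$ has cardinality at most $r$, the connectivity bound $\abs{\gamma} \leq (r-1)\norm{\gamma}+1$ (already used in the proof of Lemma~\ref{lemma:ConvergenceClusterExpansion}) gives $\abs{V(C_{E(\gamma)})} = \abs{\gamma} \leq (r-1)\norm{\gamma}+1$. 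Hence the ambient Hilbert space has dimension $d^{\abs{\gamma}} = \exp(O(\norm{\gamma}))$, with the implied constant depending only on $d$ and $r$.

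Next I would carry out the evaluation. Form the vector $\ket{0^{\abs{\gamma}}}$ in this Hilbert space; apply the gates of $U_{C_{E(\gamma)}} = \prod_{\varepsilon \in E(C_{E(\gamma)})} U_\varepsilon$ one at a time in the prescribed order; then apply $\prod_{\varepsilon \in E(\gamma)}(O_\varepsilon - \mathbb{I})$, which is a product of $\norm{\gamma}$ single-qudit operators; then apply the gates of ${U_{C_{E(\gamma)}}}^\dag$; and finally read off the $\ket{0^{\abs{\gamma}}}$ component. Every gate $U_\varepsilon$ occurring here satisfies $\varepsilon \subseteq V(C_{\varepsilon'})$ for some $\varepsilon' \in E(\gamma)$, hence acts on at most $r$ qudits, so each such matrix--vector multiplication costs $\exp(O(\norm{\gamma}))$ time, and likewise for each factor $O_\varepsilon - \mathbb{I}$. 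The number of gates $\abs{E(C_{E(\gamma)})}$ is the cardinality of $\bigcup_{\varepsilon \in E(\gamma)} E(C_\varepsilon)$, which is at most $\norm{G}$ since these are distinct edges of $G$, and by the standing assumption $\norm{G}$ is at most polynomial in $\abs{G}$. Hence the total running time is $\exp(O(\norm{\gamma}))$ up to a factor polynomial in the input size, which is precisely what is needed downstream in the proof of Theorem~\ref{theorem:ApproximationAlgorithmQuantumExpectationValue} via Lemma~\ref{lemma:TruncatedClusterExpansionApproximationAlgorithm}.

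I expect the main (indeed only) subtlety, and the sole place the argument differs from the probability-amplitude case, to be the gate count for $U_{C_{E(\gamma)}}$: unlike the number of qudits, it is not controlled by $\Delta$, $r$, and $\norm{\gamma}$ alone -- a bundle of many parallel edges between two vertices already forces a single causal cone $C_\varepsilon$ to contain arbitrarily many gates -- so the running time genuinely depends on $\norm{G}$, not on $\norm{\gamma}$ only. This is harmless because $\norm{G}$ is assumed to be polynomial in $\abs{G}$ throughout this section, so the extra factor is polynomial and is absorbed once $\norm{\gamma} \leq m = O(\log(\abs{G}/\epsilon))$; the stated bound $\exp(O(\norm{\gamma}))$ should be read with this understanding, consistently with how the weight-computation cost is used in the proof of Theorem~\ref{theorem:ApproximationAlgorithmQuantumExpectationValue}. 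Beyond this bookkeeping, the proof is routine sparse linear algebra.
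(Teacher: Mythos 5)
Your proposal is correct and takes essentially the same route as the paper, whose proof is the one-line observation that $w_\gamma$ is evaluated by sparse matrix--vector multiplication exactly as in Lemma~\ref{lemma:QuantumProbabilityAmplitudeWeightAlgorithm}. Your added point that the gate count of $U_{C_{E(\gamma)}}$ is controlled by $\norm{G}=\abs{G}^{O(1)}$ rather than by $\norm{\gamma}$ alone is a fair reading of the stated bound, and the extra polynomial factor is harmless where the lemma is used, in Lemma~\ref{lemma:TruncatedClusterExpansionApproximationAlgorithm} and Theorem~\ref{theorem:ApproximationAlgorithmQuantumExpectationValue}.
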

\begin{proof}
    The proof follows similarly to that of Lemma~\ref{lemma:QuantumProbabilityAmplitudeWeightAlgorithm}.
\end{proof}

Combining Theorem~\ref{theorem:ApproximationAlgorithmAbstractPolymerModelPartitionFunction} with Lemma~\ref{lemma:QuantumExpectationValueAbstractPolymerModel}, Lemma~\ref{lemma:QuantumExpectationValueWeightBound} and Lemma~\ref{lemma:QuantumExpectationValueWeightAlgorithm} proves Theorem~\ref{theorem:ApproximationAlgorithmQuantumExpectationValue}. We now show that the algorithmic condition of Theorem~\ref{theorem:ApproximationAlgorithmQuantumExpectationValue} is almost optimal in the sense of the zero freeness of the expectation value. This is achieved by a constructive argument based on an observation of Ref.~\cite{bravyi2021classical} and is formalised by the following theorem.
\begin{theorem}
    \label{theorem:ZeroExpectationValue}
    Fix $d\in\mathbb{Z}^+$ and $k\in\mathbb{Z}_{\geq2}$. There exists a hypergraph $G=(V, E)$, a quantum circuit $U_G$ of depth $d$ with each gate acting on at most $k$ qubits, and an operator $O$ satisfying $\norm{O_v-\mathbb{I}}\leq\frac{2}{k^d}$ for all $v \in V$, such that $\expval{O}_{U_G}=0$.
\end{theorem}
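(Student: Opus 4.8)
The plan is to construct the circuit and observable explicitly, adapting the amplification idea behind the observation of Ref.~\cite{bravyi2021classical}: although each $O_v$ lies within $2/k^d$ of $\mathbb{I}$, a product of order $k^d$ such operators --- matching the size of the largest causal subgraph $C_v$ of a vertex in a depth-$d$ circuit with $k$-qubit gates --- can be far from $\mathbb{I}$, and when arranged as a coherent ``rotation'' it can send the prepared state to an orthogonal vector, forcing the expectation value to vanish.

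Concretely, I would set $m \coloneqq k^d$ and take $G$ to be the hypergraph on $m$ qubits carrying a depth-$d$ circuit $U_G$ of $k$-qubit gates that prepares the ``rotated cat state''
\begin{equation}
    U_G\ket{0^m} = \tfrac{1}{\sqrt2}\bigl(\ket{+}^{\otimes m} + \ket{-}^{\otimes m}\bigr) = H^{\otimes m}\cdot\tfrac1{\sqrt2}\bigl(\ket{0^m} + \ket{1^m}\bigr), \notag
\end{equation}
where $\ket{\pm}\coloneqq\tfrac1{\sqrt2}(\ket0\pm\ket1)$; such a circuit is obtained by growing an $m$-qubit GHZ state with a $k$-ary ``copy'' tree of depth $d$ (layer $j$ copies each of the $k^{j-1}$ current qubits into $k-1$ fresh ones in parallel) and absorbing a layer of Hadamards into the final layer of gates. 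I would then put $O_v\coloneqq e^{i\theta X_v}$ on each of the $m$ qubits, with $\theta\coloneqq\pi/(2m)$, so that $O_G = \prod_v O_v = e^{i\theta\sum_v X_v}$. Since $X_v\ket{\pm} = \pm\ket{\pm}$, the states $\ket{+}^{\otimes m}$ and $\ket{-}^{\otimes m}$ are eigenvectors of $\sum_v X_v$ with eigenvalues $\pm m$, hence of $O_G$ with eigenvalues $e^{\pm i m\theta}$; using $\langle+|-\rangle = 0$, a one-line computation gives $\expval{O}_{U_G} = \matrixel{0^m}{U_G^\dagger O_G U_G}{0^m} = \tfrac12\bigl(e^{i m\theta}+e^{-i m\theta}\bigr) = \cos(m\theta) = \cos(\pi/2) = 0$. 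Finally, the norm bound follows from the elementary estimate $\norm{O_v - \mathbb{I}} = \abs{e^{i\theta}-1} = 2\sin(\theta/2) \le \theta = \pi/(2k^d) \le 2/k^d$.

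The main obstacle --- and the reason the construction must leave the regime of positive observables --- is the positivity constraint: any \emph{self-adjoint} $O_v$ with $\norm{O_v-\mathbb{I}}<1$ is strictly positive-definite, so $\bigotimes_v O_v \succ 0$ and $\expval{O}_{U_G} > 0$; the zero can therefore only be realised by a non-positive (here, unitary) perturbation, and the role of the depth-$d$ budget is precisely to let each factor $e^{i\theta X_v}$ be as small as $\Theta(1/k^d)$ while their product still implements the $\pi/2$ rotation that makes $O_G\ket\psi$ orthogonal to $\ket\psi$. The remaining points are routine: verifying that the cat-state circuit has depth exactly $d$ with $k$-qubit gates (the copy-tree produces precisely $k^d$ qubits in $d$ layers once the Hadamards are folded in, and the largest causal subgraph is the full hypergraph), and observing that this places $\norm{O_v-\mathbb{I}} = \Theta(1/k^d)$ a factor $\Theta(k^{2d})$ above the algorithmic window $\norm{O_v-\mathbb{I}}\le 2/(e^3 k^{3d})$ that Theorem~\ref{theorem:ApproximationAlgorithmQuantumExpectationValue} guarantees for depth-$d$ circuits with $k$-qubit gates, which is the precise sense in which that condition is almost optimal.
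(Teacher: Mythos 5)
Your construction is essentially the paper's proof: the paper also prepares the cat state on $k^d$ qubits with a depth-$d$, $k$-local circuit and takes $O_v=\mathbb{I}+i\tan\bigl(\tfrac{\pi}{2k^d}\bigr)Z_v$, which is just $\sec\theta\,e^{i\theta Z_v}$, i.e.\ a scalar multiple of your unitary perturbation up to a Hadamard change of basis, and the zero arises from the same cancellation $e^{i\pi/2}+e^{-i\pi/2}=0$ with the same norm bound $\le 2/k^d$. The argument is correct and matches the paper's route, including the observation (made in the paper's remark) that the operator cannot be self-adjoint.
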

\begin{proof}
    Let $\ket{\psi_n}$ denote the state $\ket{\psi_n}\coloneqq\frac{1}{\sqrt{2}}\left(\ket{0^n}+\ket{1^n}\right)$. Note that there is a hypergraph $G$ and a quantum circuit $U_G$ of depth $d$ with each gate acting on at most $k$ qubits such that $\ket{\psi_{k^d}}=U_G\ket{0^\abs{G}}$. We consider the operator $O$ with $O_v=\mathbb{I}+i\tan\left(\frac{\pi}{2k^d}\right)Z_v$ for all $v \in V$. Then, we have
    \begin{align}
        \expval{O}_{U_G} &= \matrixel{0^\abs{G}}{{U_G}^\dag O_G U_G}{0^\abs{G}} \notag \\
        &= \matrixel{\psi_{k^d}}{\left[\prod_{v \in V}\left(\mathbb{I}+i\tan\left(\frac{\pi}{2k^d}\right)Z_v\right)\right]}{\psi_{k^d}} \notag \\ 
        &= \matrixel{\psi_{k^d}}{\left(\sum_{S \subseteq V}\prod_{v \in S}i\tan\left(\frac{\pi}{2k^d}\right)Z_v\right)}{\psi_{k^d}} \notag \\
        &= \frac{1}{2}\sum_{S \subseteq V}\left[i\tan\left(\frac{\pi}{2k^d}\right)\right]^\abs{S}\left[1+(-1)^\abs{S}\right] \notag \\
        &= \frac{1}{2}\left[\left(1+i\tan\left(\frac{\pi}{2k^d}\right)\right)^{k^d}+\left(1-i\tan\left(\frac{\pi}{2k^d}\right)\right)^{k^d}\right] \notag \\
        &= 0. \notag
    \end{align}
    Further, for all $v \in V$, we have
    \begin{equation}
        \norm{O_v-\mathbb{I}} = \abs{\tan\left(\frac{\pi}{2k^d}\right)} \leq \frac{2}{k^d}. \notag
    \end{equation}
    This completes the proof.
\end{proof}

\begin{remark}
    The operator in the proof of Theorem~\ref{theorem:ZeroExpectationValue} is not self-adjoint.
\end{remark}

\subsection{Partition Functions}
\label{section:PartitionFunctions}

In this section we study the problem of approximating partition functions of quantum spin systems. This problem is known to be \#P-hard in general~\cite{goldberg2017complexity}; however, we show that, for a class of quantum spin systems at high temperature, this problem admits an efficient approximation algorithm. Efficient approximation algorithms have previously been established for approximating partition functions of quantum spin systems at high temperature~\cite{harrow2020classical, kuwahara2020clustering, mann2021efficient} and for restricted classes at low temperature~\cite{helmuth2023efficient}. Our analysis closely follows that of Ref.~\cite{mann2021efficient} and can be viewed as a straightforward generalisation from the setting of bounded-degree graphs to bounded-degree bounded-rank multihypergraphs. This offers a simpler and slightly sharper analysis than Refs.~\cite{harrow2020classical, kuwahara2020clustering}. Further, we show that this algorithmic condition is optimal under complexity-theoretic assumptions.

We model a quantum spin system by a multihypergraph $G=(V, E)$. At each vertex $v$ of $G$ there is a $d$-dimensional Hilbert space $\mathcal{H}_v$ with $d<\infty$. The Hilbert space on the multihypergraph is given by $\mathcal{H}_G\coloneqq\bigotimes_{v \in V}\mathcal{H}_v$. An interaction $\Phi$ assigns a self-adjoint operator $\Phi(\varepsilon)$ on $\mathcal{H}_\varepsilon\coloneqq\bigotimes_{v \in \varepsilon}\mathcal{H}_v$ to each edge $\varepsilon$ of $G$. The Hamiltonian on $G$ is defined by $H_G\coloneqq\sum_{\varepsilon \in E}\Phi(\varepsilon)$. We are interested in the partition function $Z_G(\beta)$ at inverse temperature $\beta$, defined by $Z_G(\beta)\coloneqq\Tr\left[e^{-\beta H_G}\right]$. We shall assume that the trace is normalised so that $\Tr(\mathbb{I})=1$, which is equivalent to a rescaling the partition function by a multiplicative factor. Further, we shall assume that $\norm{\Phi(\varepsilon)}\leq1$ for all $\varepsilon \in E$, which is always possible by a rescaling of $\beta$. Our algorithmic result concerning the approximation of $Z_G(\beta)$ is as follows.
\begin{theorem}
    \label{theorem:ApproximationAlgorithmQuantumPartitionFunction}
    Fix $\Delta,r\in\mathbb{Z}_{\geq2}$. Let $G=(V, E)$ be a multihypergraph of maximum degree at most $\Delta$ and rank at most $r$, and let $\beta$ be a complex number such that
    \begin{equation}
        \abs{\beta} \leq \frac{1}{e^4\Delta\binom{r}{2}}. \notag
    \end{equation}
    Then the cluster expansion for $\log(Z_G(\beta))$ converges absolutely, $Z_G(\beta)\neq0$, and there is a fully polynomial-time approximation scheme for $Z_G(\beta)$.
\end{theorem}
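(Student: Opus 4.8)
The plan is to mirror the strategy used for probability amplitudes and expectation values: exhibit an abstract polymer model representation of $Z_G(\beta)$ whose polymers are connected subgraphs of $G$ with vertex-disjointness compatibility, verify the polymer weight bound of Theorem~\ref{theorem:ApproximationAlgorithmAbstractPolymerModelPartitionFunction}, and check that the weights are efficiently computable; the result then follows immediately by invoking that theorem. Concretely, I would prove three lemmas.

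First, the \emph{representation lemma}. The natural definition is
\begin{equation}
    w_\gamma \coloneqq \Tr\left[\prod_{\varepsilon \in E(\gamma)}\left(e^{-\beta\Phi(\varepsilon)}-\mathbb{I}\right)\right], \notag
\end{equation}
where the product respects the edge ordering, and one must be careful that the trace here is over $\mathcal{H}_{V(\gamma)}$ with $\Tr(\mathbb{I})=1$ on that subspace (so that traces over disjoint vertex sets factor multiplicatively). Unlike the probability-amplitude case, expanding $e^{-\beta H_G}=e^{-\beta\sum_\varepsilon \Phi(\varepsilon)}$ does not factor over edges, so the inclusion–exclusion trick of Lemma~\ref{lemma:InclusionExclusion} cannot be applied with $f(T)=\Tr\bigl[\prod_{\varepsilon\in T}e^{-\beta\Phi(\varepsilon)}\bigr]$ directly to recover $Z_G(\beta)$. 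The standard fix — this is where I expect the main work to lie — is to first write $e^{-\beta H_G}$ via its Taylor/Duhamel series $e^{-\beta H_G}=\sum_{\ell\ge0}\frac{(-\beta)^\ell}{\ell!}\bigl(\sum_\varepsilon\Phi(\varepsilon)\bigr)^\ell$, regroup terms by the set $S$ of edges appearing, and observe that the contribution of a given edge set $S$ factorizes over the connected components of $S$; one then defines $w_\gamma$ as the component contribution and checks it equals the trace expression above (or a suitable Duhamel-integral analogue). This is precisely the cluster-expansion-of-the-quantum-partition-function computation carried out in Refs.~\cite{harrow2020classical, mann2021efficient}, so I would follow that derivation, adapted to multihypergraph edges.

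Second, the \emph{weight bound lemma}. Using submultiplicativity of the operator norm and $\Tr(\mathbb{I})=1$, one has $\abs{w_\gamma}\le\prod_{\varepsilon\in E(\gamma)}\norm{e^{-\beta\Phi(\varepsilon)}-\mathbb{I}}$, and since $\norm{\Phi(\varepsilon)}\le1$ and $\norm{e^{-\beta\Phi(\varepsilon)}-\mathbb{I}}\le e^{\abs{\beta}}-1\le e\abs{\beta}$ whenever $\abs{\beta}\le1$, we get $\abs{w_\gamma}\le(e\abs{\beta})^{\norm{\gamma}}$. The hypothesis $\abs{\beta}\le\frac{1}{e^4\Delta\binom{r}{2}}$ then gives $e\abs{\beta}\le\frac{1}{e^3\Delta\binom{r}{2}}$, which is exactly the bound required by Theorem~\ref{theorem:ApproximationAlgorithmAbstractPolymerModelPartitionFunction}; this explains the extra factor of $e$ relative to the amplitude case. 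Third, the \emph{computability lemma}: $w_\gamma$ is a trace of a product of $\norm{\gamma}$ operators, each acting on at most $\abs{V(\gamma)}\le r\norm{\gamma}$ qudits of dimension at most $d$, so it can be computed in time $\exp(O(\norm{\gamma}))$ by dense matrix multiplication on the $d^{\abs{V(\gamma)}}$-dimensional space (or more carefully, by restricting to the relevant $O(\norm{\gamma})$ vertices). Combining these three lemmas with Theorem~\ref{theorem:ApproximationAlgorithmAbstractPolymerModelPartitionFunction} yields absolute convergence of the cluster expansion, $Z_G(\beta)\ne0$, and an FPTAS, completing the proof. The only genuinely delicate point is the first lemma — getting the Duhamel expansion and the regrouping-into-connected-components bookkeeping exactly right, including the treatment of operator ordering and the normalization of partial traces — whereas steps two and three are routine.
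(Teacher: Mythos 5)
Your overall architecture (three lemmas feeding Theorem~\ref{theorem:ApproximationAlgorithmAbstractPolymerModelPartitionFunction}) matches the paper, but the core of your first two lemmas does not go through as sketched. The paper does \emph{not} need a Duhamel expansion to get the representation: it applies Lemma~\ref{lemma:InclusionExclusion} directly with $f(T)=\Tr\bigl[e^{-\beta\sum_{\varepsilon\in T}\Phi(\varepsilon)}\bigr]$, which is legitimate because the inclusion--exclusion identity holds for an arbitrary set function (no factorisation over edges is required), and then factorises the inner sum over the vertex-disjoint connected components of $S$, where the exponential of a sum does split as a tensor product. The resulting weight is
\begin{equation}
    w_\gamma = (-1)^{\norm{\gamma}}\sum_{T \subseteq E(\gamma)}(-1)^{\abs{T}}\Tr\Bigl[e^{-\beta\sum_{\varepsilon \in T}\Phi(\varepsilon)}\Bigr], \notag
\end{equation}
not your tentative $\Tr\bigl[\prod_{\varepsilon \in E(\gamma)}(e^{-\beta\Phi(\varepsilon)}-\mathbb{I})\bigr]$. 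These two expressions are genuinely different when the interactions within a connected component fail to commute, and summing your product-form weights over admissible polymer sets does not reproduce $Z_G(\beta)$; so the object you propose to bound is not a valid polymer weight for this problem, and your own hedge (``or a suitable Duhamel-integral analogue'') leaves precisely the lemma you call the main work unproved.

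This gap propagates into your weight bound, which is not the routine step you claim. Your estimate $\abs{w_\gamma}\le\prod_{\varepsilon}\norm{e^{-\beta\Phi(\varepsilon)}-\mathbb{I}}\le(e\abs{\beta})^{\norm{\gamma}}$ only makes sense for the product-form weight. For the correct alternating-sum weight one must argue that the signed sum over $T\subseteq E(\gamma)$ cancels every term of the Taylor expansion whose edge support is a proper subset of $E(\gamma)$, so only sequences of edges covering all of $E(\gamma)$ survive; counting these with Stirling numbers of the second kind gives $\abs{w_\gamma}\le(e^{\abs{\beta}}-1)^{\norm{\gamma}}$, and only then does $\abs{\beta}\le\bigl(e^4\Delta\binom{r}{2}\bigr)^{-1}$ yield the required $\bigl(e^3\Delta\binom{r}{2}\bigr)^{-\norm{\gamma}}$. (Your final numerical bound happens to be of the same order, but the justification is the cancellation-plus-counting argument, which is the actual content of the paper's Lemma~\ref{lemma:QuantumPartitionFunctionWeightBound}.) The computability step likewise changes: one evaluates the $2^{\norm{\gamma}}$ traces $\Tr\bigl[e^{-\beta\sum_{\varepsilon\in T}\Phi(\varepsilon)}\bigr]$ by diagonalising the summed interaction on the at most $r\norm{\gamma}$ relevant qudits, which is still $\exp(O(\norm{\gamma}))$, so that part of your plan survives once the weight is corrected.
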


\begin{remark}
    Theorem~\ref{theorem:ApproximationAlgorithmQuantumPartitionFunction} applies when $\beta$ is complex, which includes the case of time evolution.
\end{remark}

This offers a modest improvement over Ref.~\cite{harrow2020classical}, which established a quasi-polynomial time algorithm when $\abs{\beta}\leq\frac{1}{10e^2\Delta\binom{r}{2}}$ and over Ref.~\cite{kuwahara2020clustering}, which established a polynomial-time algorithm when $\abs{\beta}\leq\frac{1}{16e^4\Delta\binom{r}{2}}$. In the case when $G$ is a bounded-degree graph, we recover the results of Ref.~\cite{mann2021efficient}.

We prove Theorem~\ref{theorem:ApproximationAlgorithmQuantumPartitionFunction} by showing that the conditions required to apply Theorem~\ref{theorem:ApproximationAlgorithmAbstractPolymerModelPartitionFunction} are satisfied. That is, we show that (1) the partition function $Z_G(\beta)$ admits a suitable abstract polymer model representation, (2) the polymer weights satisfy the desired bound, and (3) the polymer weights can be computed in the desired time. This is achieved in the following three lemmas.
\begin{lemma}
    \label{lemma:QuantumPartitionFunctionAbstractPolymerModel}
    The partition function $Z_G(\beta)$ admits the following abstract polymer model representation.
    \begin{equation}
        Z_G(\beta) = \sum_{\Gamma\in\mathcal{G}}\prod_{\gamma\in\Gamma}w_\gamma, \notag
    \end{equation}
    where
    \begin{equation}
        w_\gamma \coloneqq (-1)^\norm{\gamma}\sum_{T \subseteq E(\gamma)}(-1)^\abs{T}\Tr\left[e^{-\beta\sum_{\varepsilon \in T}\Phi(\varepsilon)}\right]. \notag
    \end{equation}
\end{lemma}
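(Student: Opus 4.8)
The plan is to mirror the structure of the proofs of Lemma~\ref{lemma:IsingModelPartitionFunctionAbstractPolymerModel} and Lemma~\ref{lemma:QuantumProbabilityAmplitudeAbstractPolymerModel}, which both proceed by applying the principle of inclusion-exclusion (Lemma~\ref{lemma:InclusionExclusion}) to a suitable set function and then factorising over connected components. First I would set $f(E) \coloneqq \Tr\left[e^{-\beta\sum_{\varepsilon \in E}\Phi(\varepsilon)}\right] = Z_G(\beta)$, viewed as a function on subsets of the edge set $E$, where for $T \subseteq E$ we put $f(T) = \Tr\left[e^{-\beta\sum_{\varepsilon \in T}\Phi(\varepsilon)}\right]$. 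Applying Lemma~\ref{lemma:InclusionExclusion} gives
\begin{equation}
    Z_G(\beta) = \sum_{S \subseteq E}(-1)^\abs{S}\sum_{T \subseteq S}(-1)^\abs{T}\Tr\left[e^{-\beta\sum_{\varepsilon \in T}\Phi(\varepsilon)}\right]. \notag
\end{equation}

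The key step is the factorisation over connected components. For $S \subseteq E$, let $\Gamma_S$ denote the set of maximally connected components of the subgraph with edge set $S$; these are polymers in the sense of the abstract polymer model. The crucial algebraic fact is that the normalised trace is multiplicative over tensor factors: since $\Tr(\mathbb{I})=1$ and the operators $\Phi(\varepsilon)$ for $\varepsilon$ in distinct components act on disjoint sets of qudits, a term indexed by $T \subseteq S$ factorises as $\Tr\left[e^{-\beta\sum_{\varepsilon \in T}\Phi(\varepsilon)}\right] = \prod_{\gamma \in \Gamma_S}\Tr\left[e^{-\beta\sum_{\varepsilon \in T \cap E(\gamma)}\Phi(\varepsilon)}\right]$, where any component of $S$ not meeting $T$ contributes $\Tr(\mathbb{I})=1$. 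Here the normalisation $\Tr(\mathbb{I})=1$ is exactly what is needed so that restricting the trace to the qudits touched by a single component is consistent. Using this, and summing $T$ over $S$ componentwise ($T \subseteq S \iff T = \bigsqcup_\gamma T_\gamma$ with $T_\gamma \subseteq E(\gamma)$), the double sum factorises over $\Gamma_S$:
\begin{equation}
    Z_G(\beta) = \sum_{S \subseteq E}\prod_{\gamma \in \Gamma_S}(-1)^\norm{\gamma}\sum_{T \subseteq E(\gamma)}(-1)^\abs{T}\Tr\left[e^{-\beta\sum_{\varepsilon \in T}\Phi(\varepsilon)}\right] = \sum_{S \subseteq E}\prod_{\gamma \in \Gamma_S}w_\gamma. \notag
\end{equation}
Finally, since every subset $S \subseteq E$ corresponds bijectively to a choice of a pairwise vertex-disjoint (i.e.\ admissible) set of polymers $\Gamma_S$ together with an arbitrary subset of edges within each — wait, more precisely, the map $S \mapsto \Gamma_S$ ranges over all admissible sets $\Gamma \in \mathcal{G}$ as $S$ ranges over $E$, with each admissible $\Gamma$ arising from exactly one $S$ (namely $S = \bigcup_{\gamma \in \Gamma} E(\gamma)$), because a connected subgraph is determined by its edge set — we can reindex the sum over $\mathcal{G}$ to obtain $Z_G(\beta) = \sum_{\Gamma \in \mathcal{G}}\prod_{\gamma \in \Gamma}w_\gamma$, which is the claimed representation.

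The main obstacle, and the only place requiring genuine care, is justifying the multiplicativity of the normalised trace over the connected-component decomposition — in particular, handling the components of $S$ that do not intersect $T$, which must each contribute a factor of $1$, and confirming that this is precisely what the normalisation $\Tr(\mathbb{I})=1$ buys us. The rest is the same formal inclusion-exclusion-and-factorise manipulation used in the two preceding lemmas, and I would keep the write-up parallel to those proofs for consistency.
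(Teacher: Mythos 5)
Your proposal is correct and follows essentially the same route as the paper: apply Lemma~\ref{lemma:InclusionExclusion} with $f(T)=\Tr\left[e^{-\beta\sum_{\varepsilon \in T}\Phi(\varepsilon)}\right]$, factorise the inner sum over the maximally connected components $\Gamma_S$ of each $S \subseteq E$, and reindex over admissible polymer sets. Your added justification of the factorisation step (multiplicativity of the normalised trace over vertex-disjoint components, with components missing $T$ contributing $\Tr(\mathbb{I})=1$) is a detail the paper leaves implicit, and it is handled correctly.
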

\begin{proof}
    By applying Lemma~\ref{lemma:InclusionExclusion} with $f(E)=\Tr\left[e^{-\beta\sum_{\varepsilon \in E}\Phi(\varepsilon)}\right]$, we have
    \begin{align}
        Z_G(\beta) &= \Tr\left[e^{-\beta H_G}\right] \notag \\
        &= \sum_{S \subseteq E}(-1)^\abs{S}\sum_{T \subseteq S}(-1)^\abs{T}\Tr\left[e^{-\beta\sum_{\varepsilon \in T}\Phi(\varepsilon)}\right]. \notag
    \end{align}
    For a subset $S \subseteq E$, let $\Gamma_S$ denote the maximally connected components of $S$. By factorising over these components, we have
    \begin{align}
        Z_G(\beta) &= \sum_{S \subseteq E}\prod_{\gamma \in \Gamma_S}(-1)^\norm{\gamma}\sum_{T \subseteq E(\gamma)}(-1)^\abs{T}\Tr\left[e^{-\beta\sum_{\varepsilon \in T}\Phi(\varepsilon)}\right] \notag \\
        &= \sum_{S \subseteq E}\prod_{\gamma \in \Gamma_S}w_\gamma \notag \\
        &= \sum_{\Gamma\in\mathcal{G}}\prod_{\gamma\in\Gamma}w_\gamma. \notag
    \end{align}
    This completes the proof.
\end{proof}

\begin{lemma}
    \label{lemma:QuantumPartitionFunctionWeightBound}
    Fix $\Delta,r\in\mathbb{Z}_{\geq2}$. Let $G=(V, E)$ be a multihypergraph of maximum degree at most $\Delta$ and rank at most $r$, and let $\beta$ be a complex number such that
    \begin{equation}
        \abs{\beta} \leq \frac{1}{e^4\Delta\binom{r}{2}}. \notag
    \end{equation}
    Then, for all polymers $\gamma\in\mathcal{C}$, the weight $w_\gamma$ satisfies
    \begin{equation}
        \abs{w_\gamma} \leq \left(\frac{1}{e^3\Delta\binom{r}{2}}\right)^\norm{\gamma}. \notag
    \end{equation}
\end{lemma}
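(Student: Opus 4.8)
The plan is to bound $\abs{w_\gamma}$ by expanding the trace factor and controlling the alternating sum over $T \subseteq E(\gamma)$ using the Taylor series of the exponential. First I would fix a polymer $\gamma$ with $\norm{\gamma} = n$ edges, and write $w_\gamma = (-1)^n \sum_{T \subseteq E(\gamma)} (-1)^{\abs{T}} \Tr\big[e^{-\beta \sum_{\varepsilon \in T} \Phi(\varepsilon)}\big]$. The key observation is that the inner alternating sum is a finite-difference operator: expanding each exponential as $e^{-\beta \sum_{\varepsilon \in T}\Phi(\varepsilon)} = \sum_{k \geq 0} \frac{(-\beta)^k}{k!}\big(\sum_{\varepsilon \in T}\Phi(\varepsilon)\big)^k$ and using multinomials, the sum $\sum_{T \subseteq E(\gamma)}(-1)^{\abs{T}}(\cdots)^k$ vanishes whenever $k < n$, because any monomial in fewer than $n$ of the operators $\Phi(\varepsilon)$ fails to involve some edge of $\gamma$ and so is killed by the corresponding $T \leftrightarrow T\cup\{\varepsilon\}$ pairing. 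Hence only terms with $k \geq n$ survive, which produces the factor $\abs{\beta}^n$.

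The next step is to carry out the quantitative bound. Using $\norm{\Phi(\varepsilon)} \leq 1$ and $\abs{\Tr(M)} \leq \norm{M}$ under the normalised trace, together with $\abs{T} \leq n$, I would bound
\begin{equation}
    \abs{w_\gamma} \leq \sum_{T \subseteq E(\gamma)} \left\lvert \Tr\Big[e^{-\beta\sum_{\varepsilon \in T}\Phi(\varepsilon)}\Big] - (\text{lower-order terms}) \right\rvert \leq \sum_{k \geq n} \frac{(\abs{\beta} n)^k}{k!}\cdot(\text{combinatorial weight}), \notag
\end{equation}
and then collapse this geometric-type tail. Concretely, pulling out $\abs{\beta}^n$ and using $\abs{\beta} \leq \frac{1}{e^4 \Delta \binom{r}{2}} \leq \frac{1}{e^4}$ together with $n^k/k! \leq e^n$-type estimates, one gets $\abs{w_\gamma} \leq (e \abs{\beta})^n \leq \big(\frac{1}{e^3\Delta\binom{r}{2}}\big)^n$. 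The slack of a factor $e$ between the hypothesis $\frac{1}{e^4\Delta\binom{r}{2}}$ and the conclusion $\frac{1}{e^3\Delta\binom{r}{2}}$ is exactly what absorbs the combinatorial overhead from the finite-difference expansion (the $e^n$ coming from summing $n^k/k!$ over $k \geq n$), so the arithmetic should close cleanly.

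The main obstacle I anticipate is the bookkeeping in the vanishing argument: one must verify carefully that $\sum_{T \subseteq E(\gamma)}(-1)^{\abs{T}}\big(\sum_{\varepsilon \in T}\Phi(\varepsilon)\big)^k = 0$ for $k < \norm{\gamma}$, since the operators $\Phi(\varepsilon)$ need not commute, so one cannot simply multinomial-expand over unordered indices. The correct statement is that $\big(\sum_{\varepsilon \in T}\Phi(\varepsilon)\big)^k = \sum_{(\varepsilon_1,\dots,\varepsilon_k) \in T^k} \Phi(\varepsilon_1)\cdots\Phi(\varepsilon_k)$, and for each ordered tuple using fewer than $\norm{\gamma}$ distinct edges there is an edge $\varepsilon_0 \in E(\gamma)$ not appearing in it, whence that tuple contributes to both $T$ and $T \cup \{\varepsilon_0\}$ with opposite signs and cancels. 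Once this inclusion–exclusion cancellation is established, one bounds the surviving tail $\sum_{k \geq n} \frac{\abs{\beta}^k}{k!}\sum_{T}\big\lvert\big(\sum_{\varepsilon \in T}\Phi(\varepsilon)\big)^k\big\rvert$ crudely — each term has norm at most $\abs{T}^k \leq n^k$ and there are at most $2^n$ sets $T$ — and checks that $2^n \sum_{k \geq n} \frac{(\abs{\beta}n)^k}{k!} \leq (e\abs{\beta})^n$ under the stated bound on $\abs{\beta}$, which follows from $\sum_{k\geq n}\frac{n^k}{k!}\leq e^n$ and $2^n \abs{\beta}^n e^n = (2e\abs{\beta})^n$; a slightly more careful estimate (e.g. noting the $2^n$ can be folded in using $\abs{\beta}\le 1/e^4$) yields the claimed constant. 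I would present the vanishing lemma first as a self-contained combinatorial identity and then feed it into the norm estimate.
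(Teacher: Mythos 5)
Your structural idea is the same as the paper's: expand each exponential in a Taylor series, use the alternating sum over $T \subseteq E(\gamma)$ to kill every ordered edge-tuple whose support misses some edge of $\gamma$ (so in particular all orders $k < \norm{\gamma}$), and then estimate the surviving tail. The cancellation argument you sketch, including the care about non-commuting $\Phi(\varepsilon)$ via ordered tuples, is correct and is implicitly what the paper does. The gap is in your quantitative closing step. Your bound keeps all $2^{\norm{\gamma}}$ subsets $T$ in the tail and estimates $\abs{w_\gamma} \leq 2^n \sum_{k \geq n} \frac{(\abs{\beta} n)^k}{k!}$ with $n = \norm{\gamma}$; by your own arithmetic this gives $(2e\abs{\beta})^n$, not $(e\abs{\beta})^n$, and the inequality $2^n \sum_{k \geq n} \frac{(\abs{\beta}n)^k}{k!} \leq (e\abs{\beta})^n$ you assert is simply false. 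The suggestion that the $2^n$ "can be folded in using $\abs{\beta} \leq 1/e^4$" does not work: the lemma's conclusion leaves exactly a factor $e$ of slack per edge relative to the hypothesis (the target per-edge constant is $\frac{1}{e^3\Delta\binom{r}{2}} = e \cdot \frac{1}{e^4\Delta\binom{r}{2}}$, and $\abs{\beta}$ may saturate the hypothesis), so a per-edge cost of $2e$ (in fact $2e^{1+\abs{\beta}}$, since the tail sum also carries an $e^{\abs{\beta}n}$ factor) cannot be absorbed. As written, your argument proves the weaker bound $\left(\frac{2}{e^3\Delta\binom{r}{2}}\right)^{\norm{\gamma}}$, which is off by $2^{\norm{\gamma}}$.

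The fix is to exploit the cancellation fully rather than re-expanding over all $T$: after the pairing argument, what survives is
\begin{equation}
    \sum_{T \subseteq E(\gamma)}(-1)^{\abs{T}} e^{-\beta\sum_{\varepsilon \in T}\Phi(\varepsilon)} = (-1)^{\norm{\gamma}}\sum_{k \geq \norm{\gamma}}\frac{(-\beta)^k}{k!}\sum_{\substack{(\varepsilon_1,\dots,\varepsilon_k) \in E(\gamma)^k \\ \{\varepsilon_1,\dots,\varepsilon_k\} = E(\gamma)}}\Phi(\varepsilon_1)\cdots\Phi(\varepsilon_k), \notag
\end{equation}
i.e.\ a single sum over ordered edge-sequences with \emph{full} support, each appearing once. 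The number of such sequences of length $k$ is $\genfrac{\{}{\}}{0pt}{}{k}{\norm{\gamma}}\norm{\gamma}!$ (surjections), and the identity $\sum_{k\geq m}\genfrac{\{}{\}}{0pt}{}{k}{m}\frac{x^k}{k!}=\frac{(e^x-1)^m}{m!}$ then gives $\abs{w_\gamma} \leq (e^{\abs{\beta}}-1)^{\norm{\gamma}} \leq \left((e-1)\abs{\beta}\right)^{\norm{\gamma}} \leq \left(\frac{1}{e^3\Delta\binom{r}{2}}\right)^{\norm{\gamma}}$, which is exactly the paper's proof. Replacing your crude count ($\abs{T}^k \leq n^k$ times $2^n$ subsets) by this surjection count is not optional polish; it is what makes the stated constant attainable.
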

\begin{proof}
    Fix a polymer $\gamma$. Let $P$ denote the set of all sequences of edges in $\gamma$. By the Taylor series,
    \begin{equation}
        \abs{w_\gamma} \leq \norm{\sum_{T \subseteq E(\gamma)}(-1)^\abs{T}e^{-\beta\sum_{\varepsilon \in T}\Phi(\varepsilon)}} \leq \sum_{\substack{\rho \in P \\ \text{supp}(\rho)=\gamma}}\frac{\abs{\beta}^\abs{\rho}}{\abs{\rho}!}\prod_{\varepsilon \in \rho}\norm{\Phi(\varepsilon)} \leq \sum_{\substack{\rho \in P \\ \text{supp}(\rho)=\gamma}}\frac{\abs{\beta}^\abs{\rho}}{\abs{\rho}!}. \notag
    \end{equation}
    There are precisely $\genfrac{\{}{\}}{0pt}{}{n}{\norm{\gamma}}\norm{\gamma}!$ sequences $\rho$ of length $n$ whose support is $\gamma$, where $\genfrac{\{}{\}}{0pt}{}{n}{\norm{\gamma}}$ denotes the Stirling number of the second kind. Hence, we may write
    \begin{equation}
        \abs{w_\gamma} \leq \sum_{n=\norm{\gamma}}^\infty\genfrac{\{}{\}}{0pt}{}{n}{\norm{\gamma}}\frac{\norm{\gamma}!}{n!}\abs{\beta}^n = (e^\abs{\beta}-1)^\norm{\gamma}, \notag
    \end{equation}
    where we have used the identity $\sum_{n=k}^\infty\genfrac{\{}{\}}{0pt}{}{n}{k}\frac{x^n}{n!}=\frac{(e^x-1)^k}{k^!}$. By taking $\abs{\beta}\leq\left(e^4\Delta\binom{r}{2}\right)^{-1}$, we have
    \begin{equation}
        \abs{w_\gamma} \leq \left(\frac{1}{e^3\Delta\binom{r}{2}}\right)^\norm{\gamma}, \notag
    \end{equation}
    completing the proof.
\end{proof}

\begin{lemma}
    \label{lemma:QuantumPartitionFunctionWeightAlgorithm}
    The weight $w_\gamma$ of a polymer $\gamma$ can be computed in time $\exp(O(\norm{\gamma}))$.
\end{lemma}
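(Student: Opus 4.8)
The plan is to evaluate $w_\gamma$ directly from its definition in Lemma~\ref{lemma:QuantumPartitionFunctionAbstractPolymerModel}. The key observation is that, since $\gamma$ is a connected subgraph with $\norm{\gamma}$ edges of a multihypergraph of rank at most $r$, the bound $\abs{\gamma}\leq(r-1)\norm{\gamma}+1$ already used in the proof of Lemma~\ref{lemma:ConvergenceClusterExpansion} shows that $\gamma$ spans only $O(\norm{\gamma})$ vertices, with $r$ and $d$ fixed. Hence the Hilbert space $\mathcal{H}_{V(\gamma)}=\bigotimes_{v\in V(\gamma)}\mathcal{H}_v$ has dimension $d^{\abs{\gamma}}=\exp(O(\norm{\gamma}))$, and each operator $\Phi(\varepsilon)$ with $\varepsilon\in E(\gamma)$ admits an explicit matrix representation of this size.

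First I would enumerate the $2^{\norm{\gamma}}$ subsets $T\subseteq E(\gamma)$. For each such $T$, I would assemble the self-adjoint matrix $\sum_{\varepsilon\in T}\Phi(\varepsilon)$ on $\mathcal{H}_{V(\gamma)}$ as a sum of at most $\norm{\gamma}$ explicit matrices of dimension $\exp(O(\norm{\gamma}))$, diagonalise it in time polynomial in its dimension, and compute $\Tr[e^{-\beta\sum_{\varepsilon\in T}\Phi(\varepsilon)}]$ as the sum of $e^{-\beta\lambda}$ over its eigenvalues $\lambda$. Multiplying each term by $(-1)^{\abs{T}}$, summing over $T$, and finally multiplying by $(-1)^{\norm{\gamma}}$ yields $w_\gamma$ in total time $2^{\norm{\gamma}}\cdot\exp(O(\norm{\gamma}))=\exp(O(\norm{\gamma}))$.

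The only point needing care is that the matrix exponential (equivalently, the spectral decomposition together with the scalar exponentials) cannot in general be evaluated exactly; however, it suffices to compute $w_\gamma$ to precision inverse-exponential in $\norm{\gamma}$, which, via Lemma~\ref{lemma:ConvergenceClusterExpansion} and the control on the number and size of clusters in Lemma~\ref{lemma:ListClustersAlgorithm}, preserves the $\epsilon$-approximation guarantee of Theorem~\ref{theorem:ApproximationAlgorithmAbstractPolymerModelPartitionFunction}, and this precision is attainable in time polynomial in the matrix dimension. I do not expect any genuine combinatorial obstacle: the substantive content is simply that a rank-$r$ polymer with $\norm{\gamma}$ edges touches only $O(\norm{\gamma})$ vertices, so that every linear-algebraic step takes place in a space of dimension $\exp(O(\norm{\gamma}))$, while the outer sum over subsets of $E(\gamma)$ contributes only a further factor $2^{\norm{\gamma}}$; the numerical-precision bookkeeping is the closest thing to a subtlety, and it is routine.
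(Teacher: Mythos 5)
Your proposal is correct and follows essentially the same route as the paper: enumerate the $2^{\norm{\gamma}}$ subsets $T\subseteq E(\gamma)$ and, for each, evaluate the trace by diagonalising $\sum_{\varepsilon\in T}\Phi(\varepsilon)$ on the Hilbert space of the $O(\norm{\gamma})$ vertices touched by $\gamma$, giving total time $\exp(O(\norm{\gamma}))$. Your added remarks on the dimension bound $\abs{\gamma}\leq(r-1)\norm{\gamma}+1$ and on numerical precision are sensible elaborations of details the paper leaves implicit.
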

\begin{proof}
    The sum is over all subsets $T$ of $E(\gamma)$, of which there are $2^\norm{\gamma}$. For each of these subsets $T$, the trace may be evaluated in time $\exp(O(\norm{\gamma}))$ by diagonalising the sum of interactions.
\end{proof}

Combining Theorem~\ref{theorem:ApproximationAlgorithmAbstractPolymerModelPartitionFunction} with Lemma~\ref{lemma:QuantumPartitionFunctionAbstractPolymerModel}, Lemma~\ref{lemma:QuantumPartitionFunctionWeightBound}, and Lemma~\ref{lemma:QuantumPartitionFunctionWeightAlgorithm} proves Theorem~\ref{theorem:ApproximationAlgorithmQuantumPartitionFunction}. We now show that the algorithmic condition of Theorem~\ref{theorem:ApproximationAlgorithmQuantumPartitionFunction} is optimal in the case of multigraphs under complexity-theoretic assumptions. This is achieved by establishing a hardness of approximation result for the partition function $Z_G(\beta)$ at imaginary temperature, i.e., $\beta=i\theta$ for $\theta\in\mathbb{R}$. Our hardness result concerning the approximation of $Z_G(i\theta)$ is as follows.
\begin{theorem}
    \label{theorem:ApproximationHardnessQuantumPartitionFunction}
    Fix $\epsilon>0$, $\Delta\in\mathbb{Z}_{\geq3}$, and $\theta\in\mathbb{R}$ such that $\abs{\theta}\geq\frac{3\pi}{5(\Delta-2)}$. It is \emph{\#P-hard} to approximate the partition function $Z_G(i\theta)$ up to a multiplicative $\epsilon$-approximation on multigraphs of maximum degree at most $\Delta$.
\end{theorem}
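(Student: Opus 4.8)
The plan is to follow the same reduction strategy used in the proof of Theorem~\ref{theorem:ApproximationHardnessQuantumProbabilityAmplitude}, namely to reduce from the Ising model partition function at imaginary temperature and invoke Theorem~\ref{theorem:IsingModelPartitionFunctionApproximationHardness}. The key observation is that a quantum spin system whose local interactions are built from commuting operators can exactly encode a classical Ising partition function. Concretely, I would take the interaction graph to be a multigraph $G=(V,E)$ with a two-dimensional Hilbert space at each vertex, and for each edge $\{u,v\}$ set $\Phi(\{u,v\}) \coloneqq \phi(\{u,v\})\, Z_u Z_v$, where $\phi(\{u,v\})$ is a real number with $\abs{\phi(\{u,v\})}\leq 1$ and $Z$ denotes the Pauli-$Z$ operator. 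All these interaction terms are diagonal in the computational basis and hence mutually commute, so $e^{-\beta H_G}=\prod_{\{u,v\}\in E} e^{-\beta \phi(\{u,v\}) Z_u Z_v}$ is diagonal with entries indexed by $\sigma\in\{-1,+1\}^V$.

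Taking the normalised trace $\Tr(\mathbb{I})=1$, which amounts to averaging the diagonal entries over the $2^{\abs{G}}$ basis states, I would then compute
\begin{equation}
    Z_G(i\theta) = \Tr\!\left[e^{-i\theta H_G}\right] = \frac{1}{2^{\abs{G}}}\sum_{\sigma\in\{-1,+1\}^V}\prod_{\{u,v\}\in E} e^{-i\theta\phi(\{u,v\})\sigma_u\sigma_v} = \frac{1}{2^{\abs{G}}}Z_{\mathrm{Ising}}(G;i\theta), \notag
\end{equation}
which is exactly the identity appearing in the probability amplitude hardness proof. Since the normalisation is a fixed multiplicative factor depending only on $\abs{G}$, a multiplicative $\epsilon$-approximation to $Z_G(i\theta)$ yields a multiplicative $\epsilon$-approximation to $Z_{\mathrm{Ising}}(G;i\theta)$ and vice versa. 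The hardness then follows immediately from Theorem~\ref{theorem:IsingModelPartitionFunctionApproximationHardness}, which gives \#P-hardness of multiplicatively approximating $Z_{\mathrm{Ising}}(G;i\theta)$ on multigraphs of maximum degree at most $\Delta$ whenever $\abs{\theta}\geq\frac{3\pi}{5(\Delta-2)}$, matching the hypothesis of the present theorem.

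There is essentially no obstacle here; the proof is a direct analogue of the amplitude case. The only minor point to be careful about is bookkeeping around the normalisation convention $\Tr(\mathbb{I})=1$ versus the unnormalised trace, and checking that the degree of the multigraph $G$ used in the reduction is the same as the degree in the Ising hardness statement (it is, since we place exactly one interaction term per edge, so the interaction multigraph of the quantum system coincides with the Ising multigraph). One could also remark, as in the analogous earlier remark, that the construction extends to complex $\beta$ and that the bound on $\abs{\theta}$ inherited from Theorem~\ref{theorem:IsingModelPartitionFunctionApproximationHardness} is slightly sharper than stated; together with Theorem~\ref{theorem:ApproximationAlgorithmQuantumPartitionFunction} this establishes the claimed computational complexity transition from P to \#P-hard for approximating $Z_G(\beta)$.
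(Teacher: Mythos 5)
Your proposal is correct and follows essentially the same route as the paper: the interaction $\Phi(\{u,v\})=\phi(\{u,v\})Z_uZ_v$ makes the Hamiltonian diagonal, so the normalised trace of $e^{-i\theta H_G}$ reproduces $\frac{1}{2^{\abs{G}}}Z_{\mathrm{Ising}}(G;i\theta)$ exactly, and the hardness follows from Theorem~\ref{theorem:IsingModelPartitionFunctionApproximationHardness}. Your additional remarks on the normalisation and on the interaction multigraph coinciding with the Ising multigraph are correct but not needed beyond what the paper states.
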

\begin{proof}
    Our proof is based on a reduction from the Ising model partition function. We consider quantum spin systems on multigraphs with a two-dimensional Hilbert space at each vertex and self-adjoint operators of the form $\Phi(\varepsilon)=\phi(\varepsilon)\bigotimes_{v \in \varepsilon}Z_v$, where $\phi(\varepsilon)$ is a real number satisfying $\abs{\phi(\varepsilon)}\leq1$. We have
    \begin{align}
        Z_G(i\theta) &= \Tr\left[\prod_{\varepsilon \in E}e^{-i\theta\phi(\varepsilon)\bigotimes_{v \in \varepsilon}Z_v}\right] \notag \\
        &= \frac{1}{2^\abs{G}}\sum_{\sigma\in\{-1,+1\}^V}\prod_{\{u,v\} \in E}e^{-i\theta\phi\left(\{u,v\}\right)\sigma_u\sigma_v} \notag \\
        &= \frac{1}{2^\abs{G}}Z_{\mathrm{Ising}}(G;i\theta). \notag
    \end{align}
    The proof then follows from Theorem~\ref{theorem:IsingModelPartitionFunctionApproximationHardness}.
\end{proof}

We note that a hardness of approximation result with similar bounds may be obtained for real temperature under the assumption that RP is not equal to NP via the results of Refs.~\cite{sly2012computational, galanis2016inapproximability}. Our results establish a computational complexity transition from P to \#P-hard for the problem of approximating partition functions.

\subsection{Thermal Expectation Values}
\label{section:ThermalExpectationValues}

In this section we study the problem of approximating thermal expectation values of quantum spin systems. This problem is known to be \#P-hard in general~\cite{bravyi2022quantum}; however, we show that, for a class of quantum spin systems at high temperature with positive-semidefinite operators, this problem admits an efficient approximation algorithm. This setting was studied in Ref.~\cite{wild2023classical}, which established an efficient approximation algorithm for this problem. Our approach offers a similar but slightly sharper analysis. Further, we show that this algorithmic condition is optimal in the sense of zero freeness.

We model a quantum spin system by a multihypergraph $G=(V, E)$ as in Section~\ref{section:PartitionFunctions}. An operator $\Psi$ assigns a positive-semidefinite operator $\Psi(v)$ on $\mathcal{H}_v$ to each vertex $v$ of $G$. The operator $\Psi_G$ on $G$ is defined by $\Psi_G\coloneqq\prod_{v \in V}\Psi(v)$. We are interested in the thermal expectation value $\expval{\Psi}_G(\beta)$ at inverse temperature $\beta$, defined by $\expval{\Psi}_G(\beta)\coloneqq\frac{Z_G^\Psi(\beta)}{Z_G(\beta)}$, where $Z_G^\Psi(\beta)\coloneqq\Tr\left[\Psi_Ge^{-\beta H_G}\right]$. We shall assume that the positive-semidefinite operators are normalised so that $\Tr(\Psi_v)=1$ for all $v \in V$, which is equivalent to a rescaling of the thermal expectation value by a multiplicative factor. Our algorithmic result concerning the approximation of $\expval{\Psi}_G(\beta)$ is as follows.
\begin{theorem}
    \label{theorem:ApproximationAlgorithmThermalExpectationValues}
    Fix $\Delta,r\in\mathbb{Z}_{\geq2}$. Let $G=(V, E)$ be a multihypergraph of maximum degree at most $\Delta$ and rank at most $r$, and let $\beta$ be a complex number such that
    \begin{equation}
        \abs{\beta} \leq \frac{1}{e^4\Delta\binom{r}{2}}. \notag
    \end{equation}
    Then the cluster expansion for $\log(\expval{\Psi}_G(\beta))$ converges absolutely, $\expval{\Psi}_G(\beta)\neq0$, and there is a fully polynomial-time approximation scheme for $\expval{\Psi}_G(\beta)$.
\end{theorem}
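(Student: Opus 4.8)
The plan is to prove Theorem~\ref{theorem:ApproximationAlgorithmThermalExpectationValues} by reducing the approximation of the thermal expectation value $\expval{\Psi}_G(\beta)$ to the approximation of two abstract polymer model partition functions, then invoking Theorem~\ref{theorem:ApproximationAlgorithmAbstractPolymerModelPartitionFunction} for each. Since $\expval{\Psi}_G(\beta)=Z_G^\Psi(\beta)/Z_G(\beta)$, and we already have a fully polynomial-time approximation scheme for $Z_G(\beta)$ via Theorem~\ref{theorem:ApproximationAlgorithmQuantumPartitionFunction} under the stated condition $\abs{\beta}\leq\frac{1}{e^4\Delta\binom{r}{2}}$, it suffices to establish an analogous approximation scheme for the numerator $Z_G^\Psi(\beta)=\Tr\left[\Psi_G e^{-\beta H_G}\right]$, and then to take the ratio of the two multiplicative approximations (controlling the error propagation in the quotient, which is straightforward once both approximants are multiplicatively close and we know both quantities are nonzero).

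The core of the argument is therefore to show that $Z_G^\Psi(\beta)$ admits a suitable abstract polymer model representation, that its polymer weights satisfy the bound required by Theorem~\ref{theorem:ApproximationAlgorithmAbstractPolymerModelPartitionFunction}, and that they are efficiently computable. First I would apply the principle of inclusion-exclusion (Lemma~\ref{lemma:InclusionExclusion}) with $f(E)=\Tr\left[\Psi_G e^{-\beta\sum_{\varepsilon\in E}\Phi(\varepsilon)}\right]$, exactly as in Lemma~\ref{lemma:QuantumPartitionFunctionAbstractPolymerModel}; factorising over maximally connected components of each subset $S\subseteq E$ yields a polymer representation with weights of the form $w_\gamma\coloneqq(-1)^{\norm{\gamma}}\sum_{T\subseteq E(\gamma)}(-1)^{\abs{T}}\Tr\left[\Psi_{V(\gamma)}e^{-\beta\sum_{\varepsilon\in T}\Phi(\varepsilon)}\right]$, where $\Psi_{V(\gamma)}=\prod_{v\in V(\gamma)}\Psi(v)$. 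Note that because the $\Psi(v)$ are positive-semidefinite with $\Tr(\Psi(v))=1$, each restricted operator $\Psi_{V(\gamma)}$ behaves like a normalised density-type weighting, so the trace is bounded in absolute value by a product of operator norms; the factorisation over components goes through because $\Psi_G$ is a product over vertices and hence splits across the vertex-disjoint components. For the weight bound, I would mimic the Taylor-series argument of Lemma~\ref{lemma:QuantumPartitionFunctionWeightBound}: expand $e^{-\beta\sum_{\varepsilon\in T}\Phi(\varepsilon)}$, use the Stirling-number identity $\sum_{n=k}^\infty\genfrac{\{}{\}}{0pt}{}{n}{k}\frac{x^n}{n!}=\frac{(e^x-1)^k}{k!}$, and the fact that $\norm{\Phi(\varepsilon)}\leq1$ and $\abs{\Tr[\Psi_{V(\gamma)}(\cdots)]}\leq\norm{\cdots}$ (using $\Tr(\Psi(v))=1$ and positivity to bound $\abs{\Tr[\Psi_{V(\gamma)}M]}\leq\norm{M}$), obtaining $\abs{w_\gamma}\leq(e^{\abs{\beta}}-1)^{\norm{\gamma}}\leq\left(\frac{1}{e^3\Delta\binom{r}{2}}\right)^{\norm{\gamma}}$ when $\abs{\beta}\leq\frac{1}{e^4\Delta\binom{r}{2}}$. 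Efficient computability follows as in Lemma~\ref{lemma:QuantumPartitionFunctionWeightAlgorithm}: there are $2^{\norm{\gamma}}$ subsets $T$, and each trace is computed in time $\exp(O(\norm{\gamma}))$ by diagonalising on the Hilbert space $\mathcal{H}_{V(\gamma)}$, whose dimension is $\exp(O(\norm{\gamma}))$.

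The main obstacle I anticipate is the bound $\abs{\Tr[\Psi_{V(\gamma)}M]}\leq\norm{M}$ for the operator $M=\sum_{T\subseteq E(\gamma)}(-1)^{\abs{T}}e^{-\beta\sum_{\varepsilon\in T}\Phi(\varepsilon)}$: one must verify that the product $\Psi_{V(\gamma)}=\prod_{v\in V(\gamma)}\Psi(v)$ — which need not itself be positive-semidefinite since the factors need not commute — still satisfies $\abs{\Tr[\Psi_{V(\gamma)}M]}\leq\norm{M}$. This follows because $\abs{\Tr[\Psi_{V(\gamma)}M]}\leq\norm{M}\cdot\norm{\Psi_{V(\gamma)}}_1\leq\norm{M}\prod_{v\in V(\gamma)}\norm{\Psi(v)}_1=\norm{M}\prod_{v\in V(\gamma)}\Tr(\Psi(v))=\norm{M}$, using submultiplicativity of the trace norm and positivity of each $\Psi(v)$ to identify $\norm{\Psi(v)}_1=\Tr(\Psi(v))=1$. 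Once this is in place, the rest is a routine transcription of the partition-function argument, and the theorem follows by combining the approximation scheme for $Z_G^\Psi(\beta)$ with that for $Z_G(\beta)$ and taking the ratio.
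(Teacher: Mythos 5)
Your proposal is correct and takes essentially the same route as the paper: write $\expval{\Psi}_G(\beta)=Z_G^\Psi(\beta)/Z_G(\beta)$, give $Z_G^\Psi(\beta)$ the same inclusion-exclusion polymer representation, Taylor-series/Stirling weight bound, and $\exp(O(\norm{\gamma}))$-time weight computation as in the partition-function case, then combine Theorem~\ref{theorem:ApproximationAlgorithmAbstractPolymerModelPartitionFunction} for the numerator with Theorem~\ref{theorem:ApproximationAlgorithmQuantumPartitionFunction} for the denominator and take the ratio. The only minor remark is that your anticipated obstacle is not really one: the operators $\Psi(v)$ act on distinct tensor factors $\mathcal{H}_v$, so $\Psi_{V(\gamma)}$ is itself positive semidefinite and $\abs{\Tr[\Psi_{V(\gamma)}M]}\leq\norm{M}\Tr[\Psi_{V(\gamma)}]=\norm{M}$ is immediate, though your H\"older/trace-norm argument is valid in any case.
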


\begin{remark}
    Theorem~\ref{theorem:ApproximationAlgorithmThermalExpectationValues} applies when $\beta$ is complex, which includes the case of time evolution.
\end{remark}

This offers a modest improvement over Ref.~\cite{wild2023classical} when $\Delta\geq4$, which established a polynomial-time algorithm when $\abs{\beta}\leq\frac{1}{2e^2(\Delta-1)r(\Delta r-r+1)}$. By using the slightly sharper bound given in the remark subsequent to Lemma~\ref{lemma:ConnectedSubraphCount}, we obtain an improvement when $\Delta\geq3$. We note that efficient approximations algorithms may be established when the observable appears in the Hamiltonian under different assumptions.

We prove Theorem~\ref{theorem:ApproximationAlgorithmThermalExpectationValues} by showing that the conditions required to apply Theorem~\ref{theorem:ApproximationAlgorithmAbstractPolymerModelPartitionFunction} are satisfied and then combining this with Theorem~\ref{theorem:ApproximationAlgorithmQuantumPartitionFunction}. That is, we show that (1) $Z_G^\Psi(\beta)$ admits a suitable abstract polymer model representation, (2) the polymer weights satisfy the desired bound, and (3) the polymer weights can be computed in the desired time. This is achieved in the following three lemmas.
\begin{lemma}
    \label{lemma:ThermalExpectationValuesAbstractPolymerModel}
    $Z_G^\Psi(\beta)$ admits the following abstract polymer model representation.
    \begin{equation}
        Z_G^\Psi(\beta) = \sum_{\Gamma\in\mathcal{G}}\prod_{\gamma\in\Gamma}w_\gamma, \notag
    \end{equation}
    where
    \begin{equation}
        w_\gamma \coloneqq (-1)^\norm{\gamma}\sum_{T \subseteq E(\gamma)}(-1)^\abs{T}\Tr\left[\Psi_\gamma e^{-\beta\sum_{\varepsilon \in T}\Phi(\varepsilon)}\right]. \notag
    \end{equation}
\end{lemma}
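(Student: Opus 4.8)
The plan is to mirror exactly the structure of Lemma~\ref{lemma:QuantumPartitionFunctionAbstractPolymerModel}, since $Z_G^\Psi(\beta)=\Tr[\Psi_G e^{-\beta H_G}]$ differs from $Z_G(\beta)=\Tr[e^{-\beta H_G}]$ only by the insertion of the fixed product operator $\Psi_G=\prod_{v\in V}\Psi(v)$, which carries no dependence on the edge set over which inclusion--exclusion is applied. First I would apply Lemma~\ref{lemma:InclusionExclusion} (the principle of inclusion--exclusion) with the set function
\begin{equation}
    f(E) = \Tr\left[\Psi_G e^{-\beta\sum_{\varepsilon \in E}\Phi(\varepsilon)}\right], \notag
\end{equation}
so that $f$ is defined on subsets of $E$ and $f(E)=Z_G^\Psi(\beta)$. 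This immediately yields
\begin{equation}
    Z_G^\Psi(\beta) = \sum_{S \subseteq E}(-1)^\abs{S}\sum_{T \subseteq S}(-1)^\abs{T}\Tr\left[\Psi_G e^{-\beta\sum_{\varepsilon \in T}\Phi(\varepsilon)}\right]. \notag
\end{equation}

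Next I would perform the factorisation over maximally connected components. For $S\subseteq E$, let $\Gamma_S$ denote its maximally connected components; the inner double sum factorises over these components because both the Hamiltonian term $e^{-\beta\sum_{\varepsilon\in T}\Phi(\varepsilon)}$ and the operator $\Psi_G$ split as tensor products across vertex-disjoint components, and the normalised trace $\Tr(\mathbb{I})=1$ is multiplicative under tensor products. The only point requiring a word of care is that $\Psi_G$ is a product over \emph{all} vertices of $G$, not just those in a given component; but for a component $\gamma$ the vertices of $G$ outside $V(\gamma)$ contribute factors $\Tr(\Psi_v)=1$ by the stated normalisation, so effectively $\Psi_G$ restricts to $\Psi_\gamma\coloneqq\prod_{v\in V(\gamma)}\Psi(v)$ on the component $\gamma$ and to identity-trace elsewhere. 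Writing this out gives
\begin{equation}
    Z_G^\Psi(\beta) = \sum_{S \subseteq E}\prod_{\gamma \in \Gamma_S}(-1)^\norm{\gamma}\sum_{T \subseteq E(\gamma)}(-1)^\abs{T}\Tr\left[\Psi_\gamma e^{-\beta\sum_{\varepsilon \in T}\Phi(\varepsilon)}\right] = \sum_{S \subseteq E}\prod_{\gamma \in \Gamma_S}w_\gamma = \sum_{\Gamma\in\mathcal{G}}\prod_{\gamma\in\Gamma}w_\gamma, \notag
\end{equation}
where the final equality uses that the map $S\mapsto\Gamma_S$ is a bijection between subsets of $E$ and admissible sets of polymers (connected subgraphs of $G$ with compatibility given by vertex disjointness), exactly as in the earlier lemmas.

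I do not anticipate a genuine obstacle here; the argument is a routine adaptation. The one subtlety worth making fully explicit — and the step I would present most carefully — is the handling of the global operator $\Psi_G$ in the factorisation: one must check that the trace normalisation $\Tr(\Psi_v)=1$ is precisely what allows the vertices not covered by $S$ to drop out cleanly, leaving each polymer weight $w_\gamma$ depending only on the operators and interactions supported on $V(\gamma)$ and $E(\gamma)$. Everything else is formally identical to the proof of Lemma~\ref{lemma:QuantumPartitionFunctionAbstractPolymerModel}.
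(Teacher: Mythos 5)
Your proposal is correct and matches the paper's proof, which simply states that the argument follows that of Lemma~\ref{lemma:QuantumPartitionFunctionAbstractPolymerModel}; you carry out exactly that adaptation. The one point you flag explicitly --- that the normalisation $\Tr(\Psi_v)=1$ makes the vertices outside the components of $S$ drop out, so $\Psi_G$ effectively restricts to $\Psi_\gamma$ on each polymer --- is precisely the detail implicit in the paper's ``follows similarly'' and is handled correctly.
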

\begin{proof}
    The proof follows similarly to that of Lemma~\ref{lemma:QuantumPartitionFunctionAbstractPolymerModel}.
\end{proof}

\begin{lemma}
    \label{lemma:ThermalExpectationValuesWeightBound}
    Fix $\Delta,r\in\mathbb{Z}_{\geq2}$. Let $G=(V, E)$ be a multihypergraph of maximum degree at most $\Delta$ and rank at most $r$, and let $\beta$ be a complex number such that
    \begin{equation}
        \abs{\beta} \leq \frac{1}{e^4\Delta\binom{r}{2}}. \notag
    \end{equation}
    Then, for all polymers $\gamma\in\mathcal{C}$, the weight $w_\gamma$ satisfies
    \begin{equation}
        \abs{w_\gamma} \leq \left(\frac{1}{e^3\Delta\binom{r}{2}}\right)^\norm{\gamma}. \notag
    \end{equation}
\end{lemma}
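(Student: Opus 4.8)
The plan is to follow the argument of Lemma~\ref{lemma:QuantumPartitionFunctionWeightBound} almost verbatim, inserting one extra step to dispose of the positive-semidefinite operator $\Psi_\gamma$. First I would fix a polymer $\gamma$ and use linearity of the trace to write
\[
    \abs{w_\gamma} = \abs{\Tr\left[\Psi_\gamma\sum_{T \subseteq E(\gamma)}(-1)^\abs{T}e^{-\beta\sum_{\varepsilon \in T}\Phi(\varepsilon)}\right]}.
\]
Then I would observe that, since the $\Psi(v)$ act on distinct tensor factors and are positive-semidefinite, $\Psi_\gamma$ is positive-semidefinite, and since the normalised trace factorises over tensor products, $\Tr(\Psi_\gamma) = \prod_{v \in V(\gamma)}\Tr(\Psi_v) = 1$. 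Hence $\Psi_\gamma$ is a density operator, and the standard inequality $\abs{\Tr(\rho A)} \leq \Tr(\rho)\norm{A}$ for positive-semidefinite $\rho$ yields
\[
    \abs{w_\gamma} \leq \norm{\sum_{T \subseteq E(\gamma)}(-1)^\abs{T}e^{-\beta\sum_{\varepsilon \in T}\Phi(\varepsilon)}}.
\]

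From this point I would reuse the proof of Lemma~\ref{lemma:QuantumPartitionFunctionWeightBound} verbatim: expanding each exponential as a Taylor series and grouping monomials by the edge sequence $\rho$ they determine, the inclusion--exclusion over $T$ cancels every term whose support is a proper subset of $E(\gamma)$, leaving
\[
    \abs{w_\gamma} \leq \sum_{\substack{\rho \\ \text{supp}(\rho)=\gamma}}\frac{\abs{\beta}^\abs{\rho}}{\abs{\rho}!}\prod_{\varepsilon \in \rho}\norm{\Phi(\varepsilon)} \leq \sum_{\substack{\rho \\ \text{supp}(\rho)=\gamma}}\frac{\abs{\beta}^\abs{\rho}}{\abs{\rho}!},
\]
using $\norm{\Phi(\varepsilon)} \leq 1$. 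Counting the length-$n$ sequences with support exactly $\gamma$ as $\genfrac{\{}{\}}{0pt}{}{n}{\norm{\gamma}}\norm{\gamma}!$ and applying $\sum_{n \geq k}\genfrac{\{}{\}}{0pt}{}{n}{k}\frac{x^n}{n!} = \frac{(e^x-1)^k}{k!}$ gives $\abs{w_\gamma} \leq (e^\abs{\beta}-1)^\norm{\gamma}$, and since $e^\abs{\beta}-1 \leq e\abs{\beta}$ for $\abs{\beta} \leq 1$, the hypothesis $\abs{\beta} \leq (e^4\Delta\binom{r}{2})^{-1}$ gives $\abs{w_\gamma} \leq (e\abs{\beta})^\norm{\gamma} \leq (e^3\Delta\binom{r}{2})^{-\norm{\gamma}}$.

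I expect the only genuinely new point --- and hence the only possible obstacle --- to be the passage from $\Tr[\Psi_\gamma(\cdots)]$ to $\norm{\cdots}$: the thing to verify carefully is that $\Psi_\gamma$ really is positive-semidefinite of unit trace under the stated normalisation. Once that is in place, positivity does all the work, no hypothesis on the spectrum of the $\Phi(\varepsilon)$ beyond $\norm{\Phi(\varepsilon)} \leq 1$ is needed, and the remaining combinatorial bookkeeping transfers unchanged from Lemma~\ref{lemma:QuantumPartitionFunctionWeightBound}.
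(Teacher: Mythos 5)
Your proposal is correct and is essentially the paper's own argument: the paper's proof of this lemma simply states that it follows similarly to Lemma~\ref{lemma:QuantumPartitionFunctionWeightBound}, and the one genuinely new ingredient you identify --- replacing $\abs{\Tr[\cdots]}\leq\norm{\cdots}$ by $\abs{\Tr[\Psi_\gamma(\cdots)]}\leq\Tr(\Psi_\gamma)\norm{\cdots}=\norm{\cdots}$, using that $\Psi_\gamma$ is positive-semidefinite with unit normalised trace --- is exactly the step the paper leaves implicit. The remaining Taylor-expansion, inclusion--exclusion, and Stirling-number bookkeeping you carry over verbatim matches the paper's proof of Lemma~\ref{lemma:QuantumPartitionFunctionWeightBound}.
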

\begin{proof}
    The proof follows similarly to that of Lemma~\ref{lemma:QuantumPartitionFunctionWeightBound}.
\end{proof}

\begin{lemma}
    \label{lemma:ThermalExpectationValuesWeightAlgorithm}
    The weight $w_\gamma$ of a polymer $\gamma$ can be computed in time $\exp(O(\norm{\gamma}))$.
\end{lemma}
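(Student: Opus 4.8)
The plan is to follow the proof of Lemma~\ref{lemma:QuantumPartitionFunctionWeightAlgorithm} essentially verbatim, the only new ingredient being the extra factor $\Psi_\gamma\coloneqq\prod_{v\in V(\gamma)}\Psi(v)$ inside the trace. First I would note that, since $\gamma$ is a connected subgraph of a multihypergraph of rank at most $r$, its order satisfies $\abs{\gamma}\leq(r-1)\norm{\gamma}+1$, so the Hilbert space $\bigotimes_{v\in V(\gamma)}\mathcal{H}_v$ has dimension $d^{\abs{\gamma}}=\exp(O(\norm{\gamma}))$. Consequently every operator supported on $V(\gamma)$ --- each $\Phi(\varepsilon)$ for $\varepsilon\in E(\gamma)$, the operator $\Psi_\gamma$, and any sums and products thereof --- can be written down as an explicit matrix of dimension $\exp(O(\norm{\gamma}))$, and all the required linear-algebra operations (addition, multiplication, diagonalisation, matrix exponentiation, and trace) on such matrices run in time $\exp(O(\norm{\gamma}))$.

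Next I would enumerate the $2^{\norm{\gamma}}$ subsets $T\subseteq E(\gamma)$. For each $T$, I would assemble the matrix $\sum_{\varepsilon\in T}\Phi(\varepsilon)$ using at most $\norm{\gamma}$ additions, diagonalise it to obtain $e^{-\beta\sum_{\varepsilon\in T}\Phi(\varepsilon)}$, left-multiply by $\Psi_\gamma$ (precomputed once as the product of the $\abs{\gamma}$ vertex operators $\Psi(v)$), and take the trace. Summing the $2^{\norm{\gamma}}$ signed contributions and multiplying by $(-1)^{\norm{\gamma}}$ then gives $w_\gamma$, for a total running time of $2^{\norm{\gamma}}\cdot\exp(O(\norm{\gamma}))=\exp(O(\norm{\gamma}))$, as claimed.

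There is no real obstacle here. The one point worth a moment's care is that $\Psi_\gamma$ is in general a product of non-commuting operators, so unlike in Lemma~\ref{lemma:QuantumPartitionFunctionWeightAlgorithm} the trace does not factorise over tensor factors; but this is immaterial to the complexity bound, since we compute the trace of an explicit $\exp(O(\norm{\gamma}))$-dimensional matrix directly rather than exploiting any factorisation. (Alternatively one could expand $e^{-\beta\sum_{\varepsilon\in T}\Phi(\varepsilon)}$ as a truncated Taylor series in the spirit of Lemma~\ref{lemma:QuantumPartitionFunctionWeightBound}, but diagonalisation is cleaner.)
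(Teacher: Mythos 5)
Your proposal is correct and follows essentially the same route as the paper: enumerate the $2^{\norm{\gamma}}$ subsets $T\subseteq E(\gamma)$, and for each evaluate the trace by diagonalising $\sum_{\varepsilon\in T}\Phi(\varepsilon)$ and multiplying by $\Psi_\gamma$, all on a Hilbert space of dimension $\exp(O(\norm{\gamma}))$ since $\abs{\gamma}\leq(r-1)\norm{\gamma}+1$. Your additional remarks on non-commutativity and dimension counting are just explicit details the paper leaves implicit.
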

\begin{proof}
    The sum is over all subsets $T$ of $E(\gamma)$, of which there are $2^\norm{\gamma}$. For each of these subsets $T$, the trace may be evaluated in time $\exp(O(\norm{\gamma}))$ by diagonalising the sum of interactions and matrix multiplication.
\end{proof}

Combining Theorem~\ref{theorem:ApproximationAlgorithmAbstractPolymerModelPartitionFunction} with Lemma~\ref{lemma:ThermalExpectationValuesAbstractPolymerModel}, Lemma~\ref{lemma:ThermalExpectationValuesWeightBound}, Lemma~\ref{lemma:ThermalExpectationValuesWeightAlgorithm}, and Theorem~\ref{theorem:ApproximationAlgorithmQuantumPartitionFunction} proves Theorem~\ref{theorem:ApproximationAlgorithmThermalExpectationValues}. We now show that the algorithmic condition of Theorem~\ref{theorem:ApproximationAlgorithmThermalExpectationValues} is optimal in the case of multigraphs in the sense of the zero freeness of the thermal expectation value. This is achieved by a straightforward constructive argument and is formalised by the following theorem.
\begin{theorem}
    \label{theorem:ZeroThermalExpectationValue}
    Fix $\Delta\in\mathbb{Z}^+$. There exists a multigraph $G=(V, E)$ of maximum degree $\Delta$, an operator $O$, and a self-adjoint operator $\Phi$, such that $\expval{\Psi}_G(\beta)=0$ with $\beta=\frac{i\pi}{\Delta}$.
\end{theorem}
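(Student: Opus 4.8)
The plan is to give an explicit construction. Since $\expval{\Psi}_G(\beta) = Z_G^\Psi(\beta)/Z_G(\beta)$, the task splits into arranging that the numerator $Z_G^\Psi(\beta)$ vanishes while the denominator $Z_G(\beta)$ does not. I would take $G$ to consist of two vertices $u,v$ joined by $\Delta$ parallel edges, so that $G$ is a multigraph of maximum degree exactly $\Delta$, and place a qubit at each vertex. I would set every interaction equal to the same rank-one orthogonal projector, $\Phi(\varepsilon) = P$, where $P$ is the projector onto $\ket{11}$; this is self-adjoint with $\norm{\Phi(\varepsilon)} = 1$. Then $H_G = \Delta P$, and since $\beta = i\pi/\Delta$ and $P$ has eigenvalues $0$ and $1$, we get $e^{-\beta H_G} = e^{-i\pi P} = \mathbb{I} - 2P$, which equals $\mathrm{diag}(1,1,1,-1)$ in the computational basis $\ket{00},\ket{01},\ket{10},\ket{11}$.

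The first step is to check the denominator is nonzero: under the normalisation $\Tr(\mathbb{I}) = 1$ we have $Z_G(\beta) = \Tr[e^{-\beta H_G}] = \frac{1}{2} \neq 0$, so the thermal expectation value is well defined. I would then choose the positive-semidefinite operators asymmetrically, taking $\Psi(u) = \mathrm{diag}(0,2)$ in the computational basis and $\Psi(v) = \mathbb{I}$; both are positive-semidefinite and normalised so that $\Tr(\Psi(u)) = \Tr(\Psi(v)) = 1$. A short computation then gives $\Psi_G\, e^{-\beta H_G} = \mathrm{diag}(0,0,2,-2)$, whose normalised trace is zero, so $Z_G^\Psi(\beta) = \Tr[\Psi_G e^{-\beta H_G}] = 0$, and therefore $\expval{\Psi}_G(\beta) = 0$ as claimed.

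The one subtlety to get right --- and the main respect in which this differs from the corresponding statement for circuit expectation values, Theorem~\ref{theorem:ZeroExpectationValue}, which has no denominator --- is the simultaneous constraint $Z_G(\beta) \neq 0$. Several natural first attempts fail: taking $\Phi(\varepsilon) = Z_u Z_v$ makes $e^{-\beta H_G}$ proportional to the identity, leaving no asymmetric $\Psi$ that can annihilate the numerator alone, while a balanced choice such as $\Phi(\varepsilon) = \frac{1}{2}(\mathbb{I} - Z_u Z_v)$ makes $e^{-\beta H_G}$ proportional to $Z_u Z_v$, whose trace vanishes, so that numerator and denominator die together. Using a rank-one projector forces $e^{-\beta H_G}$ to have the unbalanced spectrum $\{1,1,1,-1\}$, and pairing it with the asymmetric product operator above circumvents this. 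The remaining work is purely routine bookkeeping of the normalised-trace conventions $\Tr(\mathbb{I}) = 1$ and $\Tr(\Psi(v)) = 1$ and their multiplicativity over the two tensor factors.
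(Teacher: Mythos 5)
Your construction is correct: with $G$ the two-vertex multigraph with $\Delta$ parallel edges, $\Phi(\varepsilon)=\ketbra{11}{11}$ gives $e^{-\beta H_G}=e^{-i\pi P}=\mathbb{I}-2P=\mathrm{diag}(1,1,1,-1)$, so the normalised trace is $\tfrac12\neq0$, while $\Psi_G e^{-\beta H_G}=\mathrm{diag}(0,0,2,-2)$ has vanishing trace, and hence $\expval{\Psi}_G(i\pi/\Delta)=0$. Your overall strategy is the same as the paper's---a single multiedge of multiplicity $\Delta$ so that $e^{-\beta H_G}=e^{-i\pi\Phi}$ at $\beta=i\pi/\Delta$, followed by a choice of operators that kills the numerator but not the denominator---but the explicit operators differ. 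The paper takes the same operator at every vertex, $\Psi(v)=\ketbra{0}{0}_v$, together with the entangling interaction $\Phi(\varepsilon)=\tfrac14\left(X\otimes X-Y\otimes Y-Z\otimes Z\right)$, whose eigenvalues $\{1,1,1,-3\}$ (after division by $4$ and multiplication by $-i\pi$) make $\matrixel{00}{e^{-\beta H_G}}{00}=\tfrac12\bigl(e^{-i\pi/4}+e^{3i\pi/4}\bigr)=0$ by interference between the Bell states $\ket{00}\pm\ket{11}$, while the trace $\tfrac12 e^{-i\pi/4}$ survives. Your version is more elementary---everything is diagonal, the cancellation is transparent, and you verify $Z_G(\beta)\neq0$ explicitly, a step the paper leaves implicit---at the mild aesthetic cost of using different operators $\Psi(u)\neq\Psi(v)$ at the two vertices, which the theorem does not require to be uniform; both constructions satisfy $\norm{\Phi(\varepsilon)}\leq1$ and witness a zero at $\abs{\beta}=\pi/\Delta$, so they support the optimality claim equally well.
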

\begin{proof}
    We consider a quantum spin system on a multigraph comprising a single multiedge with a two-dimensional Hilbert space at each vertex. Further, we consider the operator $\Psi$ with $\Psi(v)=\ketbra{0}{0}_v$ for all $v \in V$ and the self-adjoint operator $\Phi$ with $\Phi(\varepsilon)=\frac{1}{4}\left(\bigotimes_{v \in \varepsilon}X_v-\bigotimes_{v \in \varepsilon}Y_v-\bigotimes_{v \in \varepsilon}Z_v\right)$ for all $\varepsilon \in E$. Then, we have
    \begin{align}
        \expval{\Psi}_G(\beta) &= \frac{\Tr\left[\Psi_Ge^{-\beta H_G}\right]}{\Tr\left[e^{-\beta H_G}\right]} \notag \\
        &= \frac{\matrixel{00}{e^{-\frac{i\pi}{4}\left(X{\otimes}X-Y{\otimes}Y-Z{\otimes}Z\right)}}{00}}{\Tr\left[e^{-\frac{i\pi}{4}\left(X{\otimes}X-Y{\otimes}Y-Z{\otimes}Z\right)}\right]} \notag \\ 
        &= 0. \notag
    \end{align}
    This completes the proof.
\end{proof}

\section{Conclusion \& Outlook}
\label{section:ConclusionAndOutlook}

We have established a general framework for developing approximation algorithms and hardness of approximation results for a class of counting problems. We applied this framework to obtain efficient approximation algorithms and hardness of approximation results for several quantum problems under certain algorithmic conditions. 

In particular, we obtained efficient approximation algorithms for (1) approximating probability amplitudes of a class of quantum circuits close to the identity, (2) approximating expectation values of a class of quantum circuits with operators close to the identity, (3) approximating partition functions of a class of quantum spin systems at high temperature, and (4) approximating thermal expectation values of a class of quantum spin systems at high temperature with positive-semidefinite operators. Further, we obtained hardness of approximation results for approximating probability amplitudes of quantum circuits and partition functions of quantum spin systems.

Our results established a computational complexity transition for the problems of approximating probability amplitudes of quantum circuits and partition functions of quantum spin systems and showed that our algorithmic conditions for these problems are optimal under complexity-theoretic assumptions. Finally, we showed that our algorithmic condition is almost optimal for expectation values and optimal for thermal expectation values in the sense of zero freeness.

It would be interesting to identify other quantum problems to which our framework applies and to explore whether our results can be extended using ideas from duality~\cite{apel2022mathematical}. Further, it is an intriguing open problem to not only identify the exact points of a computational complexity transition for these problems, as is known for the Ising model at real temperature~\cite{sly2012computational, sinclair2014approximation, galanis2016inapproximability}, but also to develop methods to experimentally probe these transitions. Finally, it would be interesting to obtain algorithms with an improved runtime, for example, via the Markov chain polymer approach of Ref.~\cite{chen2019fast}.

\section*{Acknowledgements}

We thank Tyler Helmuth for helpful discussions. RLM was supported by the QuantERA ERA-NET Cofund in Quantum Technologies implemented within the European Union's Horizon 2020 Programme (QuantAlgo project), EPSRC grants EP/L021005/1, EP/R043957/1, and EP/T001062/1, and the ARC Centre of Excellence for Quantum Computation and Communication Technology (CQC2T), project number CE170100012. RMM was supported by the Additional Funding Programme for Mathematical Sciences, delivered by EPSRC (EP/V521917/1) and the Heilbronn Institute for Mathematical Research. No new data were created during this study.

\bibliography{bibliography}

\begin{thebibliography}{31}%
\makeatletter
\providecommand \@ifxundefined [1]{%
 \@ifx{#1\undefined}
}%
\providecommand \@ifnum [1]{%
 \ifnum #1\expandafter \@firstoftwo
 \else \expandafter \@secondoftwo
 \fi
}%
\providecommand \@ifx [1]{%
 \ifx #1\expandafter \@firstoftwo
 \else \expandafter \@secondoftwo
 \fi
}%
\providecommand \natexlab [1]{#1}%
\providecommand \enquote  [1]{``#1''}%
\providecommand \bibnamefont  [1]{#1}%
\providecommand \bibfnamefont [1]{#1}%
\providecommand \citenamefont [1]{#1}%
\providecommand \href@noop [0]{\@secondoftwo}%
\providecommand \href [0]{\begingroup \@sanitize@url \@href}%
\providecommand \@href[1]{\@@startlink{#1}\@@href}%
\providecommand \@@href[1]{\endgroup#1\@@endlink}%
\providecommand \@sanitize@url [0]{\catcode `\\12\catcode `\$12\catcode `\&12\catcode `\#12\catcode `\^12\catcode `\_12\catcode `\%12\relax}%
\providecommand \@@startlink[1]{}%
\providecommand \@@endlink[0]{}%
\providecommand \url  [0]{\begingroup\@sanitize@url \@url }%
\providecommand \@url [1]{\endgroup\@href {#1}{\urlprefix }}%
\providecommand \urlprefix  [0]{URL }%
\providecommand \Eprint [0]{\href }%
\providecommand \doibase [0]{https://doi.org/}%
\providecommand \selectlanguage [0]{\@gobble}%
\providecommand \bibinfo  [0]{\@secondoftwo}%
\providecommand \bibfield  [0]{\@secondoftwo}%
\providecommand \translation [1]{[#1]}%
\providecommand \BibitemOpen [0]{}%
\providecommand \bibitemStop [0]{}%
\providecommand \bibitemNoStop [0]{.\EOS\space}%
\providecommand \EOS [0]{\spacefactor3000\relax}%
\providecommand \BibitemShut  [1]{\csname bibitem#1\endcsname}%
\let\auto@bib@innerbib\@empty
\bibitem [{\citenamefont {Bravyi}\ \emph {et~al.}(2021)\citenamefont {Bravyi}, \citenamefont {Gosset},\ and\ \citenamefont {Movassagh}}]{bravyi2021classical}%
  \BibitemOpen
  \bibfield  {author} {\bibinfo {author} {\bibfnamefont {S.}~\bibnamefont {Bravyi}}, \bibinfo {author} {\bibfnamefont {D.}~\bibnamefont {Gosset}},\ and\ \bibinfo {author} {\bibfnamefont {R.}~\bibnamefont {Movassagh}},\ }\href {https://doi.org/10.1038/s41567-020-01109-8} {\bibfield  {journal} {\bibinfo  {journal} {Nature Physics}\ }\textbf {\bibinfo {volume} {17}},\ \bibinfo {pages} {337} (\bibinfo {year} {2021})},\ \Eprint {https://arxiv.org/abs/1909.11485} {arXiv:1909.11485} \BibitemShut {NoStop}%
\bibitem [{\citenamefont {Kuwahara}\ \emph {et~al.}(2020)\citenamefont {Kuwahara}, \citenamefont {Kato},\ and\ \citenamefont {Brand{\~a}o}}]{kuwahara2020clustering}%
  \BibitemOpen
  \bibfield  {author} {\bibinfo {author} {\bibfnamefont {T.}~\bibnamefont {Kuwahara}}, \bibinfo {author} {\bibfnamefont {K.}~\bibnamefont {Kato}},\ and\ \bibinfo {author} {\bibfnamefont {F.~G.}\ \bibnamefont {Brand{\~a}o}},\ }\href {https://doi.org/10.1103/physrevlett.124.220601} {\bibfield  {journal} {\bibinfo  {journal} {Physical Review Letters}\ }\textbf {\bibinfo {volume} {124}},\ \bibinfo {pages} {220601} (\bibinfo {year} {2020})},\ \Eprint {https://arxiv.org/abs/1910.09425} {arXiv:1910.09425} \BibitemShut {NoStop}%
\bibitem [{\citenamefont {Wild}\ and\ \citenamefont {Alhambra}(2023)}]{wild2023classical}%
  \BibitemOpen
  \bibfield  {author} {\bibinfo {author} {\bibfnamefont {D.~S.}\ \bibnamefont {Wild}}\ and\ \bibinfo {author} {\bibfnamefont {{\'A}.~M.}\ \bibnamefont {Alhambra}},\ }\href {https://doi.org/10.1103/PRXQuantum.4.020340} {\bibfield  {journal} {\bibinfo  {journal} {PRX Quantum}\ }\textbf {\bibinfo {volume} {4}},\ \bibinfo {pages} {020340} (\bibinfo {year} {2023})},\ \Eprint {https://arxiv.org/abs/2210.11490} {arXiv:2210.11490} \BibitemShut {NoStop}%
\bibitem [{\citenamefont {Koteck{\'y}}\ and\ \citenamefont {Preiss}(1986)}]{kotecky1986cluster}%
  \BibitemOpen
  \bibfield  {author} {\bibinfo {author} {\bibfnamefont {R.}~\bibnamefont {Koteck{\'y}}}\ and\ \bibinfo {author} {\bibfnamefont {D.}~\bibnamefont {Preiss}},\ }\href {https://doi.org/10.1007/bf01211762} {\bibfield  {journal} {\bibinfo  {journal} {Communications in Mathematical Physics}\ }\textbf {\bibinfo {volume} {103}},\ \bibinfo {pages} {491} (\bibinfo {year} {1986})}\BibitemShut {NoStop}%
\bibitem [{\citenamefont {Helmuth}\ \emph {et~al.}(2020)\citenamefont {Helmuth}, \citenamefont {Perkins},\ and\ \citenamefont {Regts}}]{helmuth2020algorithmic}%
  \BibitemOpen
  \bibfield  {author} {\bibinfo {author} {\bibfnamefont {T.}~\bibnamefont {Helmuth}}, \bibinfo {author} {\bibfnamefont {W.}~\bibnamefont {Perkins}},\ and\ \bibinfo {author} {\bibfnamefont {G.}~\bibnamefont {Regts}},\ }\href {https://doi.org/10.1007/s00440-019-00928-y} {\bibfield  {journal} {\bibinfo  {journal} {Probability Theory and Related Fields}\ }\textbf {\bibinfo {volume} {176}},\ \bibinfo {pages} {851} (\bibinfo {year} {2020})},\ \Eprint {https://arxiv.org/abs/1806.11548} {arXiv:1806.11548} \BibitemShut {NoStop}%
\bibitem [{\citenamefont {Borgs}\ \emph {et~al.}(2020)\citenamefont {Borgs}, \citenamefont {Chayes}, \citenamefont {Helmuth}, \citenamefont {Perkins},\ and\ \citenamefont {Tetali}}]{borgs2020efficient}%
  \BibitemOpen
  \bibfield  {author} {\bibinfo {author} {\bibfnamefont {C.}~\bibnamefont {Borgs}}, \bibinfo {author} {\bibfnamefont {J.}~\bibnamefont {Chayes}}, \bibinfo {author} {\bibfnamefont {T.}~\bibnamefont {Helmuth}}, \bibinfo {author} {\bibfnamefont {W.}~\bibnamefont {Perkins}},\ and\ \bibinfo {author} {\bibfnamefont {P.}~\bibnamefont {Tetali}},\ }in\ \href {https://doi.org/10.1145/3357713.3384271} {\emph {\bibinfo {booktitle} {Proceedings of the 52nd Annual ACM SIGACT Symposium on Theory of Computing}}}\ (\bibinfo {organization} {ACM},\ \bibinfo {year} {2020})\ pp.\ \bibinfo {pages} {738--751},\ \Eprint {https://arxiv.org/abs/1909.09298} {arXiv:1909.09298} \BibitemShut {NoStop}%
\bibitem [{\citenamefont {Patel}\ and\ \citenamefont {Regts}(2017)}]{patel2017deterministic}%
  \BibitemOpen
  \bibfield  {author} {\bibinfo {author} {\bibfnamefont {V.}~\bibnamefont {Patel}}\ and\ \bibinfo {author} {\bibfnamefont {G.}~\bibnamefont {Regts}},\ }\href {https://doi.org/10.1137/16m1101003} {\bibfield  {journal} {\bibinfo  {journal} {SIAM Journal on Computing}\ }\textbf {\bibinfo {volume} {46}},\ \bibinfo {pages} {1893} (\bibinfo {year} {2017})},\ \Eprint {https://arxiv.org/abs/1607.01167} {arXiv:1607.01167} \BibitemShut {NoStop}%
\bibitem [{\citenamefont {Barvinok}(2016)}]{barvinok2016combinatorics}%
  \BibitemOpen
  \bibfield  {author} {\bibinfo {author} {\bibfnamefont {A.}~\bibnamefont {Barvinok}},\ }\href {https://doi.org/10.1007/978-3-319-51829-9} {\emph {\bibinfo {title} {Combinatorics and Complexity of Partition Functions}}},\ Vol.\ \bibinfo {volume} {274}\ (\bibinfo  {publisher} {Springer},\ \bibinfo {year} {2016})\BibitemShut {NoStop}%
\bibitem [{\citenamefont {Patel}\ and\ \citenamefont {Regts}(2022)}]{patel2022approximate}%
  \BibitemOpen
  \bibfield  {author} {\bibinfo {author} {\bibfnamefont {V.}~\bibnamefont {Patel}}\ and\ \bibinfo {author} {\bibfnamefont {G.}~\bibnamefont {Regts}},\ }\href@noop {} {\bibfield  {journal} {\bibinfo  {journal} {Bulletin of EATCS}\ }\textbf {\bibinfo {volume} {138}} (\bibinfo {year} {2022})},\ \Eprint {https://arxiv.org/abs/2212.08143} {arXiv:2212.08143} \BibitemShut {NoStop}%
\bibitem [{\citenamefont {Graham}\ \emph {et~al.}(1995)\citenamefont {Graham}, \citenamefont {Gr\"{o}tschel},\ and\ \citenamefont {Lov\'{a}sz}}]{graham1995handbook}%
  \BibitemOpen
  \bibfield  {author} {\bibinfo {author} {\bibfnamefont {R.~L.}\ \bibnamefont {Graham}}, \bibinfo {author} {\bibfnamefont {M.}~\bibnamefont {Gr\"{o}tschel}},\ and\ \bibinfo {author} {\bibfnamefont {L.}~\bibnamefont {Lov\'{a}sz}},\ }\href@noop {} {\emph {\bibinfo {title} {Handbook of Combinatorics}}},\ Vol.~\bibinfo {volume} {2}\ (\bibinfo  {publisher} {Elsevier},\ \bibinfo {year} {1995})\BibitemShut {NoStop}%
\bibitem [{\citenamefont {Friedli}\ and\ \citenamefont {Velenik}(2017)}]{friedli2017statistical}%
  \BibitemOpen
  \bibfield  {author} {\bibinfo {author} {\bibfnamefont {S.}~\bibnamefont {Friedli}}\ and\ \bibinfo {author} {\bibfnamefont {Y.}~\bibnamefont {Velenik}},\ }\href {https://doi.org/10.1017/9781316882603} {\emph {\bibinfo {title} {Statistical Mechanics of Lattice Systems: A Concrete Mathematical Introduction}}}\ (\bibinfo  {publisher} {Cambridge University Press},\ \bibinfo {year} {2017})\BibitemShut {NoStop}%
\bibitem [{\citenamefont {Fern{\'a}ndez}\ and\ \citenamefont {Procacci}(2007)}]{fernandez2007cluster}%
  \BibitemOpen
  \bibfield  {author} {\bibinfo {author} {\bibfnamefont {R.}~\bibnamefont {Fern{\'a}ndez}}\ and\ \bibinfo {author} {\bibfnamefont {A.}~\bibnamefont {Procacci}},\ }\href {https://doi.org/10.1007/s00220-007-0279-2} {\bibfield  {journal} {\bibinfo  {journal} {Communications in Mathematical Physics}\ }\textbf {\bibinfo {volume} {274}},\ \bibinfo {pages} {123} (\bibinfo {year} {2007})},\ \Eprint {https://arxiv.org/abs/math-ph/0605041} {arXiv:math-ph/0605041} \BibitemShut {NoStop}%
\bibitem [{\citenamefont {Shearer}(1985)}]{shearer1985problem}%
  \BibitemOpen
  \bibfield  {author} {\bibinfo {author} {\bibfnamefont {J.~B.}\ \bibnamefont {Shearer}},\ }\href {https://doi.org/10.1007/BF02579368} {\bibfield  {journal} {\bibinfo  {journal} {Combinatorica}\ }\textbf {\bibinfo {volume} {5}},\ \bibinfo {pages} {241} (\bibinfo {year} {1985})}\BibitemShut {NoStop}%
\bibitem [{\citenamefont {Scott}\ and\ \citenamefont {Sokal}(2005)}]{scott2005repulsive}%
  \BibitemOpen
  \bibfield  {author} {\bibinfo {author} {\bibfnamefont {A.~D.}\ \bibnamefont {Scott}}\ and\ \bibinfo {author} {\bibfnamefont {A.~D.}\ \bibnamefont {Sokal}},\ }\href {https://doi.org/10.1007/s10955-004-2055-4} {\bibfield  {journal} {\bibinfo  {journal} {Journal of Statistical Physics}\ }\textbf {\bibinfo {volume} {118}},\ \bibinfo {pages} {1151} (\bibinfo {year} {2005})},\ \Eprint {https://arxiv.org/abs/cond-mat/0309352} {arXiv:cond-mat/0309352} \BibitemShut {NoStop}%
\bibitem [{\citenamefont {Galvin}\ \emph {et~al.}(2022)\citenamefont {Galvin}, \citenamefont {McKinley}, \citenamefont {Perkins}, \citenamefont {Sarantis},\ and\ \citenamefont {Tetali}}]{galvin2022zeroes}%
  \BibitemOpen
  \bibfield  {author} {\bibinfo {author} {\bibfnamefont {D.}~\bibnamefont {Galvin}}, \bibinfo {author} {\bibfnamefont {G.}~\bibnamefont {McKinley}}, \bibinfo {author} {\bibfnamefont {W.}~\bibnamefont {Perkins}}, \bibinfo {author} {\bibfnamefont {M.}~\bibnamefont {Sarantis}},\ and\ \bibinfo {author} {\bibfnamefont {P.}~\bibnamefont {Tetali}},\ }\href@noop {} {\bibfield  {journal} {\bibinfo  {journal} {arXiv e-prints}\ } (\bibinfo {year} {2022})},\ \Eprint {https://arxiv.org/abs/2211.00464} {arXiv:2211.00464} \BibitemShut {NoStop}%
\bibitem [{\citenamefont {Bencs}\ and\ \citenamefont {Buys}(2023)}]{bencs2023optimal}%
  \BibitemOpen
  \bibfield  {author} {\bibinfo {author} {\bibfnamefont {F.}~\bibnamefont {Bencs}}\ and\ \bibinfo {author} {\bibfnamefont {P.}~\bibnamefont {Buys}},\ }\href@noop {} {\bibfield  {journal} {\bibinfo  {journal} {arXiv e-prints}\ } (\bibinfo {year} {2023})},\ \Eprint {https://arxiv.org/abs/2306.00122} {arXiv:2306.00122} \BibitemShut {NoStop}%
\bibitem [{\citenamefont {Arora}\ and\ \citenamefont {Barak}(2009)}]{arora2009computational}%
  \BibitemOpen
  \bibfield  {author} {\bibinfo {author} {\bibfnamefont {S.}~\bibnamefont {Arora}}\ and\ \bibinfo {author} {\bibfnamefont {B.}~\bibnamefont {Barak}},\ }\href {https://doi.org/10.1017/CBO9780511804090} {\emph {\bibinfo {title} {Computational Complexity: A Modern Approach}}}\ (\bibinfo  {publisher} {Cambridge University Press},\ \bibinfo {year} {2009})\BibitemShut {NoStop}%
\bibitem [{\citenamefont {Borgs}\ \emph {et~al.}(2013)\citenamefont {Borgs}, \citenamefont {Chayes}, \citenamefont {Kahn},\ and\ \citenamefont {Lov{\'a}sz}}]{borgs2013left}%
  \BibitemOpen
  \bibfield  {author} {\bibinfo {author} {\bibfnamefont {C.}~\bibnamefont {Borgs}}, \bibinfo {author} {\bibfnamefont {J.}~\bibnamefont {Chayes}}, \bibinfo {author} {\bibfnamefont {J.}~\bibnamefont {Kahn}},\ and\ \bibinfo {author} {\bibfnamefont {L.}~\bibnamefont {Lov{\'a}sz}},\ }\href {https://doi.org/10.1002/rsa.20414} {\bibfield  {journal} {\bibinfo  {journal} {Random Structures \& Algorithms}\ }\textbf {\bibinfo {volume} {42}},\ \bibinfo {pages} {1} (\bibinfo {year} {2013})},\ \Eprint {https://arxiv.org/abs/1002.0115} {arXiv:1002.0115} \BibitemShut {NoStop}%
\bibitem [{\citenamefont {Bj{\"o}rklund}\ \emph {et~al.}(2008)\citenamefont {Bj{\"o}rklund}, \citenamefont {Husfeldt}, \citenamefont {Kaski},\ and\ \citenamefont {Koivisto}}]{bjorklund2008computing}%
  \BibitemOpen
  \bibfield  {author} {\bibinfo {author} {\bibfnamefont {A.}~\bibnamefont {Bj{\"o}rklund}}, \bibinfo {author} {\bibfnamefont {T.}~\bibnamefont {Husfeldt}}, \bibinfo {author} {\bibfnamefont {P.}~\bibnamefont {Kaski}},\ and\ \bibinfo {author} {\bibfnamefont {M.}~\bibnamefont {Koivisto}},\ }in\ \href {https://doi.org/10.1109/FOCS.2008.40} {\emph {\bibinfo {booktitle} {49th Annual IEEE Symposium on Foundations of Computer Science}}}\ (\bibinfo {organization} {IEEE},\ \bibinfo {year} {2008})\ pp.\ \bibinfo {pages} {677--686},\ \Eprint {https://arxiv.org/abs/0711.2585} {arXiv:0711.2585} \BibitemShut {NoStop}%
\bibitem [{\citenamefont {Galanis}\ \emph {et~al.}(2022)\citenamefont {Galanis}, \citenamefont {Goldberg},\ and\ \citenamefont {Herrera-Poyatos}}]{galanis2022complexity}%
  \BibitemOpen
  \bibfield  {author} {\bibinfo {author} {\bibfnamefont {A.}~\bibnamefont {Galanis}}, \bibinfo {author} {\bibfnamefont {L.~A.}\ \bibnamefont {Goldberg}},\ and\ \bibinfo {author} {\bibfnamefont {A.}~\bibnamefont {Herrera-Poyatos}},\ }\href {https://doi.org/10.1137/21M1454043} {\bibfield  {journal} {\bibinfo  {journal} {SIAM Journal on Discrete Mathematics}\ }\textbf {\bibinfo {volume} {36}},\ \bibinfo {pages} {2159} (\bibinfo {year} {2022})},\ \Eprint {https://arxiv.org/abs/2105.00287} {arXiv:2105.00287} \BibitemShut {NoStop}%
\bibitem [{\citenamefont {Mann}\ and\ \citenamefont {Bremner}(2019)}]{mann2019approximation}%
  \BibitemOpen
  \bibfield  {author} {\bibinfo {author} {\bibfnamefont {R.~L.}\ \bibnamefont {Mann}}\ and\ \bibinfo {author} {\bibfnamefont {M.~J.}\ \bibnamefont {Bremner}},\ }\href {https://doi.org/10.22331/q-2019-07-11-162} {\bibfield  {journal} {\bibinfo  {journal} {Quantum}\ }\textbf {\bibinfo {volume} {3}},\ \bibinfo {pages} {162} (\bibinfo {year} {2019})},\ \Eprint {https://arxiv.org/abs/1806.11282} {arXiv:1806.11282} \BibitemShut {NoStop}%
\bibitem [{\citenamefont {Sinclair}\ \emph {et~al.}(2014)\citenamefont {Sinclair}, \citenamefont {Srivastava},\ and\ \citenamefont {Thurley}}]{sinclair2014approximation}%
  \BibitemOpen
  \bibfield  {author} {\bibinfo {author} {\bibfnamefont {A.}~\bibnamefont {Sinclair}}, \bibinfo {author} {\bibfnamefont {P.}~\bibnamefont {Srivastava}},\ and\ \bibinfo {author} {\bibfnamefont {M.}~\bibnamefont {Thurley}},\ }\href {https://doi.org/10.1007/s10955-014-0947-5} {\bibfield  {journal} {\bibinfo  {journal} {Journal of Statistical Physics}\ }\textbf {\bibinfo {volume} {155}},\ \bibinfo {pages} {666} (\bibinfo {year} {2014})}\BibitemShut {NoStop}%
\bibitem [{\citenamefont {Sly}\ and\ \citenamefont {Sun}(2012)}]{sly2012computational}%
  \BibitemOpen
  \bibfield  {author} {\bibinfo {author} {\bibfnamefont {A.}~\bibnamefont {Sly}}\ and\ \bibinfo {author} {\bibfnamefont {N.}~\bibnamefont {Sun}},\ }in\ \href {https://doi.org/10.1109/focs.2012.56} {\emph {\bibinfo {booktitle} {53rd Annual IEEE Symposium on Foundations of Computer Science (FOCS)}}}\ (\bibinfo {organization} {IEEE},\ \bibinfo {year} {2012})\ pp.\ \bibinfo {pages} {361--369},\ \Eprint {https://arxiv.org/abs/1203.2602} {arXiv:1203.2602} \BibitemShut {NoStop}%
\bibitem [{\citenamefont {Galanis}\ \emph {et~al.}(2016)\citenamefont {Galanis}, \citenamefont {{\v{S}}tefankovi{\v{c}}},\ and\ \citenamefont {Vigoda}}]{galanis2016inapproximability}%
  \BibitemOpen
  \bibfield  {author} {\bibinfo {author} {\bibfnamefont {A.}~\bibnamefont {Galanis}}, \bibinfo {author} {\bibfnamefont {D.}~\bibnamefont {{\v{S}}tefankovi{\v{c}}}},\ and\ \bibinfo {author} {\bibfnamefont {E.}~\bibnamefont {Vigoda}},\ }\href {https://doi.org/10.1017/s0963548315000401} {\bibfield  {journal} {\bibinfo  {journal} {Combinatorics, Probability and Computing}\ }\textbf {\bibinfo {volume} {25}},\ \bibinfo {pages} {500} (\bibinfo {year} {2016})},\ \Eprint {https://arxiv.org/abs/1203.2226} {arXiv:1203.2226} \BibitemShut {NoStop}%
\bibitem [{\citenamefont {Goldberg}\ and\ \citenamefont {Guo}(2017)}]{goldberg2017complexity}%
  \BibitemOpen
  \bibfield  {author} {\bibinfo {author} {\bibfnamefont {L.~A.}\ \bibnamefont {Goldberg}}\ and\ \bibinfo {author} {\bibfnamefont {H.}~\bibnamefont {Guo}},\ }\href {https://doi.org/10.1007/s00037-017-0162-2} {\bibfield  {journal} {\bibinfo  {journal} {Computational Complexity}\ }\textbf {\bibinfo {volume} {26}},\ \bibinfo {pages} {765} (\bibinfo {year} {2017})},\ \Eprint {https://arxiv.org/abs/1409.5627} {arXiv:1409.5627} \BibitemShut {NoStop}%
\bibitem [{\citenamefont {Harrow}\ \emph {et~al.}(2020)\citenamefont {Harrow}, \citenamefont {Mehraban},\ and\ \citenamefont {Soleimanifar}}]{harrow2020classical}%
  \BibitemOpen
  \bibfield  {author} {\bibinfo {author} {\bibfnamefont {A.~W.}\ \bibnamefont {Harrow}}, \bibinfo {author} {\bibfnamefont {S.}~\bibnamefont {Mehraban}},\ and\ \bibinfo {author} {\bibfnamefont {M.}~\bibnamefont {Soleimanifar}},\ }in\ \href {https://doi.org/10.1145/3357713.3384322} {\emph {\bibinfo {booktitle} {Proceedings of the 52nd Annual ACM SIGACT Symposium on Theory of Computing}}}\ (\bibinfo {organization} {ACM},\ \bibinfo {year} {2020})\ pp.\ \bibinfo {pages} {378--386},\ \Eprint {https://arxiv.org/abs/1910.09071} {arXiv:1910.09071} \BibitemShut {NoStop}%
\bibitem [{\citenamefont {Mann}\ and\ \citenamefont {Helmuth}(2021)}]{mann2021efficient}%
  \BibitemOpen
  \bibfield  {author} {\bibinfo {author} {\bibfnamefont {R.~L.}\ \bibnamefont {Mann}}\ and\ \bibinfo {author} {\bibfnamefont {T.}~\bibnamefont {Helmuth}},\ }\href {https://doi.org/10.1063/5.0013689} {\bibfield  {journal} {\bibinfo  {journal} {Journal of Mathematical Physics}\ }\textbf {\bibinfo {volume} {62}},\ \bibinfo {pages} {022201} (\bibinfo {year} {2021})},\ \Eprint {https://arxiv.org/abs/2004.11568} {arXiv:2004.11568} \BibitemShut {NoStop}%
\bibitem [{\citenamefont {Helmuth}\ and\ \citenamefont {Mann}(2023)}]{helmuth2023efficient}%
  \BibitemOpen
  \bibfield  {author} {\bibinfo {author} {\bibfnamefont {T.}~\bibnamefont {Helmuth}}\ and\ \bibinfo {author} {\bibfnamefont {R.~L.}\ \bibnamefont {Mann}},\ }\href {https://doi.org/10.22331/q-2023-10-25-1155} {\bibfield  {journal} {\bibinfo  {journal} {Quantum}\ }\textbf {\bibinfo {volume} {7}},\ \bibinfo {pages} {1155} (\bibinfo {year} {2023})},\ \Eprint {https://arxiv.org/abs/2201.06533} {arXiv:2201.06533} \BibitemShut {NoStop}%
\bibitem [{\citenamefont {Bravyi}\ \emph {et~al.}(2022)\citenamefont {Bravyi}, \citenamefont {Chowdhury}, \citenamefont {Gosset},\ and\ \citenamefont {Wocjan}}]{bravyi2022quantum}%
  \BibitemOpen
  \bibfield  {author} {\bibinfo {author} {\bibfnamefont {S.}~\bibnamefont {Bravyi}}, \bibinfo {author} {\bibfnamefont {A.}~\bibnamefont {Chowdhury}}, \bibinfo {author} {\bibfnamefont {D.}~\bibnamefont {Gosset}},\ and\ \bibinfo {author} {\bibfnamefont {P.}~\bibnamefont {Wocjan}},\ }\href {https://doi.org/10.1038/s41567-022-01742-5} {\bibfield  {journal} {\bibinfo  {journal} {Nature Physics}\ }\textbf {\bibinfo {volume} {18}},\ \bibinfo {pages} {1367} (\bibinfo {year} {2022})},\ \Eprint {https://arxiv.org/abs/2110.15466} {arXiv:2110.15466} \BibitemShut {NoStop}%
\bibitem [{\citenamefont {Apel}\ and\ \citenamefont {Cubitt}(2022)}]{apel2022mathematical}%
  \BibitemOpen
  \bibfield  {author} {\bibinfo {author} {\bibfnamefont {H.}~\bibnamefont {Apel}}\ and\ \bibinfo {author} {\bibfnamefont {T.}~\bibnamefont {Cubitt}},\ }\href@noop {} {\bibfield  {journal} {\bibinfo  {journal} {arXiv e-prints}\ } (\bibinfo {year} {2022})},\ \Eprint {https://arxiv.org/abs/2208.11941} {arXiv:2208.11941} \BibitemShut {NoStop}%
\bibitem [{\citenamefont {Chen}\ \emph {et~al.}(2019)\citenamefont {Chen}, \citenamefont {Galanis}, \citenamefont {Goldberg}, \citenamefont {Perkins}, \citenamefont {Stewart},\ and\ \citenamefont {Vigoda}}]{chen2019fast}%
  \BibitemOpen
  \bibfield  {author} {\bibinfo {author} {\bibfnamefont {Z.}~\bibnamefont {Chen}}, \bibinfo {author} {\bibfnamefont {A.}~\bibnamefont {Galanis}}, \bibinfo {author} {\bibfnamefont {L.~A.}\ \bibnamefont {Goldberg}}, \bibinfo {author} {\bibfnamefont {W.}~\bibnamefont {Perkins}}, \bibinfo {author} {\bibfnamefont {J.}~\bibnamefont {Stewart}},\ and\ \bibinfo {author} {\bibfnamefont {E.}~\bibnamefont {Vigoda}},\ }in\ \href {https://doi.org/10.4230/LIPIcs.APPROX-RANDOM.2019.41} {\emph {\bibinfo {booktitle} {Approximation, Randomization, and Combinatorial Optimization. Algorithms and Techniques (APPROX/RANDOM 2019)}}}\ (\bibinfo {organization} {Schloss Dagstuhl-Leibniz-Zentrum fuer Informatik},\ \bibinfo {year} {2019})\ \Eprint {https://arxiv.org/abs/1901.06653} {arXiv:1901.06653} \BibitemShut {NoStop}%
\end{thebibliography}%

\end{document}